\setlist{nosep} \usepackage[normalem]{ulem}
\newtheorem{theorem}{Theorem}
\newtheorem{lemma}{Lemma}
\newtheorem{definition}{Definition}
\newcommand{\GHZ}{\textrm{GHZ}}
\newcommand{\sfrac}[2]{{}^{#1}\mathclap{/}_{#2}}
\newcommand{\Id}{\mathsf{I}}
\newcommand{\X}{\mathsf{X}}
\newcommand{\Y}{\mathsf{Y}}
\newcommand{\Z}{\mathsf{Z}}
\newcommand{\CZ}{\mathsf{CZ}}
\newcommand{\CNOT}{\mathsf{CNOT}}
\newcommand{\RZ}{\mathsf{R}_{\Z}}
\newcommand{\Ab}{\mathsf{Abort}}
\newcommand{\err}{\mathsf{Error}}
\title{Towards practical secure delegated quantum computing with semi-classical light}
\author{Boris Bourdoncle}
\affiliation{Quandela, 7 Rue Léonard de Vinci, 91300 Massy, France}
\author{Pierre-Emmanuel Emeriau}
\affiliation{Quandela, 7 Rue Léonard de Vinci, 91300 Massy, France}
\author{Paul Hilaire}
\affiliation{Quandela, 7 Rue Léonard de Vinci, 91300 Massy, France}
\author{Shane Mansfield}
\affiliation{Quandela, 7 Rue Léonard de Vinci, 91300 Massy, France}
\author{Luka Music}
\thanks{Corresponding author, luka.music@quandela.com}
\affiliation{Quandela, 7 Rue Léonard de Vinci, 91300 Massy, France}
\author{Stephen Wein}
\affiliation{Quandela, 7 Rue Léonard de Vinci, 91300 Massy, France}
\date{ }
\begin{document}

\begin{abstract}
Secure Delegated Quantum Computation (SDQC) protocols are a vital piece of the future quantum information processing global architecture since they allow end-users to perform their valuable computations on remote quantum servers without fear that a malicious quantum service provider or an eavesdropper might acquire some information about their data or algorithm. They also allow end-users to check that their computation has been performed as they have specified it.

However, existing protocols all have drawbacks that limit their usage in the real world. Most require the client to either operate  a single-qubit source or perform single-qubit measurements,
thus requiring them to still have some quantum technological capabilities albeit restricted, 
or require the server to perform operations which are hard to implement on real hardware (e.g isolate single photons from laser pulses and polarisation-preserving photon-number quantum non-demolition measurements).
Others remove the need for quantum communications entirely but this comes at a cost in terms of security guarantees and memory overhead on the server's side.

We present an SDQC protocol which drastically reduces the technological requirements of both the client and the server while providing information-theoretic composable security. More precisely, the client only manipulates an attenuated laser pulse, while the server only handles interacting quantum emitters with a structure capable of generating spin-photon entanglement. The quantum emitter acts as both a converter from coherent laser pulses to polarisation-encoded qubits and an entanglement generator. Such devices have recently been used to demonstrate the largest entangled photonic state to date, thus hinting at the readiness of our protocol for experimental implementations.
\end{abstract}
\maketitle

\section{Introduction}
\label{sec:intro}

\paragraph{Context.}
The number of potential use-cases for quantum computing steadily increases, whether in the noisy intermediate-scale quantum regime or using larger machines to perform fault-tolerant computations, with a diverse range of applications from pharmaceutics, material research, finance, to defence.

\begin{figure}[ht!]
    \centering
    \includegraphics[width=\columnwidth]{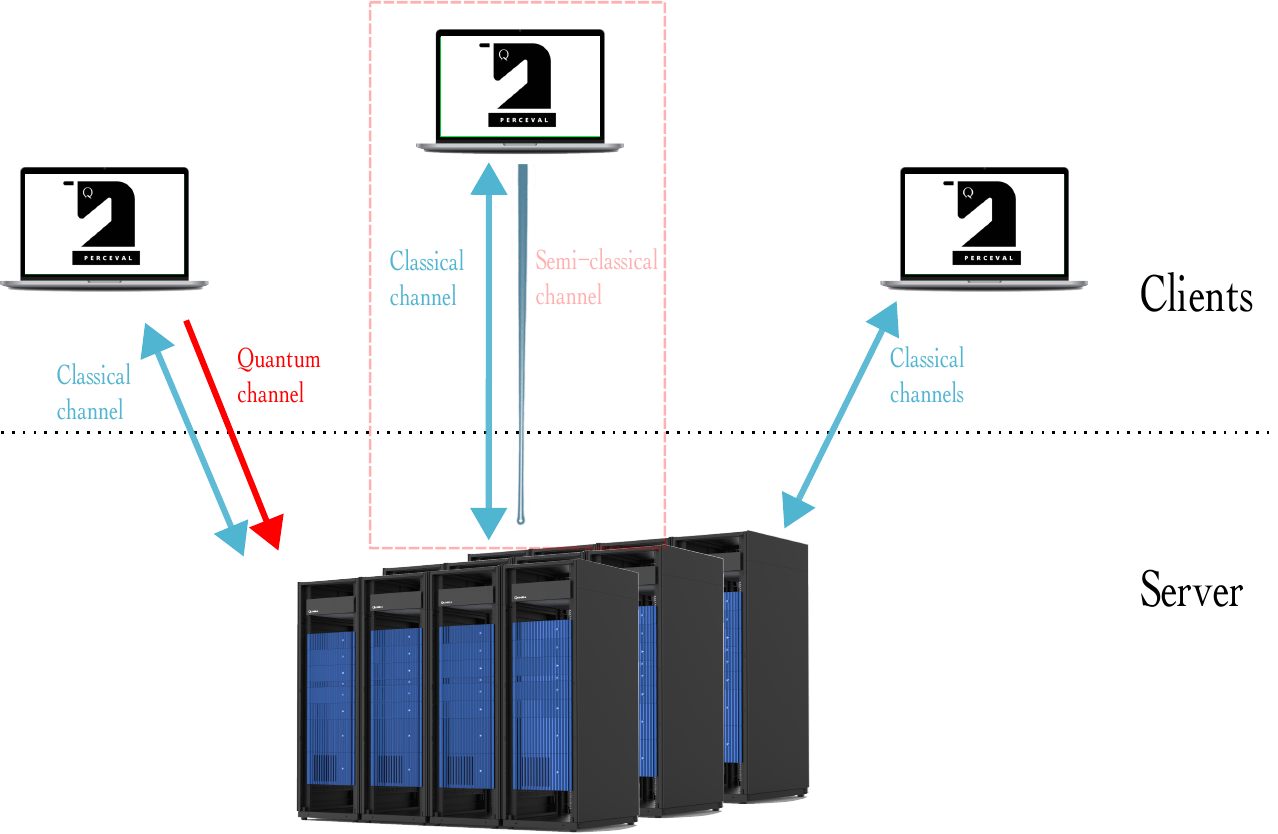}
    \caption{Typical client-server configuration where a limited client delegates a quantum computation to a quantum server. Clients might have a limited quantum capability e.g. the ability to send single qubits [left] or entirely classical channels [right]. In this work we focus on a hybrid proposal where clients have a semi-classical channels [centre] meaning that the client is able to send a coherent state.}
    \label{fig_client_server}
\end{figure}

However, most end-users will likely prefer not to own a quantum computer but still want to leverage its capabilities for their applications. The optimal strategy, therefore, is to use delegated  solutions where quantum computations are performed by a quantum service provider. This approach is actively being developed in both academic and industry settings.\footnote{Examples of public-accessible cloud platforms include:
Quandela \href{https://cloud.quandela.com}{https://cloud.quandela.com},
IBM \href{https://quantum-computing.ibm.com}{https://quantum-computing.ibm.com},
IonQ \href{https://ionq.com/quantum-cloud}{https://ionq.com/quantum-cloud}, 
Rigetti, \href{https://docs.rigetti.com/qcs/}{https://docs.rigetti.com/qcs/},
} 
This is commonly done via the combination of a code-based interface for describing the computation and classical communications between the client and the server.

Unfortunately, in this scenario there are no information-theoretic guarantees ruling out a malicious provider accessing all information about the client's desired computation, inputs, and outcomes.
Cases where such considerations could be critical include (i) the manipulation of highly sensitive input data, such as confidential medical records or defence-related information, (ii) the execution of proprietary (and confidential) algorithms, and (iii) cases where the output is valuable, e.g.~patent-worthy. 
Furthermore, the client cannot check in general that the service provider has performed its desired computation based only on the received output.
The more powerful quantum computers are, the more problematic these issues become, as they grow capable of solving more and more advanced problems.  All these considerations stiffen the adoption of quantum computers precisely at the time when they become useful for solving real-world problems. It is therefore vital to protect clients in these scenarii to foster wide-spread use of quantum computing technologies.

Thankfully, we can entirely remove the need for clients to trust their quantum service providers, with the help of quantum cryptography. Although quantum computers will be a threat to the security of some classical protocols (e.g. RSA)~\cite{Shor1999}, they also open up new possibilities via cryptographic protocols such as quantum key distribution~\cite{Bennett2020}. In our particular case, protocols for Secure Delegated Quantum Computing (SDQC) have been a highly active research field in the past few years and represent an extremely promising application~\cite{Childs2001, Arrighi2006, Broadbent2010, Fitzsimons2017, Gheorghiu2019}. These protocols allow a client with limited quantum resources to delegate a quantum computation to a server, without the server learning anything about the quantum algorithm, the input which is being processed or the outcome of the computation. During the protocol's execution, the client can furthermore test that the server is behaving as expected.

\paragraph{Related work.}
Whether SDQC protocols see wide-spread use will depend directly on how technically accessible they are from the point of view of both the client and the server. In this respect, there are two extremes. On one end of the spectrum lie protocols which require one-way qubit communications between the client and the server and rather involved quantum setups for the client -- either a source of qubits~\cite{Broadbent2010,Leichtle2021} or a way to measure those received from the server~\cite{Morimae2014,Greganti2016}, along with single qubit operations. These protocols are information-theoretically secure, meaning that their security relies on no assumptions beyond the laws of physics, and they leak no information whatsoever about the computation, input or output. They are also composable, i.e.~their security is not affected by the context of execution and they can be freely reused as black boxes in larger constructions (e.g. secure multi-party computations).
However, for many interesting use cases it may not be feasible to expect clients to have access to either a qubit source or measurement devices in the near term and so it is important to reduce the technological requirements on the client's side.

At the other end there are protocols~\cite{Cojocaru2019, Cojocaru2021, Gheorghiu2019, Mahadev2018} which only require the client to interact classically with the server instead of sending quantum states. However, they incur a significant overhead on the server's side, since it must be able to execute large cryptographically-secure functions to generate encrypted qubits of which only the client knows the secret key. Furthermore, these protocols only provide computational security, meaning that it relies on the quantum hardness of a computational problem. Finding an efficient algorithm for said problem would automatically break the protocol's security. It has further been proven that they cannot be composably secure without setup assumptions~\cite{Badertscher2020}.

The middle ground is represented by the protocol from~\cite{Dunjko2012}, whose efficiency in terms of communication was improved in~\cite{Zhao2017, Jiang2019}. The client here only manipulates and transmits semi-classical states, e.g. laser light for photonic applications, which drastically reduces the client's technological requirements. However, not only is it only blind and does not allow the client to verify the server's operation, there are three main challenges which limit its application. First of all, the server needs an efficient polarisation-preserving\footnote{Or qubit-preserving if other encoding are used.} photon-number Quantum Non-Demolition (QND) measurement, a very demanding operation in practice. Moreover, it must be able to extract from each pulse one single photon to use in the rest of the protocol. Finally, the server needs to implement deterministic two-qubit photonic gates, which are also extremely demanding. We also note that the security is information-theoretic but not proven in a composable security framework in the original paper.

\paragraph{Overview of results.}
Here, we propose a best-of-both-worlds protocol for SDQC using semi-classical communications that drastically reduces the technological requirements of \emph{both} the client and the server. In particular, the client does not need to manipulate qubits, only a phase-randomised attenuated laser source and active waveplates, while the server does not need photon-number QND measurements, nor deterministic photon-photon gates that were required in Ref.~\cite{Dunjko2012}. Furthermore, we prove that the security of our protocol is information-theoretic and composable in the Abstract Cryptography framework of \cite{Maurer2011}. Figure~\ref{fig_ubqc_proto} shows the various technological requirements of \cite{Broadbent2010, Dunjko2012} and our protocol.

\begin{figure}[htp]
    \centering
    \includegraphics[width=\columnwidth]{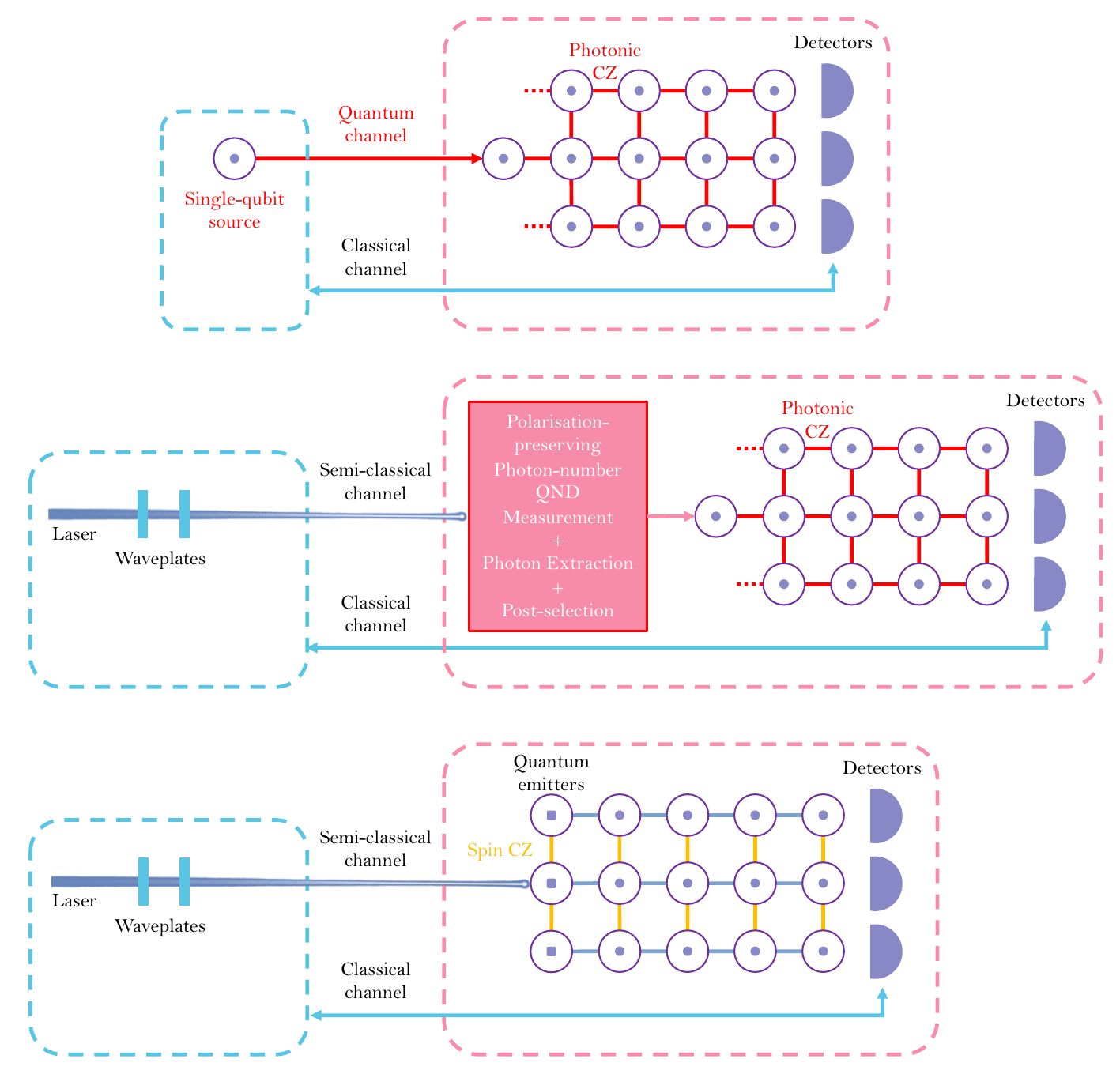}
    \caption{Hardware setup for performing three delegated computation protocols. The operations in the blue dashed boxes are performed by the client while the server handles those in the pink dashed boxes.
Red, yellow, and blue elements corresponds to operations that have a high, moderate, and low  technological requirements respectively. 
    In all three cases the client classically sends measurement instructions to the server and recovers the outcome. Top: UBQC protocol from Ref.~\cite{Broadbent2010}. Middle: BDQC protocol from Ref.~\cite{Dunjko2012}. Bottom: our SDQC protocol.}
    \label{fig_ubqc_proto}
\end{figure}\

Our protocol relies on Measurement-Based Quantum Computing (MBQC), a paradigm in which a large entangled state called graph state is first created and the computation is performed by adaptively measuring parts of the state in a predefined order, consuming the entanglement in order to carry out the computation. Such computations can be performed blindly by having the client prepare encrypted single qubits and send them to the server who then entangles them. The entanglement operation commutes with the encryption, thus yielding an encrypted graph state. The client instructs the server to measure qubits from the encrypted graph state in a way which both undoes the encryption and performs the computation of the client's choice. This is the basis of the original Universal Blind Quantum Computation (UBQC) protocol of~\cite{Broadbent2010}. The first challenge is therefore to emulate these single-qubit transmissions using only attenuated laser pulses. The second challenge is to reduce the number of entangling gates that need to be performed after the qubits have been generated.

Protocol~\ref{prot:ghz-gadget-inf} presents informally our technique for generating a single qubit with amplified privacy which we describe in more details below.

\begin{protocol}[htp]
\caption{GHZ privacy amplification for rotated states from weak coherent pulses (informal)}
\label{prot:ghz-gadget-inf}

\begin{enumerate}
\item The server initialises its quantum emitter in the state $\ket{+}$.
\item The client sends $n$ randomly-polarised weak coherent pulses, which the server uses to excite its quantum emitter. The energy level structure of the quantum emitter induces as output a polarisation-encoded GHZ state that is $\Z$-rotated by an angle corresponding to the sum of all random angles chosen by the client.
\item 	
\begin{enumerate}
\item The server emits one photon using a non-rotated coherent pulse and attempts to measure the first $n$ photonic qubits (generated by the client's pulses) in the $\ket{\pm}$ basis.
\item If the number of unsuccessful measurements (i.e.\ no photon was detected) is too high, the client aborts. Otherwise the server applies a correction to the last remaining photon and the quantum emitter which depends on the client's choice of angles and the measurement results. These corrections are specified in Protocol~\ref{prot:ghz-gadget}.
\end{enumerate}
\end{enumerate}

\end{protocol}

As expressed above, the client will send a sequence of phase-randomised attenuated laser pulses, with the polarisation of each pulse being chosen at random. This polarisation angle corresponds to the secret key which will later be used to encrypt the computation. The server receives these pulses and uses them to excite a quantum emitter with a qubit degree of freedom, such as a spin $\sfrac12$, that can emit spin-entangled photons.
The quantum emitter will have two roles. It will first act as a ``laser-to-qubit converter'' by converting a coherent pulse into a single photonic qubit. This role is somewhat similar to the Remote Blind Qubit State Preparation that was proposed in Ref.~\cite{Dunjko2012}, but we execute it without having to perform a qubit-preserving photon-number QND measurement. Moreover, it will act as a ``photon entangler'' as the emitted photonic qubit will be entangled with the spin. The server exploits this property to produce an entangled state of photonic qubits, removing the need of performing demanding photon-photon entangling gates to produce the encrypted graph state.
More precisely, it is possible to produce photonic linear cluster states with a single quantum emitter, with each photon being entangled in polarisation to the next emitted photon in the chain and the last one being also entangled to the spin \cite{Schon2005,Lindner2009,Schwartz2016, Coste2023, Cogan2023, Thomas2022}. Crucially, the state of each photonic qubit thus created inherits the polarisation angle from the pulse that was used to generate it. These chains of encrypted photonic polarisation-encoded qubits can then be combined to form the graph state supporting the client's desired computation. Only the links between chains need to be done after the qubits have been emitted, which drastically reduces the amount of entanglement operations. These can furthermore be done via interacting quantum emitters to remove entirely the need for photon-photon entangling gates, by performing the entanglement operations between quantum emitters in between photon emissions.

However, as in~\cite{Dunjko2012}, an attenuated laser pulse necessarily leaks more information about the client's secret key compared to a single qubit. This is due to the fact that the state of the laser pulse can be described as a redundant encoding of the single qubit state containing the secret key. Taking a simplifying assumption, we can overestimate the information leakage by considering that any multi-photon component in the input laser completely leaks the polarisation information.  We therefore need to suppress this information leakage in a way which meshes nicely together with the qubit generation technique described above. Fortunately, it is possible to easily switch between generating chains of qubits and GHZ states simply by selectively applying a Hadamard operation on the quantum spin. The full protocol then generates chains of encrypted GHZ states instead of chains of single qubits. Each qubit in the GHZ states is generated with a different random polarisation angle. The result is a chain of encrypted GHZ states whose encryption key is the sum of the polarisation angles. Each GHZ state can easily be collapsed into a single qubit while preserving the encryption key via measurements on all qubits but one in the GHZ state in the $\ket{\pm}$ basis, thus recovering the chain of single qubits used for the rest of the blind protocol. 
Intuitively, the server can learn the final encryption key of a given qubit only if all the angles used to generate the corresponding GHZ state have leaked. Therefore the final leakage probability is suppressed exponentially in the number of qubits per GHZ state.

This protocol exploits low-intensity lasers $\ket{\alpha}$ which generally have a non-negligible vacuum component, $|\bra{0}\ket{\alpha}|^2$. In that case, even with an honest server and without other sources of photon loss before or after emission, the quantum emitter sometimes cannot emit a photonic qubit. Our protocol accounts for this scenario by introducing a threshold on the number of detected photons during the aforementioned collapse of the GHZ states.
If the losses reported by the server after emission are too high, the client simply aborts.

The end result of the protocol in this simplified example is the state $\frac{1}{\sqrt{2}}(\ket{00} + e^{i\theta}\ket{11})$ for a random angle $\theta$ known only to the client. The first qubit corresponds to the quantum emitter's qubit degree of freedom, while the second is the state of the remaining polarisation-encoded photon. This gives a lot of flexibility in how the server can use this protocol to construct a larger graph state since the emitter and photon are still entangled. The server can easily extend this into a linear cluster state by repeating the protocol, or continue growing this rotated GHZ state so that it has multiple photons available for performing graph fusion operations.

We show that both the protocol's security error and correctness error decrease exponentially with the number of pulses sent by the client (stated formally in Th.~\ref{thm:sec-gadget} and proven in App.~\ref{app:sec-proof}).  

\begin{theorem}[Blind state preparation from weak coherent pulses, informal]
The probability of obtaining an incorrect state and the probability that information about the angle $\theta$ leaks to the server both exponentially decrease in the number of pulses $n$, as proven in a composable security framework.

\end{theorem}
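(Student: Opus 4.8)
The plan is to work in the Abstract Cryptography framework of~\cite{Maurer2011} and prove the statement as the combination of a correctness bound for the honest server and a blindness (simulatability) bound for a dishonest server, both with error $e^{-\Omega(n)}$. First I would fix the ideal resource $\mathcal{S}$ that the gadget is meant to realise: it samples $\theta$ uniformly from the admissible angle set (a finite cyclic group, e.g. the multiples of $\pi/4$), delivers $\theta$ to the client, outputs the two-qubit state $\tfrac{1}{\sqrt 2}(\ket{00}+e^{i\theta}\ket{11})$ at the server's interface — the emitter qubit together with the surviving photon, so the resource still plugs into the larger graph-state construction — and exposes a filtered $\Ab$ input through which a dishonest server can force the client's output to $\Ab$. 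The theorem then becomes: (i) with an honest server, $\pi$ equals $\mathcal{S}$ up to trace distance $e^{-\Omega(n)}$; (ii) there is a simulator $\sigma$ such that $\pi \approx_{\varepsilon}\sigma\mathcal{S}$ with $\varepsilon=e^{-\Omega(n)}$ against any dishonest server; the composition theorem of the framework then upgrades this to composable security.

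For correctness, the only departure from $\mathcal{S}$ is the abort triggered by the vacuum component of the weak coherent pulses together with the honest channel and detector loss. During the $\ket{\pm}$-measurement stage the number of undetected pulses is a sum of $n$ i.i.d.\ Bernoulli variables of mean $p_{\mathrm{loss}} = 1-\eta(1-e^{-\mu})$ (with $\mu=|\alpha|^2$ the mean photon number and $\eta$ the honest transmissivity); fixing the abort threshold strictly above $p_{\mathrm{loss}}$, a Chernoff/Hoeffding bound gives false-abort probability $e^{-\Omega(n)}$. Conditioned on not aborting, the byproduct-operator bookkeeping is exact: measuring qubits $1,\dots,n$ of the $\Z$-rotated GHZ state in the $\ket{\pm}$ basis with outcomes $b_i$ leaves precisely $\tfrac1{\sqrt2}(\ket{00}+e^{i(\theta+\pi\sum_i b_i)}\ket{11})$, and the prescribed $\Z^{\sum_i b_i}$ correction maps it to $\tfrac1{\sqrt2}(\ket{00}+e^{i\theta}\ket{11})$ with $\theta=\sum_i\phi_i$; so the honest real and ideal executions differ only on the abort event, of weight $e^{-\Omega(n)}$.

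For blindness, the simulator $\sigma$ samples its own dummy angles $\tilde\phi_1,\dots,\tilde\phi_n$, sends the corresponding phase-randomised attenuated pulses to the dishonest server, replicates the client's \emph{public} abort test from the server's reported loss pattern (pressing $\mathcal{S}$'s $\Ab$ button when it fires), and otherwise forwards the correction instructions, which depend only on the $b_i$ the server itself reported and therefore leak nothing. The crux is to bound the distinguishing advantage, which reduces to showing that in the real protocol the key $\theta=\sum_i\phi_i$ is $e^{-\Omega(n)}$-close to uniform \emph{and independent of} the server's total view $\bigotimes_i\rho(\phi_i)$. Here I would use that phase randomisation removes all photon-number coherences, so $\rho(\phi_i)=\sum_k e^{-\mu}\tfrac{\mu^k}{k!}\,\ketbra{k_{\phi_i}}{k_{\phi_i}}$ with $\ket{k_{\phi_i}}$ the $k$-photon component at polarisation $\phi_i$; conditioning on the photon numbers, the vacuum component is independent of $\phi_i$, the single-photon component is exactly a UBQC-type state $\ket{+_{\phi_i}}$, and (using the paper's pessimistic over-estimate) any $\ge 2$-photon component is treated as fully revealing $\phi_i$. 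A quantum Fourier / union argument over the angle group then makes the character norms multiply over pulses — the vacuum contributing a factor $0$, each single-photon pulse a constant strictly below $1$, each multi-photon pulse a factor $\le 1$ — so the bias of $\theta$ is at most $\big(\tfrac12 p_1+p_{\ge 2}\big)^n$ in terms of the Poisson weights, hence $e^{-\Omega(n)}$; in the crude over-estimate this is simply ``$\theta$ is perfectly hidden unless every one of the $n$ pulses is multi-photon'', probability $p_{\ge 2}^{\,n}$. Combining (i) and (ii) gives $\pi\approx_{e^{-\Omega(n)}}\mathcal{S}$.

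The step I expect to be the main obstacle is the blindness bound: turning the informal ``any multi-photon component leaks the polarisation'' intuition into a rigorous distinguishing-advantage estimate that simultaneously (a) correctly handles the single-photon component, which does carry partial information about $\phi_i$ and is only rendered harmless because the downstream protocol one-time-pads with $\theta$ exactly as in UBQC, (b) is phrased at the level of the full Abstract Cryptography simulator, so that the abort interface, the reported outcomes and the $\Z$-corrections are all reproduced without leaking $\theta$, and (c) composes cleanly given that the emitter qubit remains entangled with the output photon. A secondary technicality is calibrating the abort threshold so that a single Chernoff-type estimate controls both the honest false-abort probability and the slack a malicious server could otherwise exploit by under-reporting losses.
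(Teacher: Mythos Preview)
Your correctness half is essentially the paper's: a Hoeffding bound on the number of successful emissions controls the honest abort probability, and the GHZ bookkeeping is exact conditioned on no abort.

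The security half has a real gap in the simulator. You say the simulator ``forwards the correction instructions, which depend only on the $b_i$ the server itself reported'', but in Protocol~\ref{prot:ghz-gadget} the client sends the angle $\bar\theta = (-1)^{m_x}\theta - \sum_{i\in S}\theta_i$, which depends on $\theta$ and on the masks $\theta_i$ for precisely the set $S$ that the server reports \emph{adaptively}. A simulator that commits to dummy polarisations $\tilde\phi_i$ before seeing $S$ cannot manufacture a $\bar\theta$ consistent with both the ideal rotated output and the pulses it has already handed over; your Fourier/bias argument also ignores that $\bar\theta$ is part of the server's view and by itself reveals $\theta$ once the $\theta_i$ for $i\in S$ are known. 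The paper's simulator solves this with a delayed-teleportation trick: whenever the sampled photon number is $1$ it sends half of a fresh EPR pair instead of a $\ket{+_{\theta_i}}$ (both are maximally mixed, hence indistinguishable at that stage), and only after $S$ arrives does it teleport the ideal $\ket{+_\theta}$ into one single-photon slot $s\in S$; the teleportation byproducts then determine $\bar\theta$ without knowledge of $\theta$, and the simulated transcript equals the real one \emph{exactly} on the good event. This also pins down the correct bad event: the simulation fails only if the server can populate a set $S$ of size $>t$ entirely with multi-photon slots, i.e.\ if more than $t$ of the $n$ pulses carry $\ge 2$ photons, a binomial tail bounded by $\exp\!\big(-2(t/n-p_{\alpha,2})^2 n\big)$ and balanced against correctness by taking $t/n=(\eta_1+p_{\alpha,2})/2$. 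Your ``crude over-estimate'' $p_{\ge 2}^{\,n}$ is right for the post-selected variant but far too optimistic for the threshold protocol, and the threshold calibration you flag as a secondary technicality is in fact the heart of the security bound.
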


When our protocol is used to replace the source of qubits required to perform the UBQC protocol from~\cite{Broadbent2010}, it yields the following result.

\begin{theorem}[Blind Delegated Quantum Computation (BDQC) from weak coherent pulses, informal]

There exists an efficient composably secure BDQC protocol for classical inputs with an exponentially-low security error in which the client only sends weak coherent pulses and the server only manipulates interacting quantum emitters.

\end{theorem}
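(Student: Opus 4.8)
The plan is to obtain the claimed protocol by \emph{composing} the blind state preparation gadget of Theorem~\ref{thm:sec-gadget} with the Universal Blind Quantum Computation protocol of~\cite{Broadbent2009}, and to control the overall error via the composition theorem of Abstract Cryptography~\cite{Maurer2011}. The starting point is the (by now standard) observation that UBQC, phrased in the Abstract Cryptography language, securely constructs the ideal \emph{blind delegated computation} resource --- which on input the client's MBQC pattern returns the output to the client while leaking to the server at most the graph/flow together with an abort flag --- out of two resources: an authenticated classical channel, and an ideal \emph{remote blind single-qubit preparation} resource that, on the client's choice of angle $\theta_i \in \{k\pi/4\}_{k=0}^{7}$, places the state $\ket{+_{\theta_i}}$ in the server's register. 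The whole game is then to show that Protocol~\ref{prot:ghz-gadget} realises the sequential copies of this second resource needed to assemble the computation.

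First I would argue that a single run of Protocol~\ref{prot:ghz-gadget} constructs, up to the exponentially small distance $\varepsilon(n)$ of Theorem~\ref{thm:sec-gadget}, one instance of the remote blind single-qubit preparation resource \emph{with abort}: the honest output $\tfrac{1}{\sqrt{2}}(\ket{00}+e^{i\theta}\ket{11})$ on the emitter--photon pair is, after measuring the emitter in the $\ket{\pm}$ basis --- or, equivalently, after it is consumed in a later fusion --- exactly an encrypted qubit $\ket{+_{\theta'}}$ with $\theta'$ known only to the client up to a public Pauli byproduct; blindness of $\theta'$ and the exponentially small probability of an incorrect output are precisely what Theorem~\ref{thm:sec-gadget} provides, and the abort branch is matched by equipping the ideal resource with the same abort interface (the server always learns whether the loss threshold was exceeded). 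Then, invoking the composition theorem, I would substitute the real gadget for each of the $N$ ideal qubit preparations used by UBQC for a size-$N$ computation; by the triangle inequality the composed protocol constructs the ideal BDQC resource within distance $N\,\varepsilon(n)$, which stays exponentially small provided $n = \Omega(\lambda + \log N)$ for security parameter $\lambda$. The correctness error of the full protocol is bounded by the probability that an honest run triggers the loss threshold somewhere, which again decreases exponentially in $n$ by Theorem~\ref{thm:sec-gadget}.

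For the efficiency and hardware claims I would check that each gadget run costs $O(n)$ pulses and $O(n)$ emitter operations with polynomial classical post-processing, so that the end-to-end protocol uses $O(Nn)$ pulses and is polynomial-time; and that the server-side resource state required by UBQC (the brickwork state, or more generally the cluster state underlying the client's computation) can be grown entirely from emitter-generated photonic chains --- one per row --- glued together by $\CZ$ gates applied between interacting quantum emitters in between photon emissions, using only the entanglement between each emitter and its last emitted photon, so that no photon--photon gate and no photon-number QND measurement is ever required. This is where I would invoke the sequential cluster-state-generation constructions of~\cite{Schon2005,Lindner2009} and make the graph decomposition explicit for the relevant family of graphs.

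I expect the main obstacle to be the composable bookkeeping of the \emph{many sequential and partly adaptive} invocations of the gadget: one must verify that the simulator guaranteed by Theorem~\ref{thm:sec-gadget} still works when the gadget is run inside UBQC (no rewinding, no dependence on data revealed only in later rounds), that the per-round abort leakage is always already permitted by the ideal BDQC resource so that aborting the whole computation never retroactively compromises blindness, and that the honest-run abort probability stays below the client's chosen threshold. Everything else is a routine application of the Abstract Cryptography composition machinery together with the known security of UBQC.
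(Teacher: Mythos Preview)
Your high-level strategy --- compose the privacy-amplification gadget with UBQC via the Abstract Cryptography composition theorem, accumulating $N\cdot\varepsilon(n)$ error over $N$ qubits --- is exactly what the paper does, and your final bound matches Theorem~\ref{thm:semi-classical-bdqc}. The identification of the ``composable bookkeeping of the many sequential and partly adaptive invocations'' as the main obstacle is also spot on.

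Where your decomposition differs is in the intermediate abstraction. You phrase the gadget as realising a \emph{single-qubit} remote state preparation resource (producing $\ket{+_{\theta'}}$ after the emitter is measured or fused away), and then let UBQC consume $N$ independent copies. But the gadget does not actually output an isolated qubit: its input is the current emitter state and its output is the entangled emitter--photon pair, with the emitter then fed into the \emph{next} call. The paper resolves precisely this threading issue by introducing an intermediate ideal functionality, the Blind Graph State \emph{Extender} Resource~\ref{res:graph-extend}, whose interface explicitly takes the emitter qubit from Bob and returns the rotated two-qubit state. Theorem~\ref{thm:sec-gadget} shows the gadget constructs one Extender instance, and a separate protocol (Protocol~\ref{prot:graph-rsp}, Theorem~\ref{thm:sec-graph-rsp}) shows that $\abs{V}$ Extender instances \emph{perfectly} construct the full Blind Graph RSP Resource~\ref{res:rsp}, on which UBQC is already known to be perfectly secure (Theorem~\ref{thm:sec-ubqc}). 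The composition is therefore three-layered rather than two-layered, and the middle layer is what absorbs the sequential/adaptive bookkeeping you flag as the obstacle --- in particular it handles the $b_v$ bit and the associated $\X^{b_v}$/$\Z^{b_v}$ corrections that propagate between consecutive gadget calls on the same emitter. Your single-qubit-RSP abstraction would force you to re-entangle isolated photons after the fact, which is exactly the photon--photon gate you want to avoid; the Extender abstraction sidesteps this cleanly.
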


We then use this to build a protocol for SDQC. In addition to BDQC's blindness, SDQC also guarantees that the server performs the client's computation as instructed. This means that any deviation from the supplied instructions which could corrupt the outcome of the computation should be detected with high probability. The tests which the client uses to detect these deviations up until recently all required the client to produce not only states in the $\X - \Y$ plane as in the UBQC protocol, but also computational basis states. This second requirement has been lifted recently in~\cite{Kapourniotis2023}, which constructs an SDQC protocol for classical input and output $\mathsf{BQP}$ computations from the same set of states as the one whose security we amplify here. Combining the result above with their protocol yields the following theorem.

\begin{theorem}[Secure Delegated Quantum Computation from weak coherent pulses, informal]

There exists an efficient composably secure SDQC protocol for delegating classical input and output $\mathsf{BQP}$ computations with an exponentially-low security error in which the client only sends weak coherent pulses and the server only manipulates interacting quantum emitters.

\end{theorem}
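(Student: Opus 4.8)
The plan is to derive the SDQC protocol by composing, inside the Abstract Cryptography framework of \cite{Maurer2011}, two ingredients that are already at hand: the remote blind state-preparation gadget of Protocol~\ref{prot:ghz-gadget}, whose security and correctness are quantified in Th.~\ref{thm:sec-gadget}, and the recent verification protocol of \cite{Kapourniotis2023}, which realises a composably secure SDQC resource for $\mathsf{BQP}$ from a client that can only prepare and transmit single qubits of the form $\ket{+_\theta}=\tfrac{1}{\sqrt 2}(\ket{0}+e^{i\theta}\ket{1})$. First I would phrase Th.~\ref{thm:sec-gadget} as the statement that Protocol~\ref{prot:ghz-gadget}, using only weak coherent pulses on the client side and interacting quantum emitters on the server side, constructs an ideal resource $\mathcal{R}_{\mathrm{prep}}$ which hands the server a qubit $\ket{+_\theta}$ (still entangled with the emitter, so that it can be wired into a graph state) while (i) leaking the angle $\theta$ only with probability $\varepsilon_{\mathrm{prep}}(n)$ exponentially small in the number of pulses $n$, and (ii) exposing an abort flag that a possibly malicious server may trigger only through the photon-loss pattern it reports. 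Second, I would observe that the protocol of \cite{Kapourniotis2023} already lives in the hybrid model where the client has oracle access to precisely such a preparation resource, so that each single-qubit transmission is replaced by one independent invocation of Protocol~\ref{prot:ghz-gadget}.

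Given these two steps, the composition theorem of the framework gives the result almost mechanically. If $\pi_{\mathrm{prep}}$ constructs $\mathcal{R}_{\mathrm{prep}}$ within distance $\varepsilon_{\mathrm{prep}}(n)$ and the verification protocol constructs the ideal SDQC resource within $\varepsilon_{\mathrm{ver}}$ when given $\mathcal{R}_{\mathrm{prep}}$, then the composed protocol constructs the ideal SDQC resource within $N\,\varepsilon_{\mathrm{prep}}(n)+\varepsilon_{\mathrm{ver}}$, where $N=\mathrm{poly}(|C|)$ is the number of qubits prepared for a computation $C$ and the factor $N$ comes from the $N$-fold composition (a union bound over the independent gadget instances). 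Choosing $n=\Theta(\log(N/\varepsilon))$ forces the first term below any target $\varepsilon$ while keeping the per-qubit pulse budget logarithmic, so the total communication and the parties' runtimes stay polynomial; together with the fact that $\varepsilon_{\mathrm{ver}}$ is itself exponentially small in the number of test rounds used by \cite{Kapourniotis2023}, this yields an efficient, composably secure SDQC protocol with exponentially small security error in which the client only emits phase-randomised attenuated pulses and applies waveplates, and the server only manipulates interacting quantum emitters.

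A few points still need care. For correctness I would check that, with an honest server and a laser intensity $|\alpha|^2$ bounded away from zero, the loss threshold built into Protocol~\ref{prot:ghz-gadget} is exceeded only with probability exponentially small in $n$ (a Chernoff bound on the number of GHZ branches that fail to emit a photon), and that conditioned on not aborting the output is \emph{exactly} the required $\ket{+_\theta}$, so that the remainder of the correctness analysis carries over unchanged from \cite{Kapourniotis2023} after another union bound over the $N$ preparations. For security the simulator is simply the composition of the simulator of Th.~\ref{thm:sec-gadget} with that of \cite{Kapourniotis2023}, whose soundness is exactly what the composition theorem delivers; the delicate point is interface matching, namely verifying that the only side information $\mathcal{R}_{\mathrm{prep}}$ releases to a malicious server --- which pulses were lost, together with the correlated abort flag, but never the polarisation angles --- is precisely the side information that the verification protocol's simulator already absorbs, and that handing the server a spin-photon-entangled qubit (rather than a bare qubit), together with the emitter-emitter gates used to fuse chains into the required graph state, does not enlarge its effective attack surface beyond the adaptive deviations already covered by the ideal SDQC resource. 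I expect this reconciliation of the coherent-pulse loss/abort model of our gadget with the clean $\X - \Y$ plane preparation interface assumed in \cite{Kapourniotis2023} to be the main obstacle; once the interfaces are aligned, the probabilistic estimates and the appeal to composition are routine.
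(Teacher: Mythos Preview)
Your proposal is correct and follows the same high-level route as the paper: invoke Theorem~\ref{thm:sec-gadget} for the gadget, invoke the $\X$--$\Y$-plane SDQC protocol of \cite{Kapourniotis2023} (Theorem~\ref{thm:sec-sdqc}), and glue them together via the AC composition theorem (Theorem~\ref{thm:compos}) to obtain the bound $N\abs{V}\cdot\exp(-\nu_\alpha n)+\epsilon_S$ of Theorem~\ref{thm:semi-classical-sdqc}.

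The one structural difference worth noting concerns precisely the point you flag as ``the main obstacle''. You plan to plug Protocol~\ref{prot:ghz-gadget} directly into the verification protocol and then argue by hand that the spin--photon entanglement, the emitter--emitter $\CZ$ gates, and the loss/abort side-channel do not enlarge the adversary's interface beyond what the simulator of \cite{Kapourniotis2023} already absorbs. The paper instead factors this step through an explicit intermediate layer: it defines the Blind Graph State Extender Resource~\ref{res:graph-extend} (which is what Protocol~\ref{prot:ghz-gadget} actually constructs, including the extra bit $b$ and the still-entangled emitter), and then proves separately (Protocol~\ref{prot:graph-rsp}, Theorem~\ref{thm:sec-graph-rsp}) that $\abs{V}$ calls to this extender resource \emph{perfectly} construct the standard Blind Graph RSP Resource~\ref{res:rsp} that \cite{Kapourniotis2023} consumes. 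This buys a clean interface match with zero additional error and dissolves your worry about the server's ``enlarged attack surface'' into a short, self-contained simulator argument (Simulator~\ref{sim:mal-bob-graph}); what you anticipate doing ad hoc, the paper packages as a reusable lemma. Conversely, your direct approach would work too---it just leaves the interface reconciliation as an inline argument rather than a modular theorem.
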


Finally, we study the experimental feasibility of our protocol by computing the trade-off between correctness and security in the presence of losses after emission. More precisely, we analyse the emission probability of a two-level quantum emitter with temporal laser filtering. We find that the single-photon emission efficiency can reach $0.71$ for a laser pulse containing an average of $2.5$ photons, which would be to amplify the correctness and security of our protocol. We also describe how other energy level structures of the quantum emitter could boost this, allowing us to work with even lower laser powers, thus improving our bounds. Recent experimental developments in that regard show extremely promising results~\cite{Schwartz2016, Cogan2023, Yang2022, Thomas2022, Coste2023} and hold the current record for the largest entangled photonic state produced~\cite{Thomas2022}. Our scheme is thus not only secure but also experiment-ready.

\paragraph{Comparison with Dunjko et al.~\cite{Dunjko2012}.}
The protocol from~\cite{Dunjko2012} works by having the server non-destructively measure the client's coherent light pulses in photon number, and post-select only the light pulses containing at least one photon. Then, the server must isolate one photon from each pulse. It applies a privacy amplification step which requires as many $\CZ$ gates as there are photons. After that, the server also has to entangle the privacy-enhanced photons into the graph state supporting the client's computation by using photon-photon $\CZ$ gates. All four of these steps are challenging technological requirements.

The Remote State Preparation (RSP) in Ref.~\cite{Dunjko2012} -- performing a polarisation-preserving QND measurement to count the photons in each pulse and extracting a single qubit from pulses containing at least one photon -- is separated from the privacy amplification and the graph state generation. This first step has been implemented by~\cite{Jiang2019}, requiring a number of pulses from the client which is around $10^8$ per qubit in the graph in the best case scenario where there is virtually no spatial separation between the client and server. They solve the issue of performing a QND measurement and qubit extraction -- which previously required a much more complex process~\cite{Guerlin2007, Reiserer2013} -- by using a linear optical setup performing teleportation with EPR pairs. Although this is rather fast -- their experiment generates $10^{11}$ leaky qubits via this RSP technique in two hours -- they still need to entangle these polarisation-encoded base qubits using photonic gates to perform the privacy amplification, which is not done in~\cite{Jiang2019}. Since they claim to want to use these $10^{11}$ leaky qubits to generate $10^3$ leak-free qubits, this implies a multiplicative overhead of $10^8$ in terms of entangling photonic gates compared to the client's base computation. The only viable option in this setting is to use deterministic non-post-selected photon-photon entangling gates since anything else will degrade far too rapidly to be useful. Unfortunately, the fidelities of such gates which have been demonstrated~\cite{Sun2016, Hacker2016, Reuer2022} are still insufficient to apply such a number of operations while preserving the integrity of the quantum state, even with error-correction techniques.

On the other hand, our RSP technique using a quantum emitter is intrinsically intertwined with the privacy amplification and graph state generation steps: the entanglement required for these two processes is directly generated by the structure of the quantum emitter and no additional operations are required. The result is that the final qubits are positioned in the graph at the same time as they are produced with a polarisation that is unknown to the server. Our full protocol does not require any non-destructive photon measurements nor deterministic photon-photon gates. Instead, it requires quantum emitters with a spin degree of freedom, with a level structure that enables the emission of spin-entangled photons that are also efficient light sources. Single atoms and artificial atoms such as quantum dots or defect color centres inserted in optical microcavities or waveguides are natural choices for such spin-photon interfaces.

Our scheme can naturally generate linear cluster states from a single quantum emitter. In order to perform universal computations, multiple states of this kind must be entangled together. This can be done in one of two ways.

If multiple spins generate linear cluster states in parallel and these spins can interact with each other, then applying nearest-neighbour spin-spin $\CZ$ gates when the clusters should be linked is sufficient to obtain universality~\cite{Thomas2024}. This is rather demanding since it requires either independently controlling chains of single atoms or artificial atoms such as quantum dot molecules, or the availability of a spin register such as a \textsuperscript{13}C nuclear spin for color centres in diamond.

Otherwise, it is possible to implement these operations with linear optics, via fusion operations and delays. The server can leverage the flexibility in our RSP technique to generate additional photons in each GHZ state and use these photons to implement fusion gates between different linear clusters as in~\cite{Hilaire2022}, at the cost of an increased sensitivity to photon loss after emission of the probability of generating the target blind graph.

\paragraph{Comparison with Takeuchi et al.~\cite{Takeuchi2024}.}
The recent paper~\cite{Takeuchi2024} also proposed a protocol for information-theoretically secure SDQC using only semi-classical light. There are however significant differences both in terms of results and techniques. Their protocol is based on the SDQC protocol from~\cite{Morimae2020} and requires the client to send states sampled from the same set as in our construction. The privacy amplification technique is similar to ours and that of Dunjko et al. However, their construction requires the same complex operations as Dunjko et al as well: QND measurements and photon-photon entangling gates. As such they do not resolve the issues stemming from these operations and our protocol is therefore much more suitable for implementation. Furthermore their security is not proven in a composable framework such as the one used in the present paper, meaning that their security guarantees do not automatically hold if an outcome of their protocol is later reused in another cryptographic construction.

\paragraph{Organisation of the paper.}
We start by giving preliminaries on
\begin{enumerate*}[label=(\roman*)]
    \item the generation of photonic graph states, and \item delegated quantum computing protocols.
\end{enumerate*}
In Section \ref{sec:rsp-semi-cl}, we present our protocol for securely preparing encrypted graph states from semi-classical light and its application in blind and secure delegated protocols. Later in Section \ref{sec:loss-analysis} we analyse the success probability of our protocol on realistic hardware, taking into account multiple effects leading to photon loss after emission. Finally, we discuss the merits of and possible improvements to our protocol in Section \ref{sec:dicuss}.

\section{Preliminaries}
\label{sec:prelim}

Section \ref{subsec:graph-gen} presents how to generate graph state from quantum emitters and weak coherent pulses. Then Section \ref{subsec:ubqc} describes the UBQC protocol that allows a client to blindly delegate quantum computations to a server. Additional preliminaries can be found in Appendix \ref{app:prelim}, in particular Appendix \ref{app:ac} presents the composable Abstract Cryptography security framework from \cite{Maurer2011} in which we prove the security of our protocols. From now on, we call the client ``Alice" and the server ``Bob". 
We denote $\abs{S}$ the cardinality of set $S$.

\subsection{Graph state generation using quantum emitters}
\label{subsec:graph-gen}

We describe here how to generate a cluster state from coherent pulses and quantum emitters using the technique from~\cite{Russo2019}.

A graph state is a quantum state defined by a graph $G = (V, E)$ such that all vertices $v \in V$ are associated to qubits initialised in the $\ket{+}$ state and an edge $e = (v, w) \in E$ between two vertices corresponds to a $\CZ$ operation between the two qubits associated to vertices $v$ and $w$. Formally the graph state is thus given by:
\begin{equation}
    \ket G = \prod_{(v,w) \in E} \CZ_{v,w} \bigotimes_{v \in V} \ket{+}_v.
\end{equation} 
A cluster state is a graph state whose graph is a regular lattice. In Figure \ref{fig:cluster-state}, we for example display a 2D-cluster state.

\begin{figure}[ht]
\centering
\includegraphics[height=5.5cm]{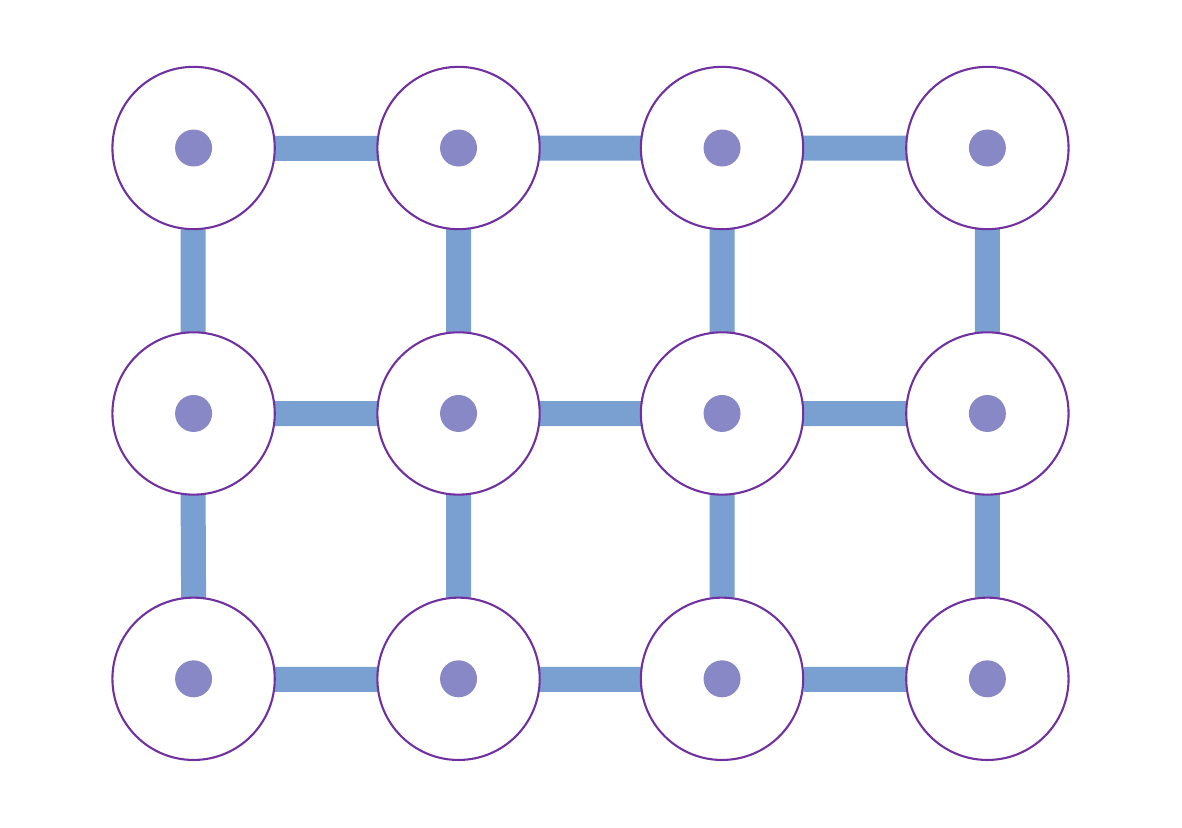}
\caption{A cluster state of $n = 3$ rows and $m = 4$ columns. Each circle represents a qubit in the $\ket{+}$ state and each edge corresponds to the application of a $\CZ$ operator between adjacent qubits.}
\label{fig:cluster-state}
\end{figure}

Bob can generate these states by using coherent pulses to repeatedly excite quantum emitters with a spin degree of freedom\footnote{Any other qubit degree of freedom would work.} and spin-selective transitions which can emit spin-entangled photonic qubits.

More precisely, we denote $\ket{\downarrow}$ and $\ket{\uparrow}$ the computational basis states of the spin qubit. After excitation, we assume that the quantum emitter will spontaneously emit a photon whose polarisation is correlated to the spin state as follows: $\ket{\uparrow} \xrightarrow[]{E_{\rm qe}} \ket{\uparrow, R}$ and $\ket{\downarrow} \xrightarrow[]{E_{\rm qe}}  \ket{\downarrow, L}$. Overall, an excitation with a weak coherent pulse followed by radiative decay yields the following emission operator for our quantum emitter:
\begin{equation}
    E_{\rm qe} = \ket{\downarrow, L}\bra{\downarrow} + \ket{\uparrow, R} \bra{\uparrow},
    \label{eq:quantum_emitter}
\end{equation}
where $\ket{L}, \ket{R}$ denote respectively left and right polarised photon states.\footnote{We associate states $\ket{\downarrow}, \ket{L}$ to $\ket{0}$, and $\ket{\uparrow}, \ket{R}$ to $\ket{1}$.}
A simple example of such transitions is displayed in Fig.~\ref{fig_transition} and corresponds to a singly-charged single quantum dot.

\begin{figure}[ht]
    \centering
         \includegraphics[width=5cm]{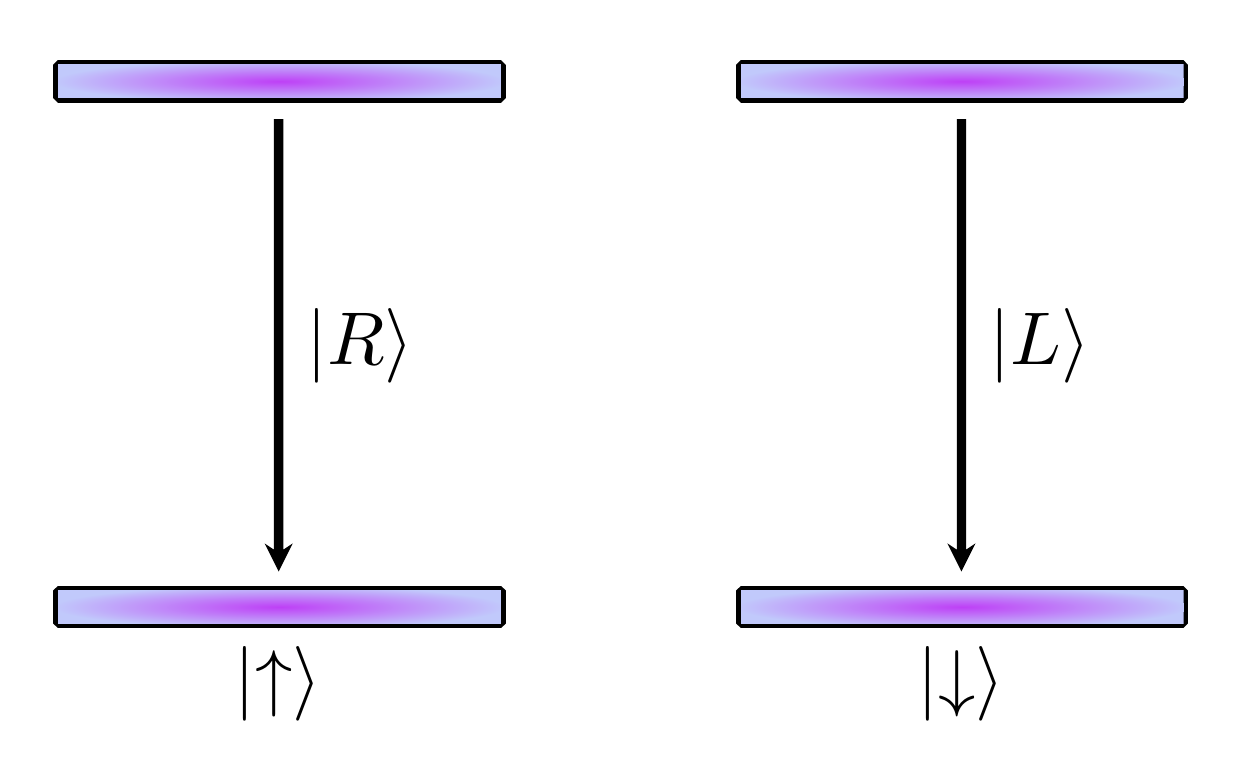} 
    \caption{Energy level structure of a quantum emitter suitable for our BDQC protocol with semi-classical light communications, such as a negatively-charged quantum dot.
    }
    \label{fig_transition}
\end{figure}

The operation $E_{\rm qe}$ can equivalently be expressed as applying a $\CNOT$ between the quantum emitter and a photon in the $\ket{0}_{ph}$ state as in the following circuit:

\[
\Qcircuit @C=2em @R=1.5em {
  \lstick{\ket{\psi}_{qe}} & \ctrl{1} & \qw & \push{\rule{0em}{0.1em}} \\
  \lstick{\ket{0}_{ph}}    & \targ    & \qw & \rstick{\raisebox{2.7em}{~ $E_{\rm qe}\ket{\psi}_{qe}$} }
\gategroup{1}{2}{2}{4}{0.5em}{\}}
}
\]

Starting from a spin in state $\ket{+}_{qe} := \sfrac{1}{\sqrt{2}}(\ket{\downarrow}+\ket{\uparrow})$, applying the emission operator generates a Bell state including one photon and the spin qubit:
\begin{equation}
\ket{\Psi} = \frac{1}{\sqrt{2}}(\ket{\downarrow}\ket{L} + \ket{\uparrow}\ket{R}).
\end{equation}
This Bell state can be seen as a redundantly encoded $\ket{+}$ state. Applying a Hadamard gate on the spin then generates a two-qubit linear cluster state where one vertex contains the spin qubit and the other is comprised of polarisation qubit:
\begin{equation}
\CZ_{qe, ph}\ket{+}_{qe}\ket{+}_{ph}.
\end{equation}
This process can be represented as in the following Figure \ref{fig:basic-em}:

\begin{figure}[ht]
\centering
\includegraphics[width=2\columnwidth/3]{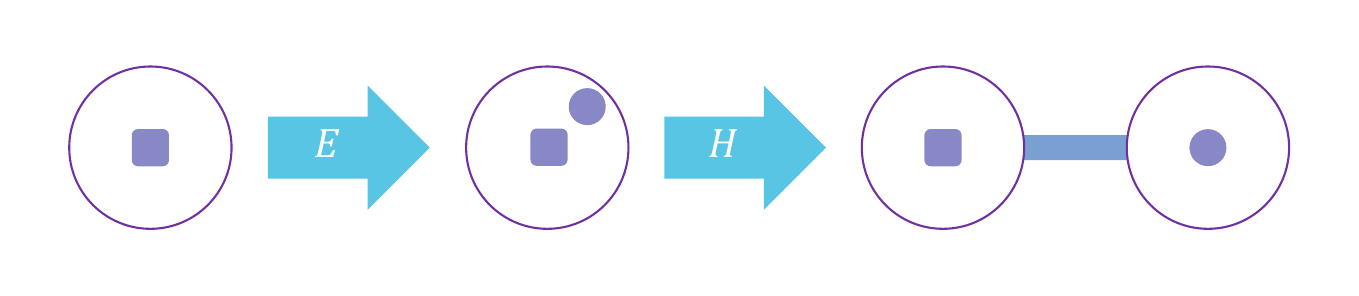}
\caption{Basic linear cluster generation process. Starting from an appropriate quantum emitter in the $\ket{+}_{qe}$ state (purple square), a Bell state $\ket{\Psi}$ is created via emission of a photon (purple dot) and then transformed into a cluster state. The middle circle containing two elements represents the Bell state as a redundantly-encoded $\ket{+}$ state.}
\label{fig:basic-em}
\end{figure}

Intuitively, we can see that applying the emission operator adds a photonic qubit inside the redundantly-encoded vertex containing the quantum emitter, while applying a Hadamard gate to the quantum emitter during this state creation process ``pushes'' the quantum emitter to a new vertex which is linked only to the one it previously occupied. This is exemplified in Figure~\ref{fig:consecutive-em}.

\begin{figure}[ht]
\centering
\includegraphics[width=\columnwidth]{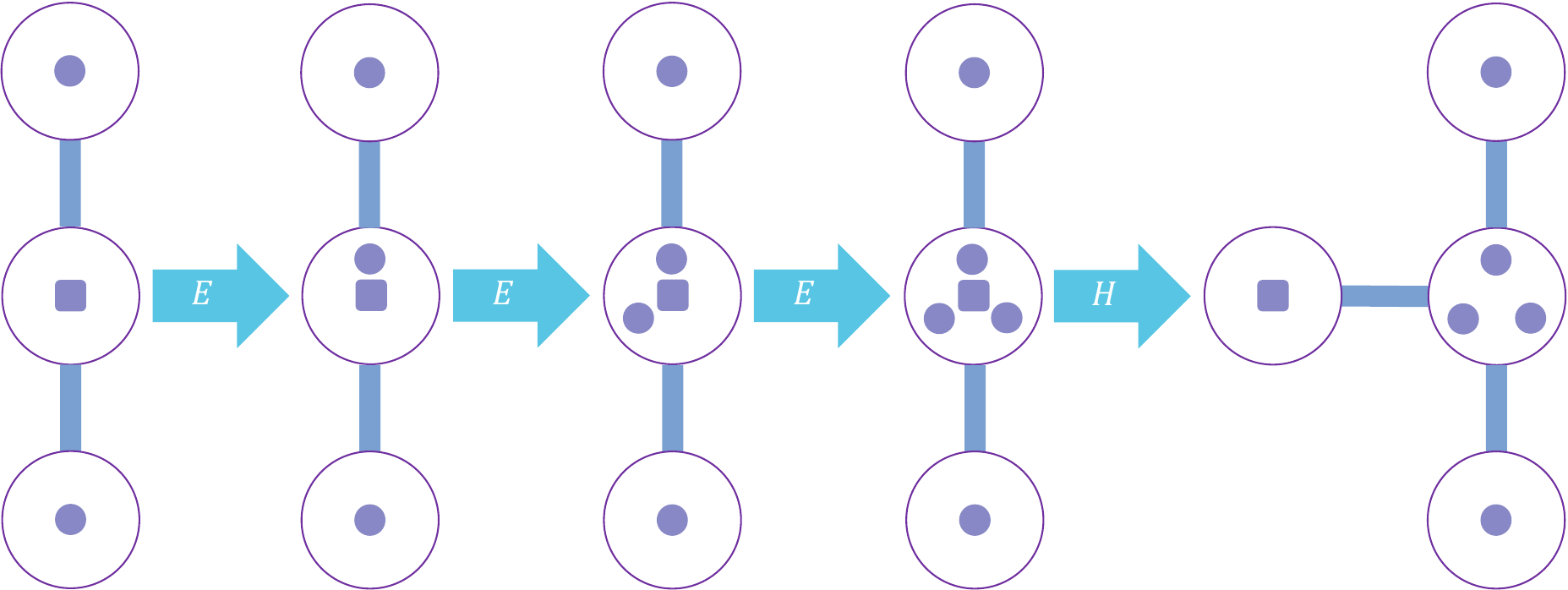}
\caption{Example showing the result of applying the emission operator three times followed by a Hadamard gate. The graph is initially comprised of three vertices, the middle one containing the quantum emitter and the other two a polarisation encoded qubit each. The three emissions generate a 4-qubit redundantly-encoded vertex (i.e. a 4-qubit GHZ state linked to the rest of the graph), while the Hadamard gate applied to the quantum emitter takes it out of the vertex where it was previously and places it in a new vertex.}
\label{fig:consecutive-em}
\end{figure}

To generate a cluster state with $n$ rows and $m$ columns, we use $n$ quantum emitters placed in a line which can each quantumly interact with its neighbours via a $\CZ$ operation. We first apply a $\CZ$ between each quantum emitter and its neighbours, then make each quantum emitter emit one photon and apply a Hadamard on the spin of the emitter. This process is repeated $m$ times, after which the quantum emitters are measured in the $\Z$ basis. Up to single-qubit $\Z$ corrections, this disentangles them from the generated cluster state.\footnote{The $\Z$ basis measurement can be done by emitting an auxiliary photon using $E_{\rm qe}$ and measuring it in the $\Z$ basis.}
This process is presented in Figure \ref{fig:cluster-gen}.

\begin{figure}[htp]
\centering
\subfloat[First step in the cluster state emission. Starting from $n = 3$ quantum emitters, we apply a $\CZ$ gate between each pair of neighbours, followed by an emission operator and Hadamard on all emitters.]{
\label{fig:cluster-gen-a}
\includegraphics[height=5.5cm]{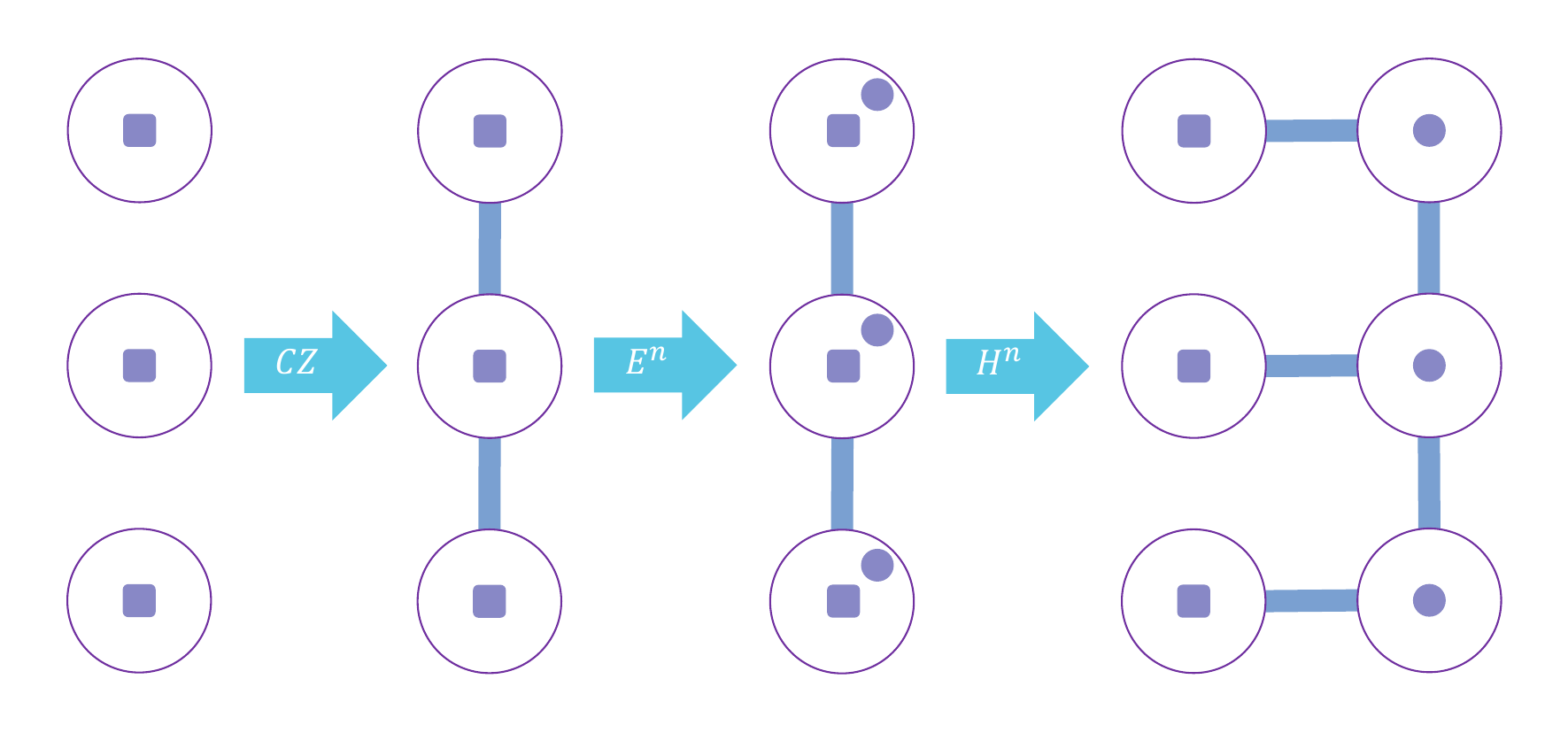}
}\\
\subfloat[Result after repeating the process in Figure \ref{fig:cluster-gen-a} $m = 4$ times. The quantum emitters are then measured in the $\Z$ basis. This yields the state from Figure \ref{fig:cluster-state} encoded in the polarisation of the emitted photons.]{
\label{fig:cluster-gen-b}
\includegraphics[height=5.5cm]{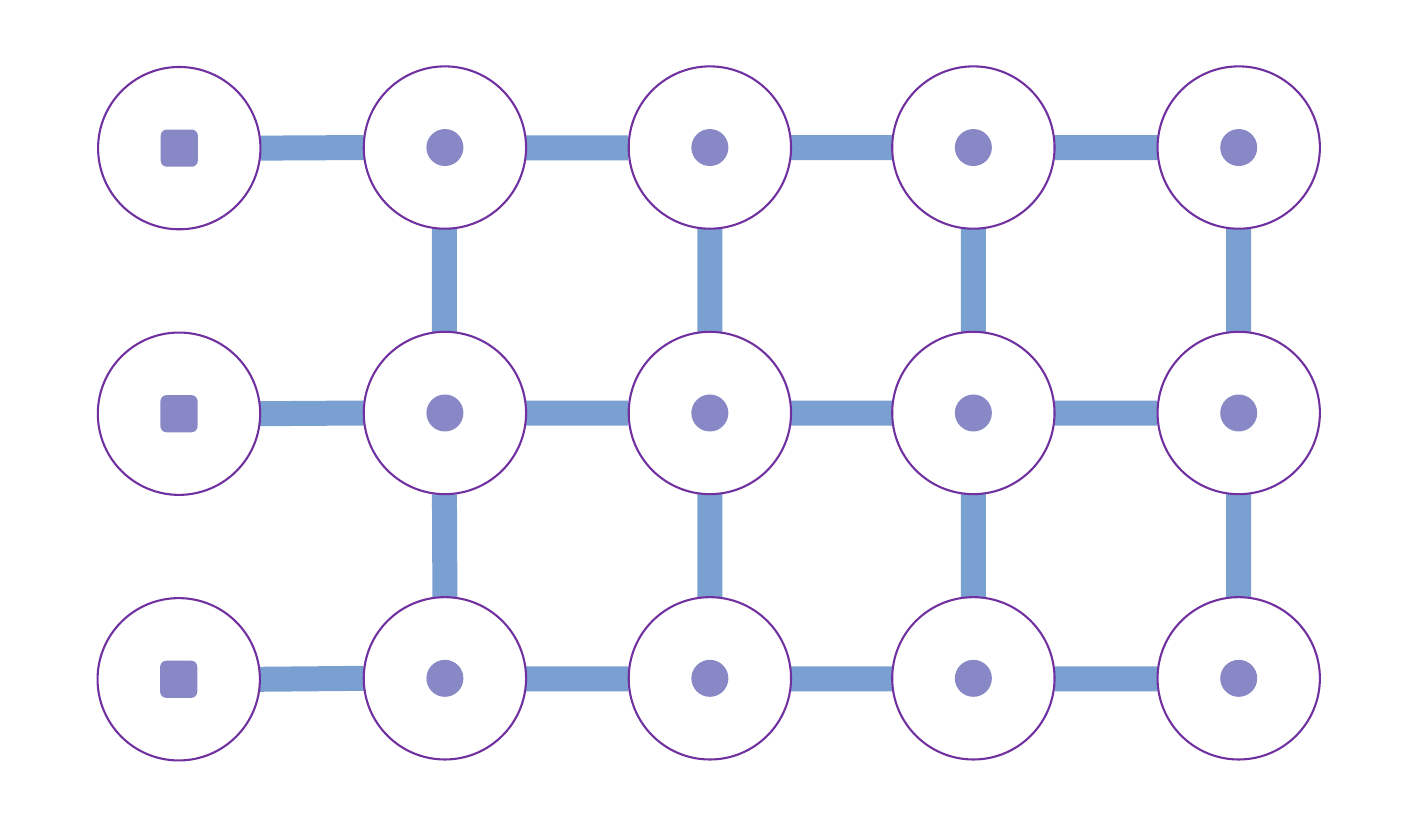}
}
\caption{Generation of a cluster state of $n = 3$ rows and $m = 4$ columns.}
\label{fig:cluster-gen}
\end{figure}

Alternatively, arbitrary graph states can be constructed from linear clusters using various schemes, for example using non-interacting spin quantum emitters and linear optics as in~\cite{Hilaire2022}. Each quantum emitter generates a redundantly-encoded linear cluster state. This is done by emitting $N$ photons before applying a Hadamard gate. After $n$ repetitions, this results in the following state $\bigotimes \CZ_{i, i+1} \ket{\GHZ_N}^{\otimes n}$, where each $\CZ$ gate acts on one qubit from two neighbouring $N$ qubit GZH states defined as $\ket{\GHZ_N} := \sfrac{1}{\sqrt{2}}(\ket{L}^{\otimes N} + \ket{R}^{\otimes N})$. Such redundantly-encoded linear cluster states can be represented as in Figure \ref{fig:red-lin-cluster}.

\begin{figure}[htp]
\centering
\includegraphics[width=2\columnwidth/3]{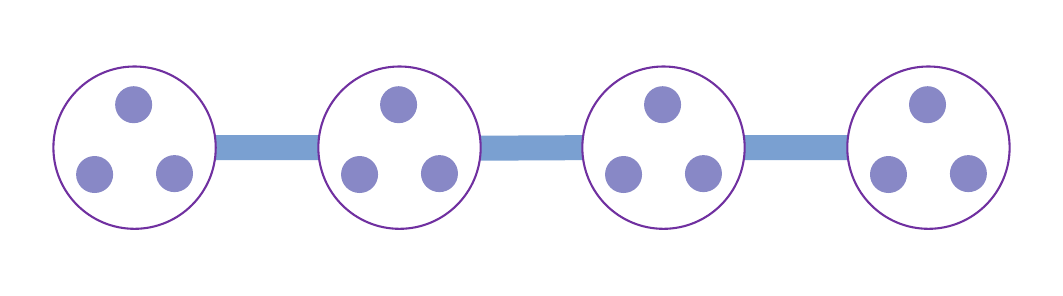}
\caption{A $3$-qubit redundantly-encoded linear cluster state of $m = 4$ columns. Each circle with three dots represents a three-qubit GHZ state.}
\label{fig:red-lin-cluster}
\end{figure}

Two such states can be joined together using fusion operations: to create an edge between two vertices, a pair of photons -- one from each of the vertices -- are measured using a fusion gate (similar to a Bell measurement)~\cite{Browne2005}. A success joins the two states, while a failure means simply losing these photons. This process can be repeated so long as there are photons left in the GHZ states.
The success probability is quite sensitive to photon losses after emission -- additional photons are created and must be detected to create some edges of the graph. However, in principle, this scheme can be made near-deterministic in the absence of loss.

Our secure protocol presented later in the paper can also be straightforwardly adapted to the graph state generation techniques based on deterministic spin-photon interactions from~\cite{Pichler2017, Zhan2020, Wan2021, Shi2021}. Note that we have used a polarisation encoding based on a four-level system as depicted in Fig.~\ref{fig_transition}, but we can also easily generalise the generation schemes to other types of photon encoding and/or quantum emitter level structure~\cite{Lee2019, Tiurev2021, Tiurev2022, Appel2022}. 

\subsection{Delegated quantum computing protocols}
\label{subsec:ubqc}

\subsubsection{Measurement-based quantum computing}

We start by describing how computations are performed in the MBQC model for classical inputs and outputs. Alice's computation is defined by a so-called \emph{measurement pattern} which consists of a graph $G = (V, E, I, O)$, with vertex and edge sets $V$, $E$, input vertex set $I$ and output vertex set $O$, a partial order $\preceq$ over $V$, a set of angles $\{\phi_v\}_{v \in V}$ and a flow function $f$ from non-output vertices $V \setminus O$ to non-input vertices $V \setminus I$.

The computation starts by producing on Bob's machine the graph state $\ket{G}$ associated to the graph $G$: for each vertex in $V$ one qubit is prepared by Alice in the state $\ket{+}$ and sent to Bob, who then applies a $\CZ_{v, w}$ operation between each pair of qubits associated to vertices $v, w$ joined by an edge in $E$.

Alice can then perform a quantum computation on the graph state generated by Bob by instructing him to measure vertices in the order $\preceq$. These measurements are performed in the $\X - \Y$ plane of the Bloch sphere, i.e.~measurements in the $\ket{\pm_{\phi}} = \sfrac{1}{\sqrt{2}}(\ket0 \pm e^{i \phi} \ket1)$ bases. The measurements are specified for each vertex $v$ by the angle $\phi_v$.

If a measurement outcome on a non-output vertex is $1$, some future measurement angles must be adapted in order to perform the same computation as if the measurement outcome had been $0$. The flow function $f$ prescribes how the base angles $\{\phi_v\}_{v \in V}$ change due to this effect, see~\cite{Browne2007} for more details. We will note $\phi'_v$ the updated angle.

In the case of classical outputs, the outcome of the computation is given by the measurement results on vertices from set $O$ after corrections given by the flow are applied.

This scheme is universal when using specific graphs, either brickwork graph states~\cite{Broadbent2010} or cluster states~\cite{Mantri2017}, and measurement angles $\phi_v$ chosen from the set $\Phi = \{ j \pi / 4 \mid j = 0, 1,\ldots, 7 \}$.\footnote{MBQC is exactly universal if we allow arbitrary angles from $[0, 2\pi)$, and it is approximately universal for $2^k$ angles of the form $2 j \pi / 2^k$ so long as $k \geq 3$, see~\cite{Broadbent2010}.}

Following this procedure allows Alice to delegate any quantum computation of her choice, at the cost of revealing it entirely to Bob.

\subsubsection{Blind delegated protocol}

The goal of the UBQC protocol is to allow Alice to perform the same task but without Bob learning anything about her desired computation. In the original UBQC protocol~\cite{Broadbent2010}, Alice also sends qubits to Bob, but their initial state is chosen among $\ket{+_\theta}= \sfrac{1}{\sqrt{2}}(\ket0 + e^{i \theta} \ket1)$ with $\theta \in \Phi$. We can rewrite this state as $\ket{+_\theta} = \RZ(\theta) \ket{+}$, with:
\begin{equation}
\RZ(\theta) = \exp\left(-\frac{i \theta}{2} \Z\right) = \cos(\theta/2) \Id - i \sin(\theta/2) \Z
\end{equation}

Since $\RZ(\theta)$ commutes with $\CZ$ applied by Bob after receiving Alice's qubits, the resulting state is a graph state up to local rotations $\RZ(\theta)$ on each qubit:

\begin{align}
    \ket{G(\vec \theta)} & = \left(\prod_{(v,w) \in E} \CZ_{v,w} \right) \RZ(\vec \theta) \otimes_{v \in V} \ket{+}_v = \RZ(\vec \theta) \ket{G}, \label{eq_theta_G} \\
    \intertext{with}
    \RZ(\vec \theta)& = \prod_{v \in V}\RZ^v(\theta_v)
\end{align}
where $\vec \theta$ is a vector corresponding to all the angles $\theta_v$ applied on each vertex $v \in V$, and the $v$ index in $\RZ^v$ denotes on which qubit the rotation is applied. We will call any graph state of the form $\ket{G(\vec \theta)}$ a \textit{blind graph state}.

To perform the same computation as before, instead of measuring the qubit at vertex $v$ using the measurement angle $\phi'_v$, Alice will instead request that Bob use the angle $\delta_v = \phi'_v + \theta_v + r_v \pi$. Consequently, since the measurement with angle $\delta_v$ is performed by applying $\RZ(-\delta_v)$ and measuring in the $\ket{\pm}$ basis, the $\theta_v$ from the measurement angle will cancel out the $\theta_v$ applied during the state generation and apply Alice's desired computation. It will at the same time perfectly hide the value of $\phi'_v$ in $\delta_v$. The value of $r_v$ is chosen uniformly at random from $\{0, 1\}$ to randomise the outcome of the measurement. This is required since Bob can otherwise learn the outcome of the computation. So long as $\vec \theta$ and $\vec r$ are hidden, the quantum computation is blind and leaks no information as proven in Ref.~\cite{Broadbent2010}.

For completeness sake, we describe in Appendix~\ref{app:dqc-protocols}, Protocol~\ref{prot:UBQC} the full UBQC protocol for classical inputs and outputs and provide the formal security statement in Appendix \ref{app:ac}, Theorem \ref{thm:sec-ubqc}.

\subsubsection{Secure delegated protocol}
\label{subsubsec:sdqc}

Secure Delegated Quantum Computations (SDQC) protocols not only ensure that Alice's information remains private, but also provide verifiability. This property guarantees that Alice either receives the correct output or aborts the computation due to Bob deviating from his prescribed operations.

Various techniques have been developed for inserting tests alongside Alice's computation of interest to check that Bob's behaviour is honest~\cite{Fitzsimons2017a,Kashefi2017,Leichtle2021}, but most require the preparation of both rotated $\ket{+_\theta}$ states and states in the computational basis. These are used to isolate qubits in the graph so that Bob's measurement of these qubits returns a deterministic result if performed correctly. Such isolated qubits can then be used to test Bob throughout the execution of the computation.

We present here the protocol from \cite{Kapourniotis2023} for classical input/output $\mathsf{BQP}$ computations\footnote{$\mathsf{BQP}$ computations have the property that the probability distribution of classical outcomes from correct executions will contain an outcome that has a probability $1/2 + c$ of appearing, with $c > 1/p(n)$ for $p$ a polynomial in the size of the input $n$.} which uses only $\ket{+_\theta}$ states. Alice and Bob repeat the same UBQC execution $N$ times, using a random fraction of the executions to test Bob's honesty via deterministic computations whose outcomes are known to Alice. If less than a given fraction of these tests pass, Alice aborts. Otherwise she computes the majority outcome of the UBQC runs which performed her desired computation.

The tests use the fact that the graph state $\ket G$ is the common eigenstate of all Pauli operations of the form $\mathsf{S}_v = \X_v \prod_{(v,w) \in E} \Z_w$ -- called the graph's stabilisers. A test is then defined by a product of these stabilisers $\mathsf{S}_{\vec t} = \prod_{v \in V} \mathsf{S}_v^{t_v}$ for a bit string $\vec t \in \{0,1\}^{\abs{V}}$. To perform the test associated to stabiliser $\mathsf{S}_{\vec t}$, Alice instructs Bob to measure all vertices in the Pauli bases given by this stabiliser. She then computes the parity of measurement outcomes and the test passes if it is $0$.

In UBQC, Bob only performs measurements in the $\X - \Y$ plane, therefore Ref.~\cite{Kapourniotis2023} requires that Alice only uses tests $\mathsf{S}_{\vec t}$ such that $\mathsf{S}_{\vec t}$ does not act as Pauli $\Z$ on any vertex $v$. Then the state preparation and measurements for tests and computations are indistinguishable since both can be hidden via a UBQC overlay. For various types of graphs including the brickwork and cluster graphs, Ref.~\cite{Kapourniotis2023} then defines sets of such tests from which Alice draws at random when she must perform a test run during the protocol's execution.

Ref.~\cite{Kapourniotis2023} show that these tests can catch all of Bob's deviations which may corrupt Alice's computation in the case of classical outputs. Then, if $N$ is the total number of UBQC executions, the probability that Alice accepts an incorrect outcome at the end of the SDQC protocol decreases exponentially with $N$.

For completeness sake, we describe in Appendix \ref{app:dqc-protocols}, Protocol~\ref{prot:SDQC} the full SDQC protocol for classical inputs and outputs, and provide the formal security statement in Appendix \ref{app:ac}, Theorem \ref{thm:sec-sdqc}.

\section{Remote state preparation with semi-classical clients}
\label{sec:rsp-semi-cl}

The security of the delegated quantum computation protocols presented in the previous section rests on the fact that Bob cannot access information about the secret angle $\theta_v$ with certainty. 
When sharing single qubits, this is ensured by the no-cloning theorem~\cite{Wootters1982, Dieks1982}. 
However, single-qubit sources are expensive and may be inaccessible to most clients who desire to run secure quantum computations. We describe in this section how the same result can be obtained using limited means on both sides of the protocol.

\subsection{Blind graph state generation using quantum emitters}
\label{subsec:blind-graph-rsp}

Instead of sending single qubits, in our protocol Alice transmits coherent states to hide the phases $\vec \theta$. The main advantage is that a photonic attenuated coherent state source is simply a laser light which is already commercially available and likely to remain much more economical than a source of single qubits.
Such a coherent state $\ket{\alpha}_\theta$ with $\alpha = \abs{\alpha} e^{i\varphi}$ is given by:
\begin{equation}
    \ket{\alpha}_{\theta} = e^{- \abs{\alpha}^2/2} \sum_{k=0}^\infty \frac{\alpha^k}{\sqrt{k!}} \ket{k}_\theta,
\end{equation}
for $\ket{k}_\theta := \left(a_\theta^\dagger\right)^k\ket{\upsilon}$, where $a_\theta^\dagger := \cos(\theta)a_R^\dagger + \sin(\theta)a_L^\dagger$ is the rotated polarisation creation operator and $\ket{\upsilon}$ is the vacuum state.
Using a phase-randomised laser, Alice instead sends the state:
\begin{align}
    \rho_{\alpha,\theta} &= \frac{1}{2\pi} \int_{\varphi=0}^{2\pi} \op{\alpha}_\theta \mathrm{d}\varphi\\
    &= \frac{1}{2\pi} \int_{\varphi=0}^{2\pi} \op{\abs{\alpha}e^{i \varphi}}_\theta \mathrm{d}\varphi \\
    &= e^{- \abs{\alpha}^2} \sum_{k=0}^\infty \frac{\abs{\alpha}^{2k}}{k!} \op{k}_\theta.
\end{align}
Randomising the laser's phase is responsible for the sum over all possible phase values. While the coherent state $\ket{\alpha}_\theta$ is a coherent superposition, this randomisation yields a mixed state such that the probability of having $k$ copies of the same state $\ket{1}_\theta = \ket{+_\theta}$ is $e^{- \abs{\alpha}^2}\abs{\alpha}^{2k}/k!$.\footnote{This will be useful in proving the security of our protocols since it will allow us to bound the probability that more than a single copy of the state has been received. This would not be possible with a coherent superposition such as the one before the phase randomisation.} We assume that $\abs{\alpha}^2$ is the effective intensity of the laser pulse as it enters Bob's setup, having factored in the losses induced by the communication channel. Any loss discussed later will correspond to those that occur after this pulse has arrived.

On the other hand, Bob must convert these coherent states into single qubits while preserving the polarisation in order to use them in the UBQC protocol. This can be done using the scheme presented in Section~\ref{subsec:graph-gen} with a slight modification of the emission operator in Eq.~\eqref{eq:quantum_emitter}.
If Alice's light source sends the state $\rho_{\alpha,\theta}$ at an intensity $\abs{\alpha}^2$ such that it is able to drive the optical transition of the quantum emitter from Figure~\ref{fig_transition}, the production of spin-entangled photon states is done via the following global operation:
\begin{equation}
    E_{\rm qe}(\theta) = \ket{\downarrow, L}\bra{\downarrow} + e^{i \theta} \ket{\uparrow, R} \bra{\uparrow} = \RZ(\theta) E_{\rm qe}.
\end{equation}

The polarisation of the input coherent laser therefore imprints a phase on the photon after emission, corresponding to a $\RZ(\theta)$ rotation. 

If the operation $E_{\rm qe}(\theta)$ is used instead of simply $E_{\rm qe}$ in the graph generation process described in Section~\ref{subsec:graph-gen}, then the resulting state is the rotated cluster state $\ket{G(\vec \theta)}$, which is the required resource for both the UBQC and SDQC protocols as shown in Eq.~\eqref{eq_theta_G}.
Intuitively, so long as all the angles $\theta_v$ are hidden from Bob, the UBQC protocol which uses this strategy for generating the blind graph remains secure. We now introduce a new resource in the Abstract Cryptography security framework and then describe a protocol which formalises this idea. For unitary $\mathsf{U}$, we write $\mathsf{U}(\rho)$ to mean $\mathsf{U}\rho\mathsf{U}^\dagger$. Furthermore, when applying $\CNOT$ operations, we always assume that the first qubit mentioned is the control while the second one is the target.

\begin{resource}[ht]
\caption{Blind Graph State Extender}
\label{res:graph-extend}

\begin{algorithmic}[0]

\STATE \textbf{Inputs:} Alice inputs an angle $\theta \in \Phi$. Bob inputs a single-qubit state $\rho_B$.

\STATE \textbf{Computation by the resource:} The resource samples a bit $b \in \bin$ uniformly at random and sends to Bob the two-qubit state $\RZ((-1)^b\theta)\CNOT\left(\rho_B\otimes\op{0}\right)$, where $\RZ$ is applied to the second qubit.

\end{algorithmic}
\end{resource}

This resource captures the perfect setting where the rotated emission operator $E_{\rm qe}(\theta)$ is applied to Bob's quantum emitter without leaking any information about the state. The additional value $b$ gives more flexibility in how this resource can be implemented. We then show that this resource can be used to securely produce a blind graph state on Bob's device via the following Protocol~\ref{prot:graph-rsp}. We do not specify the entanglement operations performed by Bob so that it may use either strategies outlined in section \ref{subsec:graph-gen}. To each of Bob's quantum emitters $q$, we associate the ordered list of vertices $V_q \subset V$ which Bob generates using this quantum emitter. We assume that for two consecutive vertices $v, w \in V_q$ we have $(v, w) \in E$.

\begin{protocol}[htp]
\caption{Blind graph state preparation from blind extensions}
\label{prot:graph-rsp}
\begin{algorithmic} [0]

\STATE \textbf{Public information:} Graph $G = (V, E)$ and an ordering $<$ over vertices decided by Bob.

\STATE \textbf{Inputs:} Alice inputs a set of angles $\vec \theta \in \Phi^{\abs{V}}$.

\STATE \textbf{Protocol:}

\begin{enumerate}
\item Bob initialises the spin qubit of all his quantum emitters in the state $\ket{+}_{qe}$.
\item For each vertex $v$ in the graph, in the order $<$:
\begin{enumerate}
\item Alice and Bob perform a call to the Blind Graph State Extender Resource~\ref{res:graph-extend}:
\begin{itemize}
\item Alice inputs the angle $\theta_v$. Bob inputs the spin qubit of quantum emitter $q$ such that $v \in V_q$.
\item Bob receives as output two qubits along with a bit $b_v$.
\end{itemize}
\item Bob applies $\X^{b_v}$ to both qubits it received and $\Z^{b_v}$ to any one qubit at the vertex $v' \in V_q$ preceding $v$, if it exists.
\item Bob produces a number $n_v$ of his choice of additional qubits using the emission operator $E_{\rm qe}$.
\item Bob performs all entanglement operations between vertex $v$ and the vertices $w < v$ such that $(v, w) \in E$. These can be performed either via:
\begin{itemize}
\item $\CZ$ gates between quantum emitters $q$ and $r$ where $v \in V_q$ and $w \in V_r$, if $r$ is still situated at vertex $w$;
\item fusion operations between the additional qubits generated by Bob at vertices $v$ and $w$.
\end{itemize}
\item Bob applies a Hadamard gate to the spin qubit of quantum emitter $q$.
\end{enumerate}
\item Bob decorrelates his quantum emitters from the polarisation-encoded graph state by performing:
\begin{enumerate}
\item a measurement of each spin qubit $q$ in the computational basis with outcome $c_q$;
\item a correction $\Z^{c_q}$ on a qubit of vertex $v_q$ where $v_q$ is the last element of $V_q$.
\end{enumerate}
\item If there are leftover additional qubits at any vertex $v$, Bob measures them in the $\ket{\pm}$ basis. Let $d_v$ be the sum of measurement outcomes, Bob applies $\Z^{d_v}$ to the remaining qubit at vertex $v$.
\end{enumerate}

\end{algorithmic}
\end{protocol}

\paragraph{Protocol security.}
The blind graph state generation protocol (Protocol~\ref{prot:graph-rsp}) perfectly constructs the Blind Graph RSP Resource~\ref{res:rsp}, meaning that the security error is exactly $0$. This is captured by Theorem~\ref{thm:sec-graph-rsp}, whose proof can be found in Appendix \ref{app:sec-proof-graph}.

\begin{theorem}[AC security of Protocol~\ref{prot:graph-rsp}]
\label{thm:sec-graph-rsp}
\newcounter{count:sec-graph}
\setcounterref{count:sec-graph}{thm:sec-graph-rsp}
Protocol~\ref{prot:graph-rsp} perfectly constructs in the Abstract Cryptography framework the Blind Graph RSP Resource~\ref{res:rsp} from $\abs{V}$ instances of the Blind Graph State Extender Resource~\ref{res:graph-extend}.
\end{theorem}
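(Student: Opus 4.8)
The plan is to prove the theorem by exhibiting, for any adversarial Bob, a simulator acting only on Bob's interface of the Blind Graph RSP Resource~\ref{res:rsp} such that the real protocol (Protocol~\ref{prot:graph-rsp} composed with $|V|$ copies of Resource~\ref{res:graph-extend}) and the ideal resource composed with this simulator are perfectly indistinguishable. Since the statement claims a \emph{perfect} construction with security error exactly $0$, the argument will be a sequence of exact unitary rewrites rather than an $\varepsilon$-estimate. The key structural fact to exploit is that, as noted around Eq.~\eqref{eq_theta_G}, the rotations $\RZ(\theta_v)$ commute with the $\CZ$ entangling operations Bob performs; so the order in which Bob applies the (client-controlled) rotations and his own (client-independent) operations does not matter.

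First I would set up the AC security game: in the real world, Alice's only action is to feed each $\theta_v$ into the $v$-th instance of Resource~\ref{res:graph-extend}; all other messages are internal to Bob. Because every call to Resource~\ref{res:graph-extend} reveals to Bob only the fresh bit $b_v$ (independent of $\theta_v$), and because, after Bob's own correction $\X^{b_v}$ on the two output qubits and $\Z^{b_v}$ on the preceding vertex, the net effect of the call on Bob's registers is exactly $\RZ(\theta_v)\CNOT(\,\cdot\otimes\op0)$ up to the perfect Pauli twirl that the $(-1)^{b_v}$ randomisation induces — I would verify that $\X^{b}\RZ((-1)^b\theta)\X^{b}=\RZ(\theta)$ on the second qubit, modulo the compensating $\Z^{b}$ correction on the neighbouring vertex that keeps the graph-state stabilisers intact — the honest run deterministically (from Bob's point of view, for any fixed $\vec\theta$) produces the state $\RZ(\vec\theta)\ket G$ in the photonic registers after the quantum-emitter registers are measured out in step~3 and the leftover qubits are measured in step~5. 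Here I would invoke the identities already recorded in Section~\ref{subsec:graph-gen}: emission is $\CNOT$ from emitter to a fresh $\ket0_{ph}$, a Hadamard on the spin moves the logical vertex, a $\Z$-basis measurement of the emitter disentangles it up to the stated $\Z^{c_q}$ correction, and $\ket\pm$-measurements on redundant GHZ legs collapse a vertex back to a single qubit up to the $\Z^{d_v}$ correction. Chaining these gives exactly $\ket{G(\vec\theta)}=\RZ(\vec\theta)\ket G$, matching the output of Resource~\ref{res:rsp}.

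For the ideal world I would define the simulator to run Bob's own strategy locally: it internally simulates the $|V|$ resource calls by sampling the bits $b_v$ itself and applying $\CNOT(\,\cdot\otimes\op0)$ \emph{without} the rotations, carries out all of Bob's entangling, Hadamard, measurement and correction operations on these internal registers, and then, at the point where the real protocol would have delivered the finished blind graph state, requests the blind graph state $\ket{G(\vec\theta)}$ from Resource~\ref{res:rsp} on Bob's interface and "patches it in" in place of its own unrotated graph state. The crux is that the commutation of $\RZ(\vec\theta)$ past all of Bob's $\CZ$/fusion operations (Eq.~\eqref{eq_theta_G}) and past the emitter $\Z$-measurements and $\ket\pm$-measurements (the rotations act on photonic registers that are measured only at the very end, or not at all) means the simulator's view and the joint real-world state coincide exactly; the bits $b_v$ have the same (uniform, independent) distribution in both worlds, so nothing about $\vec\theta$ is ever observable. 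I would then conclude that the distinguishing advantage is $0$.

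The main obstacle I anticipate is bookkeeping the byproduct Pauli corrections cleanly: one must check that the $\X^{b_v}$, $\Z^{b_v}$, $\Z^{c_q}$ and $\Z^{d_v}$ corrections in Protocol~\ref{prot:graph-rsp} exactly cancel the byproducts generated by (i) the $(-1)^{b_v}$ randomisation inside Resource~\ref{res:graph-extend}, (ii) the emitter $\Z$-basis measurements, and (iii) the $\ket\pm$-basis measurements of leftover GHZ legs — and, crucially, that each correction is applied to a qubit on which it commutes appropriately with the remaining entangling structure so that the final photonic state is \emph{exactly} $\RZ(\vec\theta)\ket G$ with no residual Paulis. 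This is a finite, deterministic computation in the stabiliser/Clifford formalism, so it is routine but must be done carefully; once it is verified, perfect indistinguishability follows immediately because the only data leaving Alice's interface, the $\theta_v$'s, never reach Bob except through the $\theta$-independent channel of the $b_v$'s.
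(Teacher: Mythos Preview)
Your correctness argument is essentially the paper's: track the state through the emission/Hadamard/$\CZ$ steps, verify that the byproduct Paulis from the $(-1)^{b_v}$ randomisation, the emitter $\Z$-measurements, and the $\ket\pm$-measurements are exactly cancelled by the corrections in Protocol~\ref{prot:graph-rsp}, and conclude that the honest output is $\ket{G(\vec\theta)}$. That part is fine.

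The security simulator, however, has a genuine gap. You propose that the simulator answer each Extender call with the \emph{unrotated} state $\CNOT(\rho_B\otimes\op0)$ together with a fresh uniform bit $b_v$, run Bob's operations internally, and only at the end ``patch in'' the rotated graph state from Resource~\ref{res:rsp}. This cannot work in the AC game against \emph{malicious} Bob: the distinguisher plays Bob, feeds arbitrary qubits into the Extender interface, and receives the two output qubits immediately. Once those qubits are in the distinguisher's hands you cannot retroactively apply $\RZ(\theta_v)$ to them, and a malicious distinguisher need not perform the honest entangling/Hadamard operations at all, so the commutation argument you invoke (which is valid for correctness) does not apply. Concretely, conditioned on the announced $b_v$, the real Extender outputs $\RZ\!\big((-1)^{b_v}\theta_v\big)\CNOT(\rho_B\otimes\op0)$; your simulator outputs $\CNOT(\rho_B\otimes\op0)$; these are perfectly distinguishable for generic $\rho_B$ and $\theta_v$.

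The paper's simulator fixes this by reversing the order you chose: it calls Resource~\ref{res:rsp} \emph{first}, obtains $\ket{G(\vec\theta)}$, applies $\prod_{(v,w)\in E}\CZ_{v,w}$ to undo the entanglement and recover the product $\bigotimes_v\ket{+_{\theta_v}}$, and then uses each $\ket{+_{\theta_v}}$ to emulate one Extender call \emph{exactly}. The trick for a single call is: given Bob's input qubit, apply $\CNOT$ with $\ket{+_{\theta_v}}$ as target and measure the target in the computational basis; the outcome $b_{s,v}$ is uniform, and Bob's qubit is left in the state $\alpha\ket0+\beta e^{i(-1)^{b_{s,v}}\theta_v}\ket1$. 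A fresh $\CNOT$ onto $\ket0$ then reproduces the two-qubit output of the real Extender with the correct rotation and the correct joint distribution of $(b_{s,v},\text{state})$, for \emph{any} input $\rho_B$. This is the missing idea: you must consume the $\theta_v$-dependence qubit-by-qubit at the moment of each Extender call, not defer it.
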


\subsection{Security amplification via rotation aggregation}
\label{subsec:amplification}

It would seem like simply exciting a quantum emitter via a pulse in the state $\rho_{\alpha,\theta}$ would be enough to construct the Blind Graph State Extender Resource~\ref{res:graph-extend} since it is equivalent to applying the emission operator $E_{\rm qe}(\theta)$. However this is not the case since a laser pulse leaks a non-negligible amount of information.

Due to the phase randomisation step, Bob has access to a probabilistic mixture of states $\op{k}_\theta$. These states consist of $k$ photons which each contain the same information about $\theta$. If he performs a polarisation-preserving photon number resolving QND measurement, he can learn the number $k$ of incoming photons for each vertex of the blind graph state. Bob can then choose to attack the vertex where he received the most photons.

If we assume that Bob is capable of manipulating each of these photons in isolation, the amount of knowledge that he can extract from the execution of the protocol directly increases with the intensity of the coherent state $\abs{\alpha}^2$ via the probability of obtaining multiple photons. On the other hand, a relatively high value for $\abs{\alpha}^2$ is desirable in order to excite the quantum emitter generating the blind graph state.

In order to recover the ideal scenario described by the Blind Graph Extender Resource~\ref{res:graph-extend}, we make use of a security amplification gadget which exponentially suppresses the information leakage using only a linear number of pulses. 

This protocol makes use of the GHZ-generation capability of the quantum emitter paired with the rotated emission operator $E_{\rm qe}(\theta)$. 
Alice sends to Bob a certain number of phase-randomised rotated weak coherent pulses $\rho_{\alpha, \theta_i}$ for a fixed intensity $\abs{\alpha}^2$ and randomly chosen values of $\theta_i$. Bob uses these pulses to drive the transition of his quantum emitter, effectively applying the rotated emission operator $E_{\rm qe}(\theta_i)$. These consecutive pulses will generate a rotated GHZ state whose angle is the sum of all angles used by Alice for laser pulses containing at least one photon. Bob emits one last photon using $E_{\rm qe}$ and then measures all qubits generated by Alice's pulses in the $\X$ basis. He can thus detect which laser pulses contained no photons. If too few photons have been detected, Alice aborts the protocol. This threshold, set by the security proof, prevents Bob from discarding too many pulses from which he would not get enough information. If Alice has not aborted, she communicates to Bob a correction which depends on her desired angle and the parity of the measurement outcomes. After applying this correction, Bob's device contains the random rotated state chosen by Alice. This process is presented in Figure~\ref{fig:protocol}.

\begin{figure}[ht]
\centering
\includegraphics[width=\columnwidth]{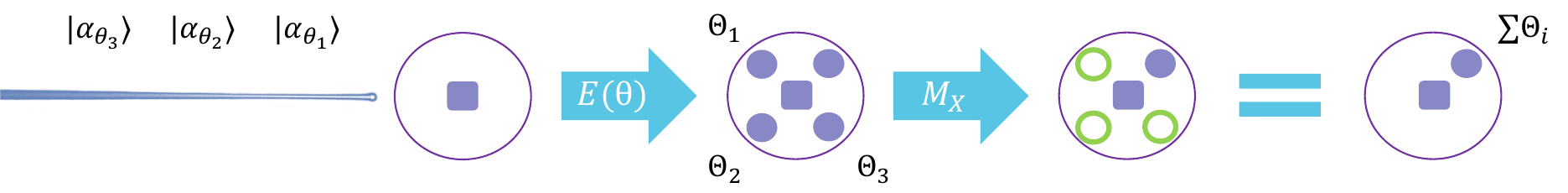}
\caption{Example of privacy amplification with three weak coherent pulses if there are no losses after emission. The laser with polarisation control on the left is in the hands of Alice while the quantum emitter (purple square) is handled by Bob. Alice sends three pulses with randomly rotated polarisations. Bob uses these pulses to drive the quantum emitter and then emits one more photon (purple dots). The result is a five-qubit rotated GHZ state after the first arrow. Bob then measures (green circles) all polarisation qubits in the GHZ state but one in the $\ket{\pm}$ basis. After correction, the result is a rotated Bell state containing the spin qubit and one polarisation qubit. This is the same state -- up to rotation -- as in the middle step of the basic linear cluster generation process from Figure \ref{fig:basic-em}.}
\label{fig:protocol}
\end{figure}

We describe formally this process in Protocol~\ref{prot:ghz-gadget}.

\begin{protocol}[ht]
\caption{GHZ privacy amplification for rotated states from weak coherent pulses}
\label{prot:ghz-gadget}
\begin{algorithmic} [0]

\STATE \textbf{Public information:} Laser pulse intensity $\abs{\alpha}^2$, number $n$ of pulses sent per GHZ gadget, threshold $t$.

\STATE \textbf{Inputs:} Alice inputs an angle $\theta \in \Phi$. Bob inputs a single-qubit spin state $\rho_{qe}$.

\STATE \textbf{Protocol:}

\begin{enumerate}
\item Alice samples values $(\theta_1, \ldots, \theta_n)$ uniformly at random from $\Phi$.
\item For $1 \leq i \leq n$:
\begin{enumerate}
\item Alice sends the phase-randomised rotated weak coherent pulse $\rho_{\alpha, \theta_i}$ to Bob.
\item Bob uses this pulse to apply the rotated emission operator to his spin qubit $E_{\rm qe}(\theta_i)$.
\end{enumerate}
\item Finalisation:
\begin{enumerate}
\item Bob emits another photon using the emission operator $E_{\rm qe}$. This qubit is indexed $0$.
\item Bob attempts to measures all photonic qubits $i \neq 0$ in the GHZ state in the $\ket{\pm}$ basis. Let $S \subseteq \{1,\ldots,n\}$ be the set of indices for which the measurement succeeded (i.e. a photon has been detected) and let $b = \bigoplus_{i \in S} b_i$ be the parity of the corresponding measurement outcomes. Bob sends $S$ to Alice.
\item Alice aborts if $\abs{S} \leq t$, sending $\Ab$ to Bob and setting it as her output. Otherwise, she samples uniformly at random a bit $m_x \in \bin$ and sends the correction angle $\bar{\theta} = (-1)^{m_x}\theta - \sum_{i \in S}\theta_i$ and bit $m_x$ to Bob.
\item Bob applies the correction $\RZ(\bar{\theta} + b\pi)$ to photonic qubit $0$ and sets it as his output together with bit $m_x$.
\end{enumerate}
\end{enumerate}

\end{algorithmic}
\end{protocol}

\paragraph{On the final correction operation.}
The final correction might seem like a superfluous step, but it is necessary for proving the security of the protocol. In particular, the simulator built as part of the security proof requires this flexibility to adapt its behaviour to the set $S$ returned by the distinguisher. However, this does not mean that this correction must necessarily be implemented as an additional quantum operation by Bob. We use this protocol within Protocol~\ref{prot:graph-rsp} to build a resource state for the UBQC protocol, which already requires Alice to send measurement angles to Bob. Alice can merge the correction with these measurement angles so that both are applied simultaneously as as single operation. Therefore, this correction does not imply an additional classical communication step, nor an additional operation by Bob, if Protocol~\ref{prot:ghz-gadget} is used together with UBQC.

\paragraph{Protocol security.}
The privacy amplification protocol presented above constructs the Blind Graph State Extender Resource~\ref{res:graph-extend}, with a security error which decreases exponentially in the number of pulses $n$ sent by Alice. This analysis holds in the lossless regime, assuming that each laser pulse has a probability $\eta_1$ of applying the rotated emission operator to the spin qubit. The next section gives arguments for this assumption and discusses more realistic settings when taking losses into account. The security of our protocol is given by Theorem~\ref{thm:sec-gadget}, whose proof can be found in Appendix \ref{app:sec-proof}.

\begin{theorem}[AC security of Protocol~\ref{prot:ghz-gadget}]
\label{thm:sec-gadget}
\newcounter{count:sec}
\setcounterref{count:sec}{thm:sec-gadget}
Let $\eta_1$ be the probability that the quantum emitter generates a photon after receiving a laser pulse of intensity $\abs{\alpha}^2$ and let $p_{\alpha, 2} = 1 - e^{-\abs{\alpha}^2} - \abs{\alpha}^2 e^{-\abs{\alpha}^2}$ be the probability that the laser pulse contains two or more photons. Let $(\alpha, n)$ be the public parameters used in Protocol~\ref{prot:ghz-gadget}, and let $t = \frac{\eta_1 + p_{\alpha, 2}}{2}n$. Then Protocol~\ref{prot:ghz-gadget} $\exp(-\nu_{\alpha}n)$-constructs in the Abstract Cryptography framework the Blind Graph State Extender Resource~\ref{res:graph-extend}, for $\nu_\alpha = (\eta_1 - p_{\alpha, 2})^2/2$.
\end{theorem}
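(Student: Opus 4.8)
The plan is to instantiate the two standard requirements of the Abstract Cryptography framework of Appendix~\ref{app:ac}: \emph{correctness}, that composing the honest converters of Alice and Bob with the raw communication resources reproduces the Blind Graph State Extender Resource~\ref{res:graph-extend} up to the claimed distance, and \emph{security against a dishonest Bob}, that there is a simulator $\sigma_B$ on the Bob-interface of Resource~\ref{res:graph-extend} making the ideal execution $\exp(-\nu_\alpha n)$-indistinguishable from the real protocol run with an honest Alice. Both errors will turn out to come from Hoeffding's inequality applied to two different binomial variables, so the strategy is to show that conditioned on suitable concentration events the real and ideal worlds coincide \emph{exactly}.

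For correctness I would first propagate the quantum state through an honest run. Each pulse that triggers an emission applies $E_{\rm qe}(\theta_i) = \RZ(\theta_i) E_{\rm qe}$, so after step~2 the emitter and the $\abs{S}$ emitted photons are in $\tfrac{1}{\sqrt 2}\big(\ket{\downarrow}\ket{L}^{\otimes\abs{S}} + e^{i\sum_{i\in S}\theta_i}\ket{\uparrow}\ket{R}^{\otimes\abs{S}}\big)$; the extra emission in step~3(a) appends one more $\ket L/\ket R$, and measuring the $\abs{S}$ ``pulse'' photons in the $\ket{\pm}$ basis collapses the pair (spin, photon $0$) to $\tfrac{1}{\sqrt 2}\big(\ket{00} + (-1)^{b} e^{i\sum_{i\in S}\theta_i}\ket{11}\big)$ with $b = \bigoplus_{i\in S} b_i$. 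Applying the final correction $\RZ(\bar\theta + b\pi)$ with $\bar\theta = (-1)^{m_x}\theta - \sum_{i\in S}\theta_i$ cancels both the parity phase and the sum of angles, leaving $\tfrac{1}{\sqrt 2}\big(\ket{00} + e^{i(-1)^{m_x}\theta}\ket{11}\big) = \RZ((-1)^{m_x}\theta)\CNOT\big(\op{+}\otimes\op0\big)$, which is exactly the output of Resource~\ref{res:graph-extend} once its internal random bit is identified with $m_x$. Hence, conditioned on Alice not aborting, the honest protocol is a perfect realisation, and the only correctness loss is the abort probability: $\abs{S}$ is a sum of $n$ independent Bernoulli$(\eta_1)$ variables, so Hoeffding's inequality gives $\Pr[\abs{S} \le t] = \Pr[\abs{S} \le \eta_1 n - \tfrac{\eta_1 - p_{\alpha,2}}{2}n] \le \exp(-\tfrac{(\eta_1 - p_{\alpha,2})^2}{2}n) = \exp(-\nu_\alpha n)$.

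For security I would construct $\sigma_B$ as follows: it samples its own angles $\theta_1,\dots,\theta_n$ uniformly in $\Phi$ and sends the phase-randomised pulses $\rho_{\alpha,\theta_i}$ to the distinguisher; since in the real protocol the $\theta_i$ are sampled independently of Alice's input $\theta$, the distribution of the pulses is reproduced perfectly. On receiving the reported set $S$ it signals an abort on Alice's interface of the resource and forwards $\Ab$ if $\abs{S} \le t$; otherwise it feeds the spin register it controls into the resource, receives the output state together with the resource's bit, adopts that bit as $m_x$, forwards the state to the distinguisher as the prepared output, and sends a \emph{uniformly random} $\bar\theta \in \Phi$ together with $m_x$. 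The only discrepancy with the real world is that there $\bar\theta = (-1)^{m_x}\theta - \sum_{i\in S}\theta_i$ rather than uniform, and since the distinguisher knows $\theta$ and $m_x$ this difference reduces to the non-uniformity of $\sum_{i\in S}\theta_i$ conditioned on the distinguisher's quantum and classical view. Using the conservative leakage model in which a pulse carrying at least two photons is assumed to reveal its angle in full while a pulse carrying at most one photon is treated as a standard UBQC qubit, let $L$ be the random set of pulses with $\ge 2$ photons; the distinguisher can pin down $\sum_{i\in S}\theta_i$ only if it knows every summand, i.e.\ only if $S \subseteq L$, which — since Alice did not abort — forces $\abs{L} \ge \abs{S} > t = p_{\alpha,2}n + \tfrac{\eta_1 - p_{\alpha,2}}{2}n$. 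As $\abs{L}$ is a sum of $n$ independent Bernoulli$(p_{\alpha,2})$ variables, Hoeffding again bounds this event by $\exp(-\nu_\alpha n)$; bounding the conditional distance by $1$ on that event and by $0$ on its complement (where $S$ contains a pulse outside $L$ whose angle is uniform given the distinguisher's view, so $\sum_{i\in S}\theta_i$ and hence $\bar\theta$ are distributed exactly as the simulator produces them) yields the claimed security error.

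The step I expect to be the main obstacle is making that last claim fully rigorous: controlling precisely how much a \emph{single} phase-randomised pulse with at most one photon can reveal about its angle, and arguing that whatever residual $\theta$-dependence survives there is already leaked by Resource~\ref{res:graph-extend} itself, which after all hands Bob one copy of a $\theta$-rotated entangled pair. This is exactly the purpose of the extra random bit in the resource and of the ``redundant'' final correction $\RZ(\bar\theta + b\pi)$ in Protocol~\ref{prot:ghz-gadget}: they give the simulator enough latitude to present a uniform correction angle and a consistent output state while never learning Alice's secret $\theta$, and converting this intuition into a clean trace-distance bound — combining the two Hoeffding estimates above with a per-pulse analysis of the input state's dependence on the angle — is where most of the work lies.
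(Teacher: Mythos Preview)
Your correctness argument matches the paper's: propagate the honest execution to see that, conditioned on no abort, the output is exactly that of Resource~\ref{res:graph-extend}, and bound the abort probability by Hoeffding on $\abs{S}\sim\operatorname{Bin}(n,\eta_1)$. That part is fine.

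The security argument has a genuine gap in the simulator. In the AC setting the distinguisher holds the spin throughout and interacts with the simulator only through the protocol's interface: it receives $n$ pulses, sends $S$, receives $(\bar\theta,m_x)$. The simulator's only access to $\theta$ is via Bob's interface of Resource~\ref{res:graph-extend}, from which it can extract a single $\theta$-rotated qubit that lives on the simulator's side; there is no channel through which it can ``forward the state to the distinguisher as the prepared output''. With your simulator, consider a distinguisher that inputs $\theta=0$, follows honest Bob exactly, and then tests whether its spin--photon pair is the unrotated Bell state. In the real world the final phase is $\bar\theta+b\pi+\sum_{i\in S}\theta_i=(-1)^{m_x}\theta=0$, so the test always passes; under your simulator the final phase is $\bar\theta+b\pi+\sum_{i\in S}\theta_i$ with $\bar\theta$ uniform and independent, so the test fails with constant probability. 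This distinguishes the two worlds with advantage independent of $n$, regardless of your Hoeffding bound on $\abs{L}$.

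The paper's fix is exactly what you gesture at in the last paragraph but do not implement: the simulator first samples the Poisson photon numbers $k_i$, and for each pulse with $k_i=1$ it sends half of an EPR pair instead of a committed $\ket{+_{\theta_i}}$. Once $S$ arrives, it obtains $\ket{+_\theta}$ from the resource and \emph{teleports} it through one of these EPR halves (index $s\in S$). The teleportation byproduct $m_{s,X}$ is absorbed into the resource's random bit, and $m_{s,Z}$ together with the angle offset is absorbed into $\bar\theta$ --- this is the actual purpose of both degrees of freedom you mention. After teleportation the distinguisher's qubit at index $s$ is genuinely $\ket{+_{(-1)^{m_{s,X}}(\theta+\theta_s)+m_{s,Z}\pi}}$, so the correction the simulator sends is consistent with $\theta$ without the simulator ever knowing it. The failure event is then ``no index in $S$ has $k_i\le 1$'', which, since $\abs{S}>t$, forces more than $t$ pulses to carry $\ge 2$ photons; Hoeffding on $\operatorname{Bin}(n,p_{\alpha,2})$ gives the claimed bound.
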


\subsection{Post-selecting on receiving all photons}
\label{subsec:threshold-vs-post}

It is possible also for Alice to reject the state as soon as a single photon is lost. We describe this alternative formally in Appendix~\ref{app:post-select-prot} as Protocol~\ref{prot:post-select-ghz-gadget}, along with a correctness and security analysis.

The threshold protocol presented above will abort significantly less often than the one which requires Bob to recover all photons. This is because Protocol~\ref{prot:ghz-gadget} tolerates some losses due to a photon not being emitted. On the other hand, the security bound is slightly degraded. Combined together, the threshold protocol's security and successful completion probability both increase with the number of emitted photons, as expressed in Theorem~\ref{thm:sec-gadget}. It would seem like it is therefore the superior protocol since the success probability of the post-selected protocol actually decreases exponentially with the number of photons which need to be collected.

However this analysis holds only if we assume that (i) there are no losses after the photons have been emitted, and (ii) the emission operator correctly describes the result of the interaction between the laser pulse and the quantum emitter.

If a photonic qubit is lost after it has been generated in Protocol~\ref{prot:ghz-gadget}, the state is perfectly mixed from the point of view of both Alice and Bob. Discarding the state as soon as a single photon is not detected by Bob during the measurement prevents this issue and as a consequence the correctness of the output state of Protocol \ref{prot:post-select-ghz-gadget} is unaffected by losses which happen after the qubit generation procedure. This includes the case where the photon is emitted while the interaction between the laser and the quantum emitter is still taking place and is filtered out as described in Section~\ref{subsec:two-lev}.

Both protocols on the other hand may suffer if the laser-spin interaction deviates from the ideal emission operator described here. If for instance a photon is emitted while the laser interaction is ongoing and the laser re-excites the quantum emitter resulting in a second photon being emitted, neither protocols will count this as a loss but the end result will be a perfectly mixed state in both cases. However, if the operator instead applies a slightly different angle for some or all values of $\theta$, this can be mitigated if the overall protocol can tolerate this noise level.

\subsection{Delegated quantum computation protocols with semi-classical client}
\label{subsec:dqc-semi-classical}

We describe in this section how to build delegated quantum computation protocols with a semi-classical client. To this end we compose our bind graph RSP protocol and privacy amplification protocol with the protocols from \cite{Broadbent2010} and \cite{Kapourniotis2023} to obtain respectively BDQC and SDQC protocols for classical inputs and outputs.

\paragraph{Blind delegation from semi-classical client RSP.}
We use the composition theorem from the AC framework to replace the call to the Blind Graph RSP resource in the UBQC protocol from Section~\ref{app:sec-proof} with our Protocols~\ref{prot:graph-rsp} and \ref{prot:ghz-gadget}. The general composition Theorem \ref{thm:compos} allows us to combine Theorems~\ref{thm:sec-ubqc}, \ref{thm:sec-graph-rsp} and \ref{thm:sec-gadget} into the following result. 

\begin{theorem}[BDQC with semi-classical client]
\label{thm:semi-classical-bdqc}

Let $\alpha, \nu_\alpha, n$ be defined as in Theorem \ref{thm:sec-gadget} and let $\abs{V}$ be the number of vertices in the graph supporting Alice's desired MBQC computation. Consider the protocol obtained by replacing the blind graph state generation in the UBQC Protocol~\ref{prot:UBQC} (steps 1 and 2) with Protocol~\ref{prot:graph-rsp} composed with Protocol~\ref{prot:ghz-gadget} with parameters $(\alpha, n)$. This protocol $\abs{V}\cdot\exp(-\nu_{\alpha}n)$-constructs in the Abstract Cryptography framework the BDQC Resource~\ref{res:bqc}.

\end{theorem}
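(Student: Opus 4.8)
The proof is a direct application of the Abstract Cryptography composition theorem (Theorem~\ref{thm:compos}) to the three layers of constructions we have already established. The plan is to chain together, from the bottom up: (i) the exponential security of the privacy amplification gadget (Theorem~\ref{thm:sec-gadget}), which $\exp(-\nu_\alpha n)$-constructs a single instance of the Blind Graph State Extender Resource~\ref{res:graph-extend}; (ii) the perfect security of the blind graph state preparation protocol (Theorem~\ref{thm:sec-graph-rsp}), which constructs the Blind Graph RSP Resource~\ref{res:rsp} from $\abs{V}$ instances of Resource~\ref{res:graph-extend} with zero error; and (iii) the security of UBQC (Theorem~\ref{thm:sec-ubqc}), which constructs the BDQC Resource~\ref{res:bqc} from the Blind Graph RSP Resource. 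Since the construction distance is additive under composition, the total error is the sum of the three contributions, two of which vanish.

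First I would treat the bottom layer. Protocol~\ref{prot:graph-rsp} makes exactly $\abs{V}$ calls to Resource~\ref{res:graph-extend}, one per vertex of the graph $G$ supporting Alice's computation, and each such call is realised by an independent run of Protocol~\ref{prot:ghz-gadget} with public parameters $(\alpha, n)$ acting on a fresh spin qubit and a fresh batch of $n$ phase-randomised pulses, so the runs act on disjoint registers. By Theorem~\ref{thm:sec-gadget} each run $\exp(-\nu_\alpha n)$-constructs Resource~\ref{res:graph-extend}, and by the parallel composability implied by Theorem~\ref{thm:compos}, the $\abs{V}$-fold parallel repetition $\abs{V}\cdot\exp(-\nu_\alpha n)$-constructs $\abs{V}$ independent copies of Resource~\ref{res:graph-extend}. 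One point to verify here is that the abort flag emitted by each gadget when $\abs{S}\le t$ is forwarded unchanged to the outer protocol, so that an abort in any gadget propagates to the composed protocol exactly as it does in the ideal $\abs{V}$-fold resource; this is already baked into the statement of Theorem~\ref{thm:sec-gadget}, as is the side information consisting of the bits $m_x$ (which become the $b_v$ used by Protocol~\ref{prot:graph-rsp}).

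Next I would glue on the middle and top layers. By Theorem~\ref{thm:sec-graph-rsp}, Protocol~\ref{prot:graph-rsp} perfectly constructs Resource~\ref{res:rsp} given $\abs{V}$ copies of Resource~\ref{res:graph-extend}; substituting the concrete realisation of those copies from the previous step and invoking sequential composition (Theorem~\ref{thm:compos}) yields a protocol that $\bigl(0 + \abs{V}\cdot\exp(-\nu_\alpha n)\bigr)$-constructs Resource~\ref{res:rsp}. Finally, Theorem~\ref{thm:sec-ubqc} states that UBQC Protocol~\ref{prot:UBQC} constructs Resource~\ref{res:bqc} from Resource~\ref{res:rsp} with zero error; since the only modification to Protocol~\ref{prot:UBQC} is that steps~1 and~2 (blind graph preparation) are replaced by the composed sub-protocol while everything downstream (classical transmission of the $\delta_v$, Bob's measurements, flow corrections, output recovery) is untouched, a last application of Theorem~\ref{thm:compos} adds the UBQC construction error, namely $0$, and gives the claimed bound $\abs{V}\cdot\exp(-\nu_\alpha n)$.

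The main obstacle is not any quantitative estimate — all the hard work lives in Theorems~\ref{thm:sec-graph-rsp} and~\ref{thm:sec-gadget} — but the bookkeeping needed to ensure the composition theorem applies verbatim: that the resource consumed by each layer is syntactically identical to the one produced by the layer below (matching number of instances, input/output registers, and classical side information including the $b_v$ bits and the abort symbol $\Ab$), and that the distinguisher's access at the composed interface is no greater than the sum of the accesses assumed in the individual statements. Once this interface-matching is discharged, the error bound is simply the sum over the three layers, and the result follows.
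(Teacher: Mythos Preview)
Your proposal is correct and follows exactly the paper's approach: the paper's proof consists of the single sentence ``The general composition Theorem~\ref{thm:compos} allows us to combine Theorems~\ref{thm:sec-ubqc}, \ref{thm:sec-graph-rsp} and \ref{thm:sec-gadget} into the following result,'' and your argument is simply a more detailed unpacking of that composition with the appropriate bookkeeping. If anything, your version is more explicit than the paper's about the additive error accounting across the three layers and the interface-matching needed for Theorem~\ref{thm:compos} to apply.
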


\paragraph{Testing the server's honesty with semi-classical light.}
There are two challenges when trying to use our semi-classical client technique together with protocols for SDQC which require the client to produce states in the computational basis, such as those from~\cite{Fitzsimons2017a,Kashefi2017,Leichtle2021}.

First of all, generating a computational basis state with the energy structure presented in Figure \ref{fig_transition} would project the spin state to the corresponding computational basis state. This would then need to be reinitialised by the Server to continue the entanglement generation. The information of whether or not the spin needs to be reinitialised in a superposition would therefore leak whether the state produced is rotated or in the computational basis.

Then, there is the issue of amplifying the security of both computational basis and rotated states using a single gadget which does not introduce new attack vectors or leak information. This has been attempted in~\cite{Kapourniotis2021} but a new attack on their scheme has been shown in~\cite{Kapourniotis2023}. Thankfully, this second work also fixes the issue by proposing a novel SDQC protocol for classical input and output $\mathsf{BQP}$ computations which makes use of the generalised tests introduced in~\cite{Kapourniotis2022}. This corresponds to the protocol presented in Section \ref{subsubsec:sdqc} and more formally as Protocol \ref{prot:SDQC}.

The composable security of their protocol scales exponentially with the number of repetitions of the UBQC protocol, and the security of our blind graph RSP protocol degrades linearly in the number of repetitions. We again use the general composition Theorem \ref{thm:compos} to combine Theorems~\ref{thm:sec-sdqc} and \ref{thm:sec-gadget} into the following result, yielding an SDQC protocol for classical input and output $\mathsf{BQP}$ computations in which Alice only needs to send weak coherent states.

\begin{theorem}[SDQC with semi-classical client]
\label{thm:semi-classical-sdqc}

Let $\alpha, \nu_\alpha, n$ be defined as in Theorem \ref{thm:sec-gadget}, let $N$ be the number of executions of the UBQC Protocol \ref{prot:UBQC} in Protocol \ref{prot:SDQC} and let $\abs{V}$ be the number of vertices in the graph supporting Alice's desired MBQC computation. Consider the protocol obtained by replacing the blind graph state generation step in each UBQC execution in the SDQC Protocol \ref{prot:SDQC} with Protocol~\ref{prot:graph-rsp} composed with Protocol~\ref{prot:ghz-gadget} with parameters $(\alpha, n)$. This protocol $N\abs{V}\cdot\exp(-\nu_{\alpha}n) + \epsilon_S$-constructs in the Abstract Cryptography framework the SDQC for classical input and output $\mathsf{BQP}$ computations Resource~\ref{res:sqc}, for $\epsilon_S$ exponentially decreasing in $N$ being the security bound of the SDQC Protocol \ref{prot:SDQC}.

\end{theorem}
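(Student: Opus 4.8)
The plan is to combine three ingredients, all of which are available earlier in the excerpt: (i) the composable security of the SDQC Protocol \ref{prot:SDQC} from \cite{Kapourniotis2023}, which $\epsilon_S$-constructs the SDQC Resource \ref{res:sqc} from $N$ instances of the Blind Graph RSP Resource \ref{res:rsp}, with $\epsilon_S$ exponentially small in $N$ (Theorem \ref{thm:sec-sdqc}); (ii) Theorem \ref{thm:sec-graph-rsp}, which says Protocol \ref{prot:graph-rsp} \emph{perfectly} constructs one instance of the Blind Graph RSP Resource \ref{res:rsp} from $\abs{V}$ instances of the Blind Graph State Extender Resource \ref{res:graph-extend}; and (iii) Theorem \ref{thm:sec-gadget}, which says Protocol \ref{prot:ghz-gadget} with parameters $(\alpha,n)$ is an $\exp(-\nu_\alpha n)$-construction of a single Blind Graph State Extender Resource \ref{res:graph-extend}. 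The backbone is the general composition theorem of the AC framework (Theorem \ref{thm:compos}), which guarantees that errors add up under both sequential and parallel composition of constructions.

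The key steps, in order. First I would instantiate each of the $N$ UBQC executions inside Protocol \ref{prot:SDQC}: since Theorem \ref{thm:sec-sdqc} treats the blind graph state preparation in each execution as a call to the ideal Blind Graph RSP Resource \ref{res:rsp}, I replace each such call by Protocol \ref{prot:graph-rsp}. By Theorem \ref{thm:sec-graph-rsp} this replacement is exact (error $0$), so after this step the SDQC protocol is an $\epsilon_S$-construction of Resource \ref{res:sqc} from $N\cdot\abs{V}$ instances of the Blind Graph State Extender Resource \ref{res:graph-extend} (one per vertex per execution). Second, I replace each of these $N\abs{V}$ extender-resource calls by an independent run of Protocol \ref{prot:ghz-gadget} with parameters $(\alpha,n)$; Theorem \ref{thm:sec-gadget} bounds each such substitution by $\exp(-\nu_\alpha n)$. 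Third, I invoke the composition theorem: composing the $\epsilon_S$-construction with $N\abs{V}$ constructions each of error $\exp(-\nu_\alpha n)$ yields a construction of Resource \ref{res:sqc} with total error at most $N\abs{V}\cdot\exp(-\nu_\alpha n) + \epsilon_S$, which is exactly the claimed bound. Efficiency follows because each sub-protocol is efficient and $N$, $\abs{V}$, $n$ are polynomial in the security parameter.

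One subtlety worth checking explicitly, rather than an obstacle per se, is that the $\epsilon_S$-construction guaranteed by \cite{Kapourniotis2023} really does invoke its blind graph states only through the Blind Graph RSP Resource \ref{res:rsp} abstraction, with the same set of rotated states $\ket{+_\theta}$, $\theta\in\Phi$, that Protocols \ref{prot:graph-rsp} and \ref{prot:ghz-gadget} produce --- this is precisely the point made in Section \ref{subsubsec:sdqc} that the protocol of \cite{Kapourniotis2023} uses only $\ket{+_\theta}$ states and hence is compatible with our RSP. A second point is that the $N\abs{V}$ gadget substitutions are composed in parallel within a single protocol instance, so one must appeal to the parallel-composition form of Theorem \ref{thm:compos}; the error still adds linearly in the number of instances. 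The main place where care is needed --- and where I expect the bulk of the work in the full proof to lie --- is bookkeeping: ensuring that the correction bits $m_x$ and $b_v$ produced by Protocol \ref{prot:ghz-gadget} are consumed correctly inside Protocol \ref{prot:graph-rsp} and ultimately folded into the UBQC measurement angles $\delta_v$ as described in the ``On the final correction operation'' paragraph, so that the composed object genuinely matches the interface of Resource \ref{res:rsp} expected by Theorem \ref{thm:sec-sdqc}. Once that interface-matching is verified, the numerical bound is immediate from additivity of AC distinguishing advantages.
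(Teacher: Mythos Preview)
Your proposal is correct and matches the paper's own argument essentially exactly: the paper simply invokes the general composition Theorem~\ref{thm:compos} to combine Theorems~\ref{thm:sec-sdqc}, \ref{thm:sec-graph-rsp} and \ref{thm:sec-gadget}, yielding the additive error bound $N\abs{V}\cdot\exp(-\nu_\alpha n)+\epsilon_S$. Your additional bookkeeping remarks (interface matching, folding the correction into the UBQC angles) are helpful elaborations but go beyond what the paper spells out.
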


Finally, as shown in \cite{Kapourniotis2022, Kapourniotis2023}, this SDQC protocol tolerates a constant level of global honest noise, meaning that slight defects will be corrected and Alice is able to recover the correct output without aborting even if her apparatus or Bob's have a low-enough level of noise.

\section{Analysis of concrete schemes for protocol implementation}
\label{sec:loss-analysis}

In the previous section we have shown via Theorem~\ref{thm:sec-gadget} that, so long as the total single-photon efficiency $\eta_1$ is greater than the probability of receiving two or more photons $p_{\alpha, 2}$, we can amplify the security of blind state generation to arbitrary levels by increasing the number of photons in the GHZ gadget. 

The vacuum probability of the incoming coherent pulse imposes an upper bound on $\eta_1$ of $p_{\alpha, 1} = 1 - e^{-|\alpha|^2}$. In reality, if the average number of photons $|\alpha|^2$ is not large enough, the population of the emitter will not be sufficiently inverted to achieve single-photon emission with a high efficiency and the criteria for security amplification may not be satisfied. The exact value of $\eta_1$ can depend on the energy-level structure of the device, the excitation scheme, and the pulse parameters. Thus, an important question is: can any practical single-photon generation schemes and parameter regimes allow for $\eta_1 \geq p_{\alpha, 2}$? 

In this section, we will take a step further towards a practical implementation by discussing constraints on $\eta_1$ imposed by basic models for single-photon emission. In particular, we focus on idealised single-photon generation protocols (those that ensure no more than one photon is collected). Experimental considerations for possible implementation, and comparison with the state of the art, will be discussed more in the following section.

\subsection{Two-level emitter}
\label{subsec:two-lev}

Driving a two-level emitter with a laser is a common approach to produce single photons. By applying a coherent pulse with a temporal width much less than the lifetime of the emitter, the state of the emitter can be excited from the ground state $\ket{g}$ to the excited state $\ket{e}$ with a high fidelity, allowing spontaneous emission to produce a single photon. Such an excitation scheme can then be applied to an emitter with an additional internal degree of freedom, such as electron spin, to generate photonic graph states encoded in polarisation \cite{Lindner2009} as discussed in section \ref{subsec:graph-gen}.

The main challenge with using a two-level emitter is to separate the scattered single photon from the laser pulse to achieve a high degree of single-photon purity, where the probability of collecting more than one photon is very small. This is done by using excitation schemes where the scattered single photon and laser pulse can be distinguished by a degree of freedom such as polarisation, frequency, or spatial mode. Unfortunately, achieving a high single-photon efficiency using weak excitation pulses necessarily requires a high degree of light-matter coupling, which implies that these degrees of freedom should be as close as possible to improve the mode-matching between the laser pulse and the emitter. Thus, using a scheme based on a two-level emitter immediately introduces a trade-off between the single-photon efficiency and fidelity for a fixed number of photons in the laser pulse.

To get an idea of the practical limit when using a two-level emitter, and to see how well it could perform compared to the upper bound, we choose to distinguish the single photon from the laser pulse using the temporal degree of freedom. This approach is only practical when using very weak excitation pulses, since otherwise any small imperfections in the temporal shape of the laser pulse would very quickly overwhelm the single photon signal and reduce the single-photon purity. This filtering could be accomplished using a fast optical switch or by using detectors with a sufficiently fast gate or high temporal resolution.

The text-book model for exciting a two-level emitter treats the laser using a semi-classical approximation, where it is assumed that the state of the pulse remains in a coherent state throughout the entire light-matter interaction so that only the quantum dynamics of the emitter must be considered. However, when dealing with weak states of light, measuring a single photon scattered off the emitter can induce non-negligible quantum fluctuations in the laser pulse that may alter the quantum dynamics of the excitation process. Luckily, if we temporally filter only single photons that were emitted after the pulse has completely finished interacting with the emitter, causality ensures that the semi-classical approximation will provide an accurate analytic solution for the single-photon efficiency $\eta_1$.

A basic model for a driven two-level emitter using the semi-classical approximation is described by the quantum optical master equation \cite{Breuer2002} of the form $d\rho(t)/dt=\mathcal{L}(t)\rho(t)$ that governs the evolution of the emitter density operator $\rho(t)$. The generator superoperator is $\mathcal{L}(t)=-i\mathcal{H}(t) + \gamma\mathcal{D}_\sigma$, where $\mathcal{H}\rho = [H, \rho]$ is the action of applying the commutator with the emitter Hamiltonian $H$, $\gamma$ is the emitter decay rate, and $\mathcal{D}_\sigma\rho = \sigma\rho\sigma^\dagger - \sigma^\dagger\sigma\rho/2-\rho\sigma^\dagger\sigma/2$ is the dissipation superoperator with $\sigma=\ket{g}\bra{e}$ being the emitter lower operator. The Hamiltonian is $H(t) = \Omega(t)\sigma_x/2$, where $\sigma_x = \sigma + \sigma^\dagger$ is the Pauli $\X$ operator and $\Omega(t)$ is the excitation pulse Rabi frequency.

To consider the ideal limit for this scheme, where all multi-photon components are fully suppressed, we define $\Omega(t)$ to be a perfect square input pulse beginning at $t=0$ and ending at $t=\tau$ as illustrated in Figure \ref{fig:tls}(a). That is, we define the Rabi frequency to be
\begin{equation}
    \Omega(t) = \left\{\begin{matrix}
        \frac{\Theta}{\tau} & 0\leq t \leq \tau\\
        0 & t > \tau
    \end{matrix}\right.
\end{equation}
For ideal light-matter coupling, the Rabi frequency is related to the average number of photons $|\alpha|^2$ by $|\alpha|^2=\Theta^2/4\gamma\tau$, where $\Theta=\int |\Omega(t)|^2 dt$ is the pulse area. After the pulse excites the two-level emitter, as illustrated in Figure \ref{fig:tls}(b), we collect only the light found in the output mode for when $t> \tau$ (see Figure \ref{fig:tls}(c)).
\footnote{For the correctness of our protocol we assume that no photon is emitted for $t \leq \tau$. If a photon is emitted during that interval and filtered out, it will always result in a perfectly mixed state at the output. However, the security is not impacted by this filtering since it will at most reduce the information available to Bob.}

\begin{figure}
    \centering
    \includegraphics[width=0.90\textwidth, trim=0 6 0 13, clip]{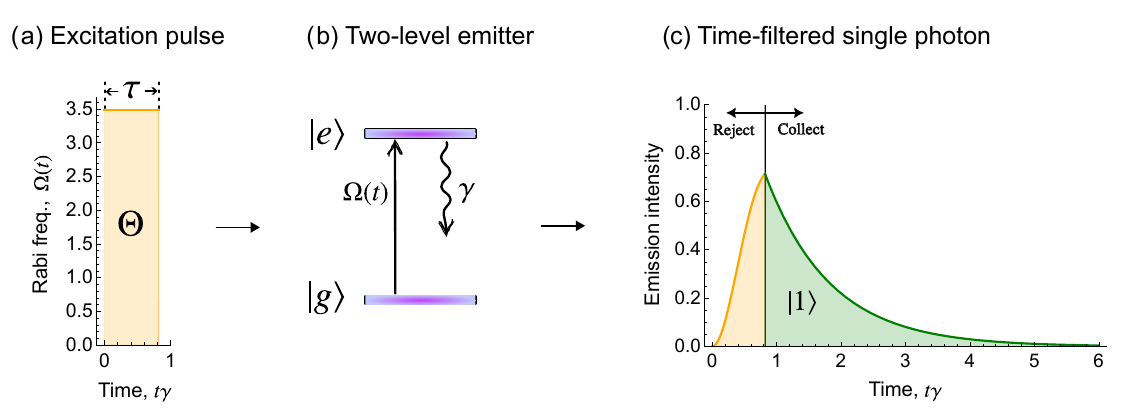}
    \caption{A resonantly driven two-level emitter for producing single photons from a weak coherent state. (a) The square excitation pulse with pulse area $\Theta$ and width $\tau = \Theta^2/(4\gamma|\alpha|^2)$ in a coherent state with average photon number $|\alpha|^2$. (b) A two-level emitter driven by the excitation pulse Rabi frequency $\Omega(t)$ and decaying with rate $\gamma$. (c) The emission intensity quantified by the occupation probability of the two-level emitter, highlighting the excitation and decay steps. By collecting emission after the pulse has finished interacting with the emitter, the emitted light is in a single photon state.}
    \label{fig:tls}
\end{figure}

Since for $t \geq \tau$, all multi-photon components will vanish, the single-photon emission probability $\eta_1$ is equal to the occupation probability of the emitter excited state at time $t=\tau$. This can be solved in a straightforward way by taking the matrix exponential of the generator $\eta_1 = \bra{e}e^{\tau\mathcal{L}}\ket{g}$, resulting in
\begin{equation}
    \eta_1 = 1 - \frac{1}{\gamma^2+2\Omega^2}\left[\gamma^2+\Omega^2+\Omega^2\left(\cos(\tau\Omega^\prime)+\frac{3\gamma}{\Omega^\prime}\sin(\tau\Omega^\prime)\right)e^{-3\gamma\tau/4}\right],
\end{equation}
where $\Omega = \Omega(\tau) = 4\gamma|\alpha|^2/\Theta$ and $\Omega^\prime=\sqrt{(\gamma/4)^2 + \Omega^2}$. 

To be sure that the two-level emitter can allow for security amplification, we must check if the probability of generating a single photon $\eta_1$ is greater than the probability of receiving two or more photons $p_{\alpha,2}$. That is, the performance of a generation scheme is determined by the maximum value of $\eta_1$ such that $\eta_1 - p_{\alpha,2}$ is positive.

In the limit $\gamma \ll \Omega$, where $|\alpha|^2\gg 1$, the expression for $\eta_1$ reduces to $\eta_1=\sin(\Theta/2)^2$, as expected from the pulse area theorem \cite{Fischer2017}. In this limit, a $\pi$ pulse corresponding to $\Theta=\pi$ is the optimal condition for excitation. 
However, this is no longer true when $\gamma\simeq \Omega$, which occurs when $|\alpha|^2$ is on the order of $1$. For example, numerically maximising $\eta_1$ over $\Theta$ for $|\alpha|^2=1$ implies $\eta_1=0.48$ using $\Theta=0.78\pi<\pi$. By plotting  $\eta_1$ maximised for over $\Theta$ as a function of $|\alpha|^2$ (Figure \ref{fig:tls_quality}), we can identify that $\eta_1=0.71$ is the upper limit on the efficiency for this single-photon generation scheme that would still allow for security amplification. This corresponds to an average photon number of $|\alpha|^2=2.5$ in the excitation pulse with width of $\tau=0.82/\gamma$ and area of $\Theta=0.91\pi$.

\begin{figure}[ht]
    \centering
    \includegraphics[width=0.50\textwidth]{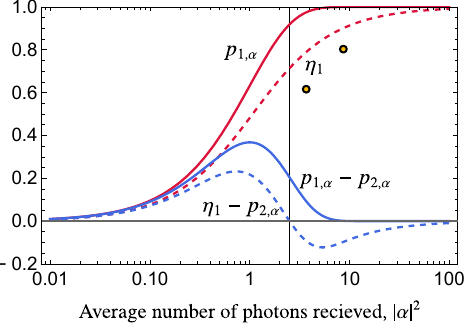}
    \caption{The single-photon efficiency $\eta_1$ (dashed red curve) and gap $\eta_1-p_{2,\alpha}$ (dashed blue curve) where a positive value $\eta_1 - p_{2,\alpha} > 0$ indicates security amplification is possible. The solid curves show the upper bounds on single-photon efficiency $p_{1,\alpha}$ and corresponding gap $p_{1,\alpha}-p_{2,\alpha}$ set by the coherent state photon number statistics.  For a two-level emitter driven by a square pulse, the maximum efficiency while allowing for security amplification is $\eta_1=0.71$ (vertical black line) corresponding to the average number of photons $|\alpha|^2=2.5$, pulse area $\Theta=0.91\pi$, and pulse width $\tau=0.82/\gamma$. The points indicate experimentally measured values from Ref.~\cite{Giesz2016}.}
    \label{fig:tls_quality}
\end{figure}

Note that one could reduce the amount of time filtering to increase $\eta_1$ at the cost of accepting more multi-photon cases that reduce the single-photon purity, which introduces an efficiency-fidelity trade-off. However, an accurate simulation of this case requires capturing the quantum fluctuations of the excitation pulse, which can be accomplished following a collision model approach \cite{Maffei2022} or a quantum cascaded interaction \cite{Kiilerich2019}.

\subsection{Saturating the upper-bound via alternative energy level structures}
\label{subsec:other-excitations}

The scheme analysed in the previous section gives a concrete idea of how to satisfy the criteria for security amplification using an excitation scheme that can be directly applied to protocols for deterministic graph state generation, but it is not the only approach to generating a single photon from a weak coherent state nor is it the most efficient. It may be possible to improve $\eta_1$ by using a cavity to control the light-matter coupling or by altering the shape of the coherent pulse. But it is not easy to see how to attain the upper bound $\eta_1 = p_{1,\alpha}$ using just a two-level emitter, or that it is even possible in general. Interestingly, it is possible to reach the upper bound by increasing the complexity of the emitter energy level structure.

\begin{figure}[ht]
    \centering
    \includegraphics[width=0.85\textwidth]{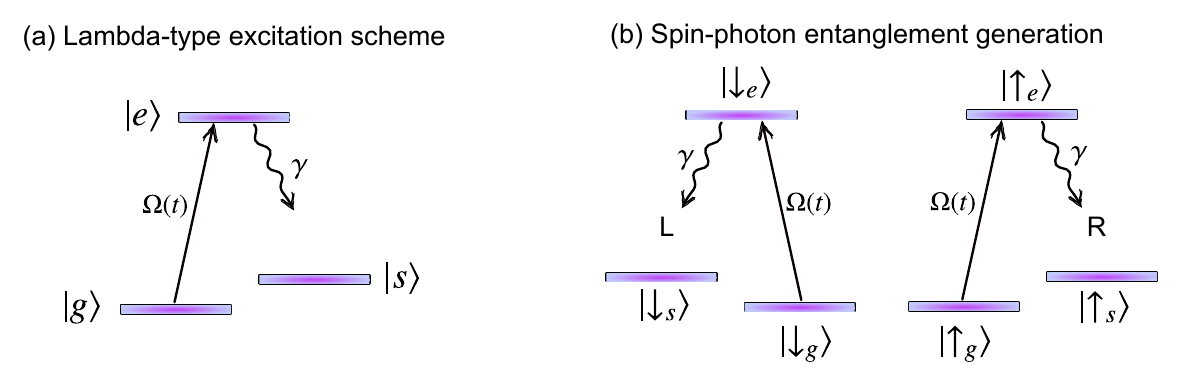}
    \caption{(a) Single-photon generation using a $\Lambda$-type emitter. The excitation and collection transitions are independent and distinguishable in resonant frequency. Excitation occurs by driving $\ket{g}$ to $\ket{e}$ with the Rabi frequency $\Omega(t)$. Emission from $\ket{e}$ to $\ket{s}$ dominates at the rate $\gamma$, decoupling the system from the laser as soon as a single photon is produced. (b) Application of the $\Lambda$-type excitation for producing spin-photon entanglement to generate photonic graph states of qubits encoded in left (L) and right (R) circular polarisation.}
    \label{fig:lambda}
\end{figure}

The upper bound $p_{1,\alpha}$ can be reached by considering a system with just one additional ground state $\ket{s}$ that is detuned from $\ket{g}$, so that we have a three-level $\Lambda$ type energy level structure (see Figure \ref{fig:lambda}(a)). If the $\Lambda$ system is initially in the ground state $\ket{g}$, a coherent pulse can be used to excite the system to state $\ket{e}$. As soon as a single photon is emitted by the system due to the decay of $\ket{e}$ to $\ket{s}$, the system is no longer coupled by the excitation pulse and thus cannot be re-excited. Therefore, by filtering the emission in frequency, the multi-photon emission probability can be fully suppressed without sacrificing the single-photon probability. This allows for moving arbitrarily close to the upper bound $p_{1,\alpha}$ by simply increasing the quality of mode matching between the excitation pulse and the excitation transition, and also by engineering the decay of the system so that the transition $\ket{e}$ to $\ket{s}$ is much more likely than the transition $\ket{e}$ to $\ket{g}$.

Although the $\Lambda$ scheme can be brought arbitrarily close to the upper bound $p_{1,\alpha}$ allowing for security amplification while asymptotically reaching perfect efficiency $\eta_1\rightarrow 1$, it is more difficult to work with and control physical systems that allows for two such $\Lambda$ configurations, which would be necessary to deterministically generate photonic graph states (see Figure \ref{fig:lambda}(b)). The excitation scheme also favours temporally long pulses, which may negatively impact computational rate. However, this level structure has already been used to generate entangled photonic graph states using a single rubidium atom in an optical cavity \cite{Thomas2022}.

\subsection{Experimental results paving the way for protocol implementations}
\label{subsec:experimental}

The excitation of quantum dots using weak laser pulses has already been realised in~\cite{Giesz2016}. They demonstrated that a coherent state with an average of $3.8$ ($8.6$) photons was able to invert the transition of a quantum dot exciton with a $62\%$ ($81\%$) probability. Comparing these values to the predictions from Figure~\ref{fig:tls_quality}, Section~\ref{subsec:two-lev}, indicates that these inversion probabilities are not yet sufficient to allow for security amplification, but the trend is favourable and shows that proof-of-concept experiments may be feasible with current technology. It is important to note however that in Ref.~\cite{Giesz2016} they use resonant excitation to maximise light-matter coupling and thus require polarisation filtering to separate the single photons from the excitation pulse. This pulse-filtering method is not compatible with the spin-polarisation entanglement which we use to generate graph states. Hence further developments and experimental proposals are required.

The Lindner-Rudolph protocol for generating spin-polarisation entanglement using quantum dots has been experimentally demonstrated in Refs.~\cite{Thomas2022, Cogan2023, Coste2023, Meng2023}. For example, in Ref.~\cite{Coste2023}, three horizontally-polarised laser pulses are used to generate a spin-photon-photon linear cluster state. A similar protocol and setup, adding only a controlled polarisation to the second pulse, can be used to perform a proof-of-concept experiment to confirm that the emission can indeed be described by the operator $E_{\rm qe}(\theta)$. However, this technique uses high-energy off-resonant excitation pulses \cite{Thomas2021}, meaning that the laser power is too high to amplify security and recover blindness.

Many approaches could be used to reduce excitation power while still producing entangled states of light. As demonstrated theoretically in Section \ref{subsec:two-lev}, a resonant excitation with a fast laser pulse and temporal filtering can, in principle, allow for security amplification. Furthermore, in the schemes described above, the excitation pulse must pass through a highly-reflective cavity mirror before interacting with the quantum emitter. Using multi-frequency excitation pulses as in the SUPER scheme~\cite{Bracht2021,Karli2022}, or exciting the source along a spatial axis which is different from the predominant emission direction, could by-pass this limitation and thus reduce the required pump power. This second approach would also allow for spatial filtering. Finally, as discussed in Section~\ref{subsec:other-excitations}, emitters with a more complex energy level structure would open new avenues to engineer energetically efficient deterministic entanglement generation schemes.

\section{Conclusion and future work}
\label{sec:dicuss}

In this paper, we presented a protocol that greatly reduces the technological requirements for performing blind and secure delegated quantum computations. Not only does the client no longer require a single qubit source or measurement device, but the server does not need to perform complex operations such as measuring the photon number without disturbing the photons' polarisation, or performing photon-photon entangling gates. All that the server requires is interacting quantum emitters which can generate spin-photon entanglement. Our protocol is information-theoretically secure, as proven in the composable security framework of Abstract Cryptography. We have also demonstrated that there are no fundamental limitations to the implementation of our protocol by analysing the emission probability of a two-level quantum emitter and comparing it to the thresholds required for correctness and security amplifications. We find that existing experimental results approach these values, indicating that our protocol could feasibly be implemented in the future.

However, the security analysis relies on very pessimistic assumptions. In particular, we assume that all information about an angle is leaked so long as a laser pulse contains more than one photon. This requires the server to isolate the photons contained in a pulse and manipulate them individually, which is challenging in itself, but it also overestimates the amount of information that the server can gain from these states. Relaxing this assumption or redefining security in a less drastic way (in terms of distinguishing advantage, or including a more comprehensive study of leakage) would yield an improved security bound. Another path would be to randomise the states even more.

Then, in the current work, we use 2D graph structures which are sufficient for universal blind quantum computing. Expanding these results to 3D graphs which also enables fault-tolerance would be an interesting direction to handle both photon losses and the server's honest processing errors~\cite{Morimae2012}. 

Finally, our analysis in Section~\ref{subsec:other-excitations} leaves a very interesting open question: which physical systems and practical excitation schemes can allow for energetically efficient deterministic photonic graph state generation to optimise security amplification? Although a definitive answer to this question goes beyond the scope of this work, we believe that frameworks such as microscopic quantum energetics \cite{Weimer2008, Hossein2015, Maffei2021} can be readily applied to solve this problem. 

\subsection*{Acknowledgements}
This work has been co-funded by the European Commission as part of the EIC accelerator program under the grant agreement 190188855 for SEPOQC project, and by the Horizon-CL4 program under the grant agreement 101135288 for EPIQUE project.

\bibliographystyle{quantum}
\bibliography{bib}

\appendix

\section{Additional preliminaries}
\label{app:prelim}

\subsection{Formal description of the blind and secure delegated protocols from Section \ref{subsec:ubqc}}
\label{app:dqc-protocols}

\begin{protocol}[ht]
\caption{Classical input UBQC}
\label{prot:UBQC}
\begin{algorithmic} [0]

\STATE \textbf{Public information:} Description of a graph $G = (V, E, I , O)$ with input vertices $I \subset V$ and output vertices $O \subset V$ and a measurement order $\preceq$ over vertices.

\STATE \textbf{Inputs:} Alice inputs a bit-string $x \in \{0, 1\}^{|I|}$, base measurement angles $\{\phi_v\}_{v \in V}$ and a flow $f$ on the graph $G$.

\STATE \textbf{Protocol:}

\begin{enumerate}
\item Alice, for each vertex $v \in V$, samples uniformly at random $\theta_v \in \Phi$ and sends $\ket{+_{\theta_v}}$ to Bob.
\item Bob receives the qubits one-by-one and applies the entangling operations $\CZ$ that correspond to the edges of the graph $G$.
\item For each qubit $v \in V$, following the partial order of the flow:
\begin{enumerate}
\item Alice samples uniformly at random $r_v \in \{0, 1\}$, calculates and sends the following measurement angle to Bob:
\begin{equation}
\delta_v = \phi'_v + \theta_v + r_v\pi + x_v\pi.
\end{equation} 
\item Bob measures in the $\ket{\pm_{\delta_v}}$ basis and returns to Alice the outcome $b_v$. Alice sets $s_v = b_v \oplus r_v$.
\end{enumerate}
\item Alice sets the bit-string $\{s_v\}_{v \in O}$ as her output.
\end{enumerate}

\end{algorithmic}
\end{protocol}

\begin{protocol}[ht]
\caption{SDQC for classical input and output $\mathsf{BQP}$ computations via UBQC repetitions}
\label{prot:SDQC}
\begin{algorithmic} [0]

\STATE \textbf{Public information:}
\begin{itemize}
\item Description of a graph $G = (V, E, I , O)$ with input vertices $I$ and output vertices $O$
\item Measurement order $\preceq$ over vertices
\item Number of repetitions $N$ and tests $t$, threshold of failed tests $w$
\item Set of stabilisers tests $\mathfrak{T}$ as described in Section~\ref{subsubsec:sdqc}
\end{itemize}

\STATE \textbf{Inputs:} Alice inputs a bit-string $x \in \{0, 1\}^{|I|}$, base measurement angles $\{\phi_v\}_{v \in V}$ and a flow $f$ on the graph $G$. We assume that the honest computation output $\mathsf{C}(\dyad{x})$, where $\mathsf{C}$ is the CPTP map defined by the measurement pattern $(G, \{\phi_v\}_{v \in V}, f, \preceq)$, follows a probability distribution which produces one bit string with probability $\frac{1}{2} + p$, for constant $p$.

\STATE \textbf{Protocol:}

\begin{enumerate}
\item Alice samples uniformly at random a set of test runs $T \subset \{1, \ldots, N\}$ of size $t$.
\item Alice and Bob run $N$ times an instance of the UBQC Protocol \ref{prot:UBQC}. For instance $i$:
\begin{itemize}
\item if $i \notin T$, Alice performs the computation described by $(G, \{\phi_v\}_{v \in V}, x, f, \preceq)$;
\item otherwise, Alice samples uniformly at random a test $\mathsf{T}$ from the set $\mathfrak{T}$ and instructs Bob to measure the qubits according to the test, in the order $\preceq$. Alice chooses at random the angle of measurement of qubits unconstrained by the test. The test passes if the parity of measurement outcomes of tested qubits is $0$ and it fails otherwise.
\end{itemize}
\item Alice checks if the number of failed tests is larger than the threshold $w$ and aborts by setting $\Ab$ as her output if it is.
\item Otherwise Alice sets as her output the bit string appearing most often among the outputs of computation rounds $i \notin T$.
\end{enumerate}

\end{algorithmic}
\end{protocol}

\subsection{Abstract cryptography framework}
\label{app:ac}
In this section, we describe the framework in which we will prove the security of our protocol. We focus on the case of two-party protocols with one-sided statistical security and refer to~\cite{Maurer2011,Dunjko2014} for a more complete formulation.

\subsubsection{Framework description}

In this framework, the purpose of a protocol $\pi$ is, given a number of available resources $\mathcal{R}$, to construct a new resource.
This new resource can itself be reused in a future protocol, which means that any protocol proven secure in this framework is inherently composable.

The actions of honest players in a given protocol are represented as a sequence of efficient CPTP maps acting on their internal quantum registers -- which may contain communication registers, both classical and quantum.  
A two-party quantum protocol is therefore described by $\pi = (\pi_A, \pi_B)$ where $\pi_X$ is the aforementioned sequence of efficient CPTP maps executed by party $X$, called the \emph{converter} of party $X$.

A \emph{resource} $\mathcal{R}$ is described as a sequence of CPTP maps with an internal state.  
It has input and output interfaces describing which parties may interact with it.  
It works by (i) having the parties send a state at each of its input interfaces; (ii) applying the specified CPTP map after all input interfaces have been initialised and (iii) outputting the resulting state at its output interfaces in a specified order. These can be represented as in Figure~\ref{fig:ac-proto}.

\begin{figure}[ht]
\centering
\resizebox{\textwidth}{!}{\subfloat[A two-party resource with one input and one output interface per party. After the inputs have been sent, the resource applies a fixed computation and returns the outputs. All these can be classical or quantum.]{
\label{fig:ac-proto-a}
\includegraphics[height=4.5cm]{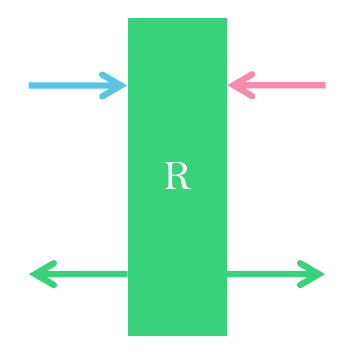}
}
\quad\quad
\subfloat[A protocol between two parties, each with their own converter labelled $\pi_A$ and $\pi_B$, which requires one call to resource $\mathcal{R}$. Each converter is a sequence of computations, quantum or classical. One computation in the sequence is applied in each round after receiving a communication from either the other party or a resource.]{
\label{fig:ac-proto-b}
\includegraphics[height=4.5cm]{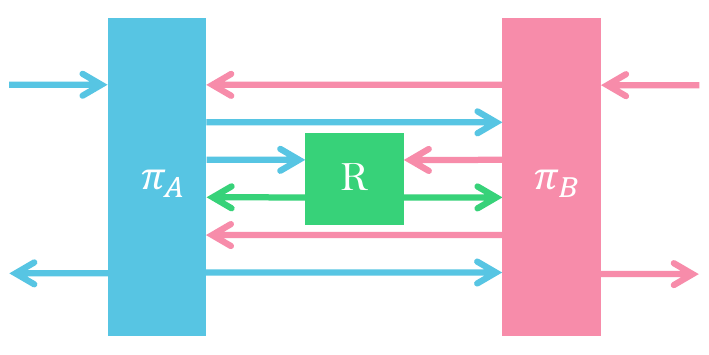}
}}
\caption{Basic building blocks of the Abstract Cryptography framework: converters representing the actions of protocol participants and resources representing existing secure functionalities.}
\label{fig:ac-proto}
\end{figure}

To define the security of a protocol, we first specify a pseudo-metric on the space of resources using a special type of converters called \emph{distinguishers}. These capture ``everything that can happen outside the protocol and interact with it''.
Its aim is to test how close the behaviours are of two resources $\mathcal{R}_0$ and $\mathcal{R}_1$ which have the same input and output interfaces, regardless of the context in which they are used.
It attaches to the inputs and outputs of one of the resources, interacts with it according to its own strategy and outputs a single bit indicating its guess as to which resource it had access to.
Two resources are said to be indistinguishable if no distinguisher can make this guess with good probability. 

\begin{definition}[Statistical indistinguishability of resources]
\label{def:indist-res}
Let $\epsilon > 0$ and $\mathcal{R}_0$ and $\mathcal{R}_1$ be two resources with same input and output interfaces.  
The resources are $\epsilon$-statistically-indistinguishable if, for all distinguishers $\mathcal{D}$, we have:

\begin{equation}
\Bigl\lvert\Pr[b = 1 \mid b \leftarrow \mathcal{D}\mathcal{R}_0] - \Pr[b = 1 \mid b \leftarrow \mathcal{D}\mathcal{R}_1]\Bigr\rvert \leq \epsilon
\end{equation}

We then write $\mathcal{R}_0 \!\underset{\epsilon}{\approx}\! \mathcal{R}_1$.

\end{definition}

The correctness of a protocol $\pi$ applied to resource $\mathcal{R}$ can be expressed as the indistinguishability between the resource $\pi_A \mathcal{R} \pi_B$ and a desired target resource $\mathcal{S}$. This can be represented as in Figure~\ref{fig:ac-cor}.

\begin{figure}[ht]
\centering
\resizebox{\textwidth}{!}{\subfloat[The constructed resource $\pi_A \mathcal{R} \pi_B$. Both participants execute honestly the actions as prescribed by the protocol, represented by both converters $\pi_A,\pi_B$ being applied to resource $\mathcal{R}$.]{
\label{fig:ac-cor-a}
\includegraphics[height=5cm]{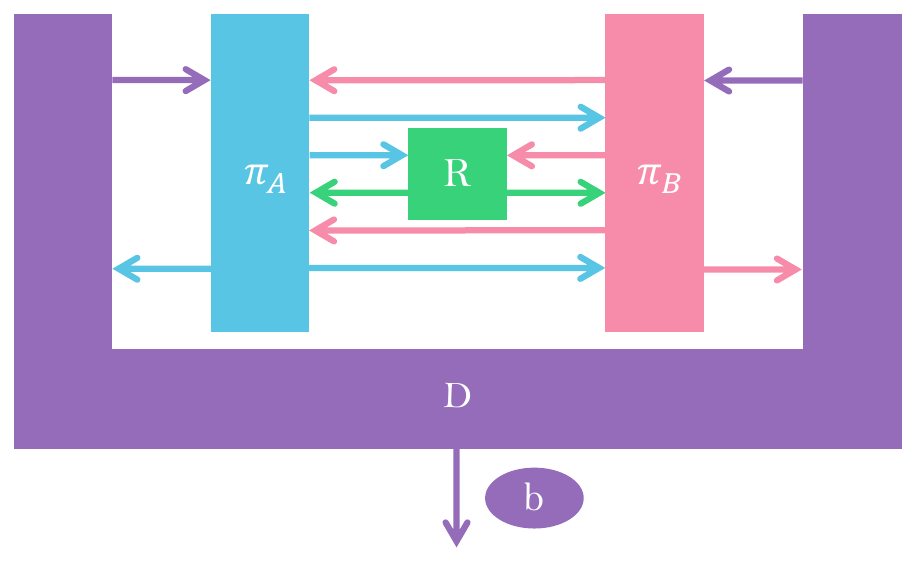}
}
\quad\quad
\subfloat[The desired target resource $\mathcal{S}$.]{
\label{fig:ac-cor-b}
\includegraphics[height=5cm]{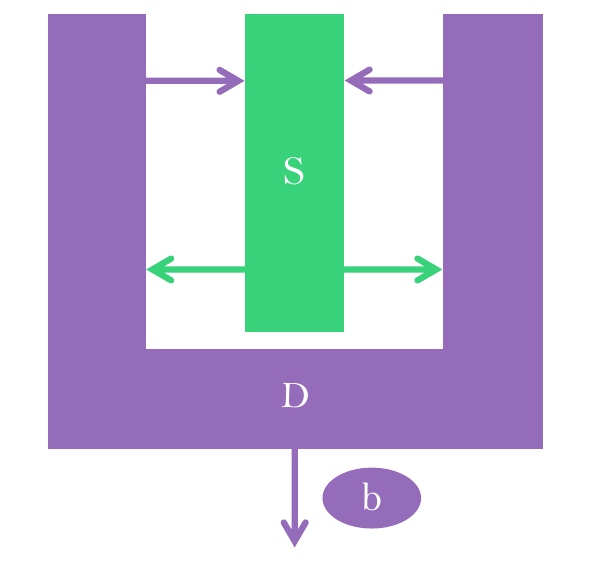}
}}
\caption{Correctness: the distinguisher tries to tell apart the real protocol and the resource which the protocol attempts to construct. The distinguisher has access only to the input and output interfaces of both participants.}
\label{fig:ac-cor}
\end{figure}

A corrupted Bob is allowed to use any other CPTP maps instead of the honest converter specified by the protocol.
This is modelled by removing Bob's converter, in which case the constructed resource becomes $\pi_A\mathcal{R}$. On the other side, there must exist a converter called a \emph{simulator} $\sigma_B$ which attaches to Bob's interfaces of $\mathcal{S}$ and aims to reproduce the transcript of honest Alice interacting with corrupted Bob. The simulator also controls any resource which the protocol uses. The distinguisher then tries to tell apart $\pi_A\mathcal{R}$ and $\mathcal{S}\sigma_B$. This is shown in Figure~\ref{fig:ac-sec}.

\begin{figure}[ht]
\centering
\resizebox{\textwidth}{!}{\subfloat[The resource $\pi_A \mathcal{R}$ with corrupted Bob. Only Alice's converter is applied, and the distinguisher can input any valid message at Bob's input interfaces. It also recovers all of Bob's incoming messages]{
\label{fig:ac-sec-a}
\includegraphics[height=4.5cm]{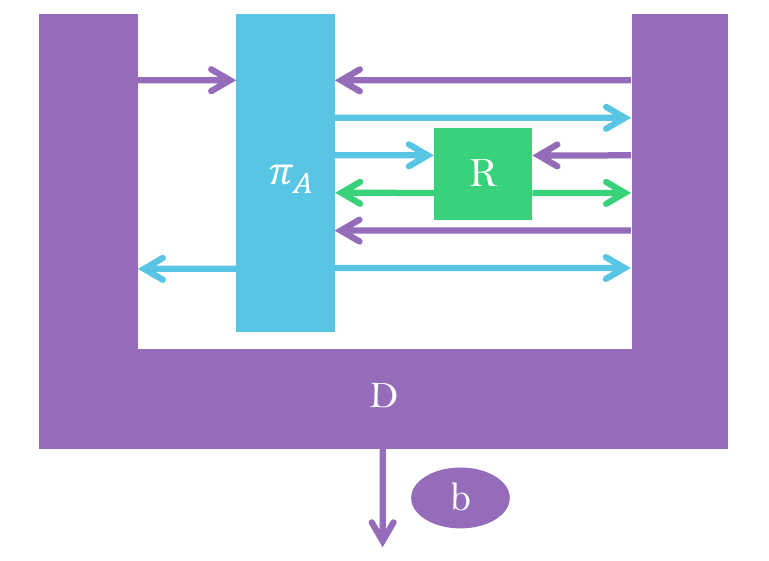}
}
\quad\quad
\subfloat[The resource $\mathcal{S}$ together with simulator $\sigma_B$. The simulator receives all of Alice's incoming messages and tries to simulate the messages which Alice would have sent to Bob in response. It has access to Bob's interfaces to resource $\mathcal{S}$, to which it must provide an input, and controls the behaviour of resource $\mathcal{R}$.]{
\label{fig:ac-sec-b}
\includegraphics[height=4.5cm]{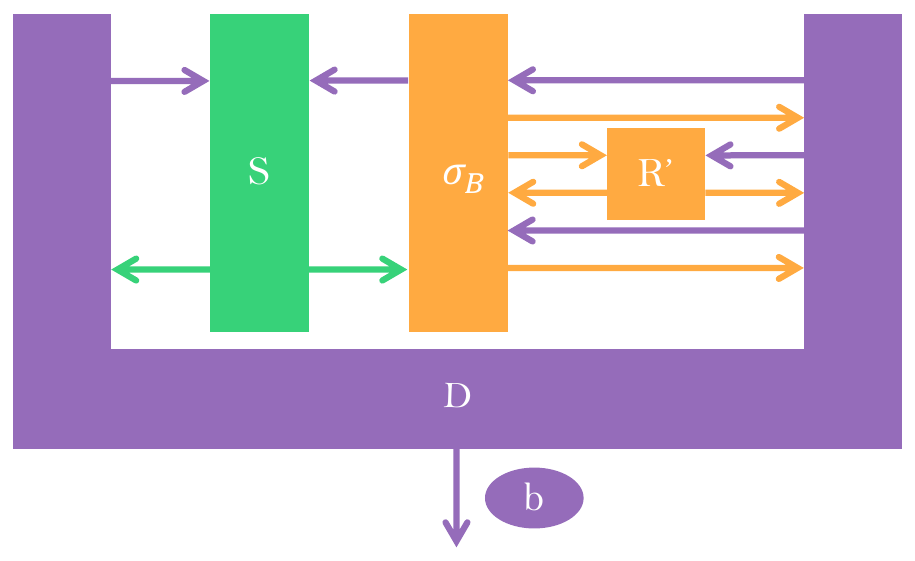}
}}
\caption{Security: the distinguisher can now feed any message of its choice to Alice or the simulator in the execution and aims to distinguish these scenarii. It now has access to the full transcript as well as Alice's input and output interfaces.}
\label{fig:ac-sec}
\end{figure}

Intuitively, if the resources $\pi_A\mathcal{R}$ and $\mathcal{S}\sigma_B$ are indistinguishable, it means that the transcript received by Bob during the execution of the protocol does not give more information about Alice's input and output than what Bob can already learn based on their own input and output (which are the only things that the simulator has access to via the resource $\mathcal{S}$).
Both the correctness and security are formalised together in Definition \ref{def:const-sec} below.

\begin{definition}[Statistical construction of resources]
\label{def:const-sec}

Let $\epsilon > 0$.  
We say that a two-party protocol $\pi$ $\epsilon$-constructs resource $\mathcal{S}$ from resource $\mathcal{R}$ against adversarial Bob if:
\begin{enumerate}
\item It is correct: $\pi \mathcal{R} \!\underset{\epsilon}{\approx}\! \mathcal{S}$;
\item It is secure against Bob: there exists a simulator $\sigma_B$ such that $\pi_A\mathcal{R} \!\underset{\epsilon}{\approx}\! \mathcal{S} \sigma_B$.
\end{enumerate}

\end{definition}

The security of a protocol guarantees that it can be replaced by the resource it constructs at a fixed cost of $\epsilon$ regardless of the context. We say that the construction is \emph{perfect} if $\epsilon = 0$.

If multiple resources are replaced, the total cost is the sum of the individual replacements. This means that secure protocols can be safely composed sequentially or concurrently as expressed in the General Composition Theorem below (Theorem 1 from~\cite{Maurer2011}).

\begin{theorem}[General composition of resources]
\label{thm:compos}

Let $\mathcal{R}$, $\mathcal{S}$ and $\mathcal{T}$ be resources, $\alpha$, $\beta$ and $\mathsf{id}$ be protocols (where protocol $\mathsf{id}$ does not modify the resource it is applied to). Let $\circ$ and $\mid$ denote respectively the sequential and parallel composition of protocols and resources. Then the following implications hold:
\begin{itemize}
\item The protocols are \emph{sequentially composable}: if $\alpha \mathcal{R} \!\underset{\epsilon_{\alpha}}{\approx}\! \mathcal{S}$ and $\beta \mathcal{S} \!\underset{\epsilon_{\beta}}{\approx}\! \mathcal{T}$ then $(\beta \circ \alpha) \mathcal{R} \!\!\!\underset{\epsilon_{\alpha} + \epsilon_{\beta}}{\approx}\!\!\! \mathcal{T}$
\item The protocols are \emph{context-insensitive}: if $\alpha \mathcal{R} \!\underset{\epsilon_{\alpha}}{\approx}\! \mathcal{S}$ then $(\alpha \mid \mathsf{id}) (\mathcal{R} \mid \mathcal{T}) \!\underset{\epsilon_{\alpha}}{\approx}\! (\mathcal{S} \mid \mathcal{T})$
\end{itemize}
\end{theorem}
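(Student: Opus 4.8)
The plan is to derive both implications from two elementary structural facts about the distinguishing pseudo-metric underlying Definition~\ref{def:indist-res}, namely $d(\mathcal{R}_0,\mathcal{R}_1) := \sup_{\mathcal{D}}\bigl\lvert \Pr[1 \leftarrow \mathcal{D}\mathcal{R}_0] - \Pr[1 \leftarrow \mathcal{D}\mathcal{R}_1]\bigr\rvert$, so that the relation $\mathcal{R}_0 \underset{\epsilon}{\approx} \mathcal{R}_1$ reads exactly $d(\mathcal{R}_0,\mathcal{R}_1)\le\epsilon$. The two facts are: (i) $d$ is a pseudo-metric — non-negative, symmetric, and subadditive, the triangle inequality following by applying the scalar triangle inequality to $\lvert\Pr[1\leftarrow\mathcal{D}\mathcal{R}_0]-\Pr[1\leftarrow\mathcal{D}\mathcal{R}_2]\rvert$ for each fixed $\mathcal{D}$ before taking the supremum; and (ii) $d$ is \emph{non-expansive} under attaching a converter to a free interface and under placing a fixed resource in parallel, i.e.
\begin{equation}
d(\gamma\mathcal{R}_0,\,\gamma\mathcal{R}_1)\;\le\; d(\mathcal{R}_0,\mathcal{R}_1), \qquad d(\mathcal{R}_0\mid\mathcal{T},\,\mathcal{R}_1\mid\mathcal{T})\;\le\; d(\mathcal{R}_0,\mathcal{R}_1),
\end{equation}
for any converter $\gamma$ and resource $\mathcal{T}$. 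Property (ii) is where the system-algebra structure is used: any distinguisher $\mathcal{D}$ for the left-hand pair becomes, upon merging $\gamma$ (respectively $\mathcal{T}$) into its internal workings, a distinguisher for $\mathcal{R}_0$ against $\mathcal{R}_1$, and since the class of distinguishers is closed under such merging, the supremum over the larger class can only be larger.

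Granting (i) and (ii), sequential composability is immediate. Assuming $\alpha\mathcal{R}\underset{\epsilon_\alpha}{\approx}\mathcal{S}$ and $\beta\mathcal{S}\underset{\epsilon_\beta}{\approx}\mathcal{T}$, property (ii) applied with $\gamma=\beta$ gives $d\bigl((\beta\circ\alpha)\mathcal{R},\beta\mathcal{S}\bigr)=d\bigl(\beta(\alpha\mathcal{R}),\beta\mathcal{S}\bigr)\le d(\alpha\mathcal{R},\mathcal{S})\le\epsilon_\alpha$, and then the triangle inequality gives $d\bigl((\beta\circ\alpha)\mathcal{R},\mathcal{T}\bigr)\le\epsilon_\alpha+\epsilon_\beta$. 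For context-insensitivity, I would first use that $\mathsf{id}$ acts trivially and that converters act interface-wise on a parallel composite to rewrite $(\alpha\mid\mathsf{id})(\mathcal{R}\mid\mathcal{T})=(\alpha\mathcal{R})\mid\mathcal{T}$, and then invoke the second inequality in (ii) with $\mathcal{R}_0=\alpha\mathcal{R}$, $\mathcal{R}_1=\mathcal{S}$ to conclude $d\bigl((\alpha\mathcal{R})\mid\mathcal{T},\,\mathcal{S}\mid\mathcal{T}\bigr)\le\epsilon_\alpha$, which is the claim.

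If one wants the full construction notion of Definition~\ref{def:const-sec} (correctness together with a simulator $\sigma_B$) rather than the bare $\underset{\epsilon}{\approx}$ relation, the same skeleton is run twice — once on the real side as above, and once on the ideal side, where for sequential composition one must additionally exhibit a composed simulator (e.g.\ $\sigma^{\beta\circ\alpha}_B=\sigma^\alpha_B\circ\sigma^\beta_B$) and verify it has the correct interface signature, after which non-expansiveness plus the triangle inequality close the argument verbatim. The main obstacle is therefore not any analytic estimate but the formal bookkeeping that legitimises the first paragraph: checking that converters, resources and distinguishers form an associative composition algebra in which both ``glue a converter onto a free interface of a distinguisher'' and ``absorb a fixed parallel resource into a distinguisher'' yield objects that are again distinguishers, and that the re-associations $(\beta\circ\alpha)\mathcal{R}=\beta(\alpha\mathcal{R})$ and $(\alpha\mid\mathsf{id})(\mathcal{R}\mid\mathcal{T})=(\alpha\mathcal{R})\mid\mathcal{T}$ denote literally the same composite system. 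This is precisely what is established abstractly in~\cite{Maurer2011}, so in the present paper it suffices to cite it rather than reprove it.
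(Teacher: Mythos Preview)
The paper does not prove this theorem at all: it is stated as ``Theorem~1 from~\cite{Maurer2011}'' and used as a black box. Your proof sketch is correct and is essentially the standard argument (pseudo-metric plus non-expansiveness under attaching converters/parallel resources), and you yourself anticipate in the final sentence that citing~\cite{Maurer2011} is what the paper does.
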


Combining the two properties presented above yields concurrent composability (the distinguishing advantage cumulates additively as well). 

Note that the AC framework is equivalent to the Quantum Universal Composability (Q-UC) Model of~\cite{Unruh2010} if a single Adversary controls all corrupted parties -- which is the case in this work. Therefore any protocol which is secure in the Q-UC model is also secure in the AC model considered here.

\subsubsection{Useful resources and security results}

We will describe in this section three resources, the first being used to construct the second and third via Protocols \ref{prot:UBQC} and \ref{prot:SDQC} respectively, for which we give the security results for completeness.

Resource \ref{res:rsp} (Remote State Preparation or RSP) allows Alice to prepare a quantum state on a device held by Bob. A protocol constructing this resource essentially replaces a direct quantum channel between Alice and Bob for a subset of quantum states. We focus on the states used in the UBQC protocol, i.e.~blind graph states $\ket{G(\vec \theta)}$.

\begin{resource}[ht]
\caption{Remote Blind Graph State Preparation}
\label{res:rsp}

\begin{algorithmic}[0]

\STATE \textbf{Inputs:} Alice inputs a graph $G = (V, E)$ and a set of angles $\vec \theta \in \Phi^{\abs{V}}$.

\STATE \textbf{Computation by the resource:} The resource prepares and sends the state $\ket{G(\vec \theta)}$ to Bob, along with the description of the graph $G$.

\end{algorithmic}
\end{resource}

This resource is usually constructed by having Alice send one $\ket{+_\theta}$ state for each vertex $V$ in the graph, while Bob performs the entangling operations.

The second resource allows Alice to perform a quantum computation on Bob's device while at the same time controlling the amount of information that is leaked. However Bob can alter the outcome of the computation in any way that they desire and Alice cannot verify that Bob acted as they should have during the protocol. We describe this resource in the special case where the computation is expressed as an MBQC pattern. Note that the Alice's inputs and outputs are classical here because all qubits are measured.\footnote{The UBQC protocol also provides blindness guarantees if Alice has quantum inputs and outputs but we will not require this additional capability in this work.}

\begin{resource}[ht]
\caption{Blind Delegated Quantum Computation}
\label{res:bqc}
\begin{algorithmic}[0]

\STATE \textbf{Inputs:} 
\begin{itemize}
\item Alice inputs the description of a graph $G = (V, E, I , O)$ with input vertices $I \subset V$ and output vertices $O \subset V$, a measurement order $\preceq$ over vertices, a bit-string $x \in \{0, 1\}^{|I|}$, base measurement angles $\{\phi_v\}_{v \in V}$ and a flow $f$ on the graph $G$.
\item Bob chooses whether or not to deviate via two control bits $(e, c)$ (set to $0$ for honest behaviour). If $c = 1$, Bob has an additional input CPTP map $\mathsf F$ and state $\rho_B$.
\end{itemize}

\STATE \textbf{Computation by the resource:}
\begin{enumerate}
\item If $e = 1$, the resource sends $V, E, \preceq$ to Bob's interface.
\item If $c = 0$, it outputs $\mathcal{M}_O (\mathsf{U}(\dyad{x}))$ at Alice's output interface, where $\mathsf{U}$ is the unitary defined by the measurement pattern $(G, \{\phi_v\}_{v \in V \setminus O}, f, \preceq)$ before the measurements on the output qubits $O$, and $\mathcal{M}_O$ performs the measurements defined by $\{\phi_v\}_{v \in O}$ and outputs the outcome. Otherwise, it waits for the additional input and outputs $\mathcal{M}_C (\Tr_B(\mathsf F(\dyad{x}\otimes\rho_B)))$ at Alice's interface, where $\mathcal{M}_C$ is a computational basis measurement.
\end{enumerate}

\end{algorithmic}
\end{resource}

We can relate these first two resources if we replace the first step of the UBQC protocol (Alice sending the rotated qubits to Bob) by a call to the RSP Resource~\ref{res:rsp}. The following theorem from~\cite{Dunjko2014} then captures the security guarantees of the UBQC Protocol \ref{prot:UBQC} in the Abstract Cryptography Framework.

\begin{theorem}[Security of Universal Blind Quantum Computation]
\label{thm:sec-ubqc}
The UBQC Protocol \ref{prot:UBQC} perfectly constructs the Blind Delegated Quantum Computation Resource~\ref{res:bqc} from one instance of the Remote Blind Graph State Preparation Resource~\ref{res:rsp}.
\end{theorem}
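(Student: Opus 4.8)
The plan is to verify the two conditions of Definition~\ref{def:const-sec}: perfect correctness, $\pi_A\,\mathcal R\,\pi_B \approx_0 \mathcal S$ (where $\mathcal R$ is Resource~\ref{res:rsp} and $\mathcal S$ is Resource~\ref{res:bqc}), and the existence of a simulator $\sigma_B$ with $\pi_A\,\mathcal R \approx_0 \mathcal S\,\sigma_B$. Here $\pi = (\pi_A,\pi_B)$ is Protocol~\ref{prot:UBQC} with its state-preparation step (step~1, together with Bob's entangling step~2) replaced by a single call to the RSP resource, which hands Bob the blind graph state $\ket{G(\vec\theta)} = \RZ(\vec\theta)\ket{G}$ for the uniformly random $\vec\theta \in \Phi^{\abs V}$ sampled by $\pi_A$.

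For correctness I would unfold the honest execution. Using Eq.~\eqref{eq_theta_G} and the fact that measuring qubit $v$ of $\RZ(\vec\theta)\ket G$ in the $\ket{\pm_{\delta_v}}$ basis is equivalent, up to the $\RZ(\theta_v)$ factor already present, to measuring the unrotated graph-state qubit at angle $\delta_v - \theta_v = \phi'_v + r_v\pi + x_v\pi$, one checks that: the $\theta_v$ in $\delta_v$ cancels the rotation introduced during state preparation; the term $r_v\pi$ only flips Bob's reported bit $b_v$, which Alice undoes by setting $s_v = b_v\oplus r_v$; the term $x_v\pi$ encodes the classical input on input vertices; and the flow-updated angle $\phi'_v$, together with the byproduct operators produced on neighbours, reproduces exactly the MBQC pattern $(G,\{\phi_v\}_v,f,\preceq)$. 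Proceeding by induction along $\preceq$, Alice's output string $\{s_v\}_{v\in O}$ (after the flow corrections on the output layer) is distributed as $\mathsf C(\dyad x)$, matching Resource~\ref{res:bqc} in the honest case $c=0$; since the graph is public, the $e$-branch carries no extra information. Hence correctness holds with error $0$.

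For security I would exhibit the standard UBQC simulator. Against a corrupted Bob (played by the distinguisher), $\sigma_B$ first emulates the RSP resource by sending Bob the maximally mixed state $\Id/2^{\abs V}$ --- equivalently, $\ket{G(\vec\theta')}$ for a fresh uniform $\vec\theta'$; then, processing vertices in the order $\preceq$, it sends a uniformly random $\delta_v \in \Phi$ and records Bob's reply $b_v$; finally it sets the resource's control bit $c = 1$ and feeds $\mathcal S$ the deviation $(\mathsf F, \rho_B)$, where $\rho_B$ encodes the run it just performed (the $\delta_v$'s, the $b_v$'s, and Bob's effective channel on this instance) and $\mathsf F$ replays, on input $\dyad x$, the honest Alice-side post-processing that turns $\{b_v\}$ into the output string $\{s_v\}_{v\in O}$.

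The remaining --- and main --- step is to prove $\pi_A\mathcal R \approx_0 \mathcal S\sigma_B$ by showing the distinguisher's view is \emph{identical} in both experiments. Two facts drive this. First, averaging $\ket{+_\theta}\bra{+_\theta}$ over $\theta\in\Phi$ gives $\Id/2$, so $\ket{G(\vec\theta)}$ with uniform $\vec\theta$ is maximally mixed, matching $\sigma_B$'s state. Second, for each $v$ the real-world angle $\delta_v = \phi'_v + \theta_v + (r_v+x_v)\pi$ is uniform over $\Phi$ because $\theta_v$ is; moreover --- and this is the delicate point, since $\delta_v$ is correlated with Bob's register through $\theta_v$ --- conditioning on a value of $\delta_v$ leaves that register still maximally mixed, because the two values of $\theta_v$ compatible with $\delta_v$ (differing by $\pi$ as $r_v$ ranges over $\{0,1\}$) average $\ket{+_{\theta_v}}\bra{+_{\theta_v}}$ to $\Id/2$ via $\ket{+_{\theta+\pi}} = \ket{-_\theta}$. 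Carrying this through the adaptive rounds shows that the joint classical--quantum state of (Bob's register, all $\delta_v$, all $b_v$) coincides in the two worlds; since Alice's honest output is a fixed function of $\{b_v\}$ and her local randomness, and $\mathsf F$ is built to be exactly that function, the final output registers coincide as well, giving error $0$. I expect the bulk of the work to be this round-by-round bookkeeping of the adaptive transcript and checking that $\mathsf F$, which $\sigma_B$ assembles using only the public graph data and the observed transcript, reproduces Alice's output distribution exactly; the underlying indistinguishability is the content of the UBQC security proofs of~\cite{Broadbent2009, Dunjko2014}.
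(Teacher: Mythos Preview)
The paper does not prove Theorem~\ref{thm:sec-ubqc}; it is stated in Appendix~\ref{app:ac} as a known result attributed to~\cite{Dunjko2014} (see the sentence immediately preceding the theorem), so there is no in-paper argument to compare against.

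Your sketch is the standard proof from that reference: correctness by cancelling each $\theta_v$ in the announced angle and following the flow; security via a simulator that outputs the maximally mixed state in place of $\ket{G(\vec\theta)}$ and uniformly random $\delta_v$'s, with the $r_v\pi$ one-time pad ensuring that conditioning on $\delta_v$ still leaves Bob's qubit maximally mixed. The only place your outline is looser than the full argument in~\cite{Dunjko2014} is the construction of the deviation map $\mathsf F$. You describe $\mathsf F$ as ``replaying Alice's post-processing that turns $\{b_v\}$ into $\{s_v\}$'', but $s_v = b_v \oplus r_v$ and the simulator has no $r_v$; any attempt to back out $r_v$ from the chosen $\delta_v$'s requires the flow-corrected angles $\phi'_v$, which depend on the base angles $\{\phi_v\}$ that the simulator never learns from Resource~\ref{res:bqc}. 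In~\cite{Dunjko2014} this is handled not by replaying Alice but by a circuit-rewriting argument: Bob's arbitrary CPTP maps are commuted past the quantum one-time-pad encryption so that they act directly on Alice's (encoded) input/output, and the resulting map is shown to be independent of Alice's secrets. That rewriting is what lets the simulator specify $(\mathsf F,\rho_B)$ from the transcript and the public graph data alone, and it is the piece your sketch would still need to supply.
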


The final resource allows Alice to perform a computation on Bob's device while both controlling its leakage and ensuring that it has been performed correctly. Here Bob can only decide whether Alice should receive the correct outcome or abort. We describe it again in terms of MBQC.

\begin{resource}[ht]
\caption{Secure Delegated Quantum Computation}
\label{res:sqc}
\begin{algorithmic}[0]

\STATE \textbf{Inputs:} 
\begin{itemize}
\item Alice inputs the description of a graph $G = (V, E, I , O)$ with input vertices $I$ and output vertices $O$, a measurement order $\preceq$ over vertices, a bit-string $x \in \{0, 1\}^{|I|}$, base measurement angles $\{\phi_v\}_{v \in V}$ and a flow $f$ on the graph $G$.
\item Bob chooses whether or not to deviate via two control bits $(e, c)$ (set to $0$ for honest behaviour).
\end{itemize}

\STATE \textbf{Computation by the resource:}
\begin{enumerate}
\item If $e = 1$, the resource sends $V, E, \preceq$ to Bob's interface.
\item If $c = 0$, it outputs $\mathcal{M}_O (\mathsf{U}(\dyad{x}))$ at Alice's output interface, defined as in Resource~\ref{res:bqc}. Otherwise, if $c = 1$, it outputs $\Ab$ at Alice's interface.
\end{enumerate}

\end{algorithmic}
\end{resource}

We can again relate the first and third resources via Protocol \ref{prot:SDQC} and the following theorem from \cite{Kapourniotis2023}.

\begin{theorem}[Security of Secure Delegated Quantum Computation]
\label{thm:sec-sdqc}
Let $N$ be the number of executions of the UBQC Protocol \ref{prot:UBQC} in Protocol \ref{prot:SDQC}. Then there exists values of $c$ and $w$ such that Protocol \ref{prot:SDQC} executed using these parameters $\epsilon_{\mathit{SDQC}}$-constructs the Secure Delegated Quantum Computation Resource~\ref{res:sqc} for classical input and output $\mathsf{BQP}$ computations from $N$ instances of the Remote Blind Graph State Preparation Resource~\ref{res:rsp}, for $\epsilon_{\mathit{SDQC}}$ exponentially decreasing in $N$.
\end{theorem}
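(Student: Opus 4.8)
The plan is to verify the two clauses of Definition~\ref{def:const-sec} — correctness and security against a corrupted Bob — with a single exponentially small bound $\epsilon_{\mathit{SDQC}}$. First I would peel off the state-preparation layer: by Theorem~\ref{thm:sec-ubqc} together with repeated use of the composition Theorem~\ref{thm:compos}, executing the $N$ UBQC subroutines of Protocol~\ref{prot:SDQC} on $N$ copies of the Remote Blind Graph State Preparation Resource~\ref{res:rsp} is perfectly equivalent to $N$ independent calls to the Blind Delegated Quantum Computation Resource~\ref{res:bqc}. It therefore suffices to show that the purely classical wrapper of Protocol~\ref{prot:SDQC} — sample the test set $T$, feed each of the $N$ BDQC calls either the real pattern or a random test drawn from $\mathfrak{T}$, then threshold the failed tests and take a majority over computation rounds — constructs the Secure Delegated Quantum Computation Resource~\ref{res:sqc} with error $\epsilon_{\mathit{SDQC}}$. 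Correctness is then immediate: against an honest Bob every stabiliser test returns parity $0$ deterministically (since $\mathsf{S}_{\vec t}\ket{G}=\ket{G}$), so no test fails and Alice does not abort, while each of the $N-t$ computation rounds samples independently from $\mathsf{C}(\dyad{x})$, so a Chernoff bound makes the majority equal to the weight-$\tfrac{1}{2}+p$ outcome except with probability $\exp(-\Omega((N-t)p^{2}))$.

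For the security clause I would build the simulator $\sigma_B$ which internally runs Alice's classical wrapper on a fixed dummy input while controlling the $N$ copies of Resource~\ref{res:bqc}. The crucial point is that in Resource~\ref{res:bqc} a corrupted Bob's deviation map $\mathsf{F}_i$ is supplied with no access to the measurement angles $\{\phi_v\}$, the input $x$, or the flow $f$, hence with no information about whether round $i$ is a test or a computation, nor about the pattern itself; indeed Bob receives nothing from the $N$ resources beyond public graph data, so his $\mathsf{F}_i$ may be taken as chosen in advance and independent of $T$ and of the patterns. The simulator therefore reproduces Bob's view perfectly with the dummy input, reads off the maps $\mathsf{F}_i$ handed to it, recomputes whether the real Alice would cross the failure threshold $w$, and accordingly inputs $c=1$ (abort) or $c=0$ to Resource~\ref{res:sqc}; in the latter case Resource~\ref{res:sqc} itself emits the honest $\mathsf{C}(\dyad{x})$. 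The only event on which a distinguisher can separate the real protocol from the ideal resource with this simulator is that the real Alice fails to abort yet outputs a string different from $\mathsf{C}(\dyad{x})$, so the distinguishing advantage is bounded by the probability of that event, which is what remains to estimate.

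To estimate it I would fix Bob's strategy $\{\mathsf{F}_i\}_{i=1}^{N}$ and attach to each round a harmfulness weight $q_i\in[0,1]$ measuring how much $\mathsf{F}_i$ displaces a computation round's output distribution from the honest one. The soundness property of the generalised stabiliser test family of~\cite{Kapourniotis2022} — that every deviation able to corrupt a classical output flips the parity of a constant fraction of the tests in $\mathfrak{T}$, which is precisely the guarantee that~\cite{Kapourniotis2023} restores after the gap exposed in~\cite{Kapourniotis2021} — yields a constant $c_0>0$ with $\Pr[\text{round }i\text{ fails}\mid i\in T]\ge c_0 q_i$. Writing $Q=\sum_i q_i$ and averaging over the uniformly random $t$-subset $T$, the expected number of failed tests is $\ge c_0(t/N)Q$ while the expected number of corrupted computation rounds is $\le Q$. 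A dichotomy closes the argument. If $Q\ge\tau_0 N$ for a suitable constant $\tau_0$, then with the parameters chosen so that $c_0(t/N)\tau_0 N>w$, a Hoeffding/hypergeometric-tail bound shows the number of failed tests exceeds $w$ except with probability $\exp(-\Omega(N))$, and Alice aborts. If instead $Q<\tau_0 N$, then at most a $\tau_0$-fraction of computation rounds are corrupted, so for $\tau_0$ small relative to $p$ a Chernoff bound shows that the majority over the $N-t$ computation rounds still equals the heavy outcome of $\mathsf{C}(\dyad{x})$ except with probability $\exp(-\Omega(N))$. The ``existence of values of $c$ and $w$'' in the statement is exactly the choice of the test-round parameter and the failure threshold making these two tails simultaneously exponentially small; a union bound over them gives $\epsilon_{\mathit{SDQC}}=\exp(-\Omega(N))$.

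I expect the genuine obstacle to be the soundness property invoked in the third step: proving that the generalised test family $\mathfrak{T}$ catches \emph{every} output-corrupting deviation with a constant probability, uniformly over all deviations and all vertices of $G$, and doing so concretely for the brickwork and cluster graphs that make MBQC universal. This is the subtle point that the earlier construction~\cite{Kapourniotis2021} got wrong and that~\cite{Kapourniotis2023} repairs via the generalised tests of~\cite{Kapourniotis2022}, and re-establishing it is the bulk of the work. A secondary difficulty is the joint calibration of the test parameter $c$ and the threshold $w$: $w$ must be large enough to absorb the tolerated level of honest noise so that the honest protocol does not abort, yet small enough that any genuinely harmful $Q=\Omega(N)$ is caught, and one has to check these two demands are simultaneously satisfiable — which forces the per-test detection constant $c_0$ and the honest-noise rate to be separated by a constant margin.
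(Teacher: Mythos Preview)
The paper does not actually prove Theorem~\ref{thm:sec-sdqc}; it is quoted verbatim as ``the following theorem from~\cite{Kapourniotis2023}'' and used as a black box to feed into the composition that yields Theorem~\ref{thm:semi-classical-sdqc}. There is therefore no in-paper proof to compare your attempt against.

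That said, your outline is a faithful high-level reconstruction of the strategy that~\cite{Kapourniotis2023} (building on~\cite{Leichtle2021,Kapourniotis2022}) actually uses: reduce via Theorem~\ref{thm:sec-ubqc} and composition to a classical wrapper around $N$ calls to Resource~\ref{res:bqc}; exploit that the deviation maps $\mathsf{F}_i$ supplied through Resource~\ref{res:bqc} see only public graph data and are hence blind to the test/computation partition; and then run a threshold-vs-majority dichotomy with concentration bounds. You have correctly located the only nontrivial ingredient --- the soundness constant $c_0$ guaranteeing that every output-corrupting deviation is detected by a uniformly random test from $\mathfrak{T}$ with probability bounded away from zero --- and correctly flagged that establishing this for $\X$--$\Y$-plane-only tests on brickwork/cluster graphs is precisely the contribution of~\cite{Kapourniotis2023}. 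One point to tighten if you flesh this out: the $\mathsf{F}_i$ can be chosen adaptively by the distinguisher across rounds, so ``chosen in advance'' is too strong; what you need (and what blindness actually gives) is that the joint distribution of the $\mathsf{F}_i$ is independent of the hidden partition $T$ and of the hidden angles, which is enough for the Hoeffding/Serfling argument over the random choice of $T$.
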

 
\section{Proof of Theorem~\ref{thm:sec-graph-rsp}}
\label{app:sec-proof-graph}

We prove here the following security result from Section~\ref{subsec:blind-graph-rsp}.

\newcounter{tempresult}
\setcounter{tempresult}{\value{theorem}}
\setcounter{theorem}{\value{count:sec-graph}-1}
\begin{theorem}[AC security of Protocol~\ref{prot:graph-rsp}]
Protocol~\ref{prot:graph-rsp} perfectly constructs in the Abstract Cryptography framework the Blind Graph RSP Resource~\ref{res:rsp} from $\abs{V}$ instances of the Blind Graph State Extender Resource~\ref{res:graph-extend}.
\end{theorem}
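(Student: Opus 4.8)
The plan is to establish the two requirements of Definition~\ref{def:const-sec} with error $\epsilon = 0$: correctness, $\pi_A\,\mathcal{R}\,\pi_B \underset{0}{\approx} \mathcal{S}$, and security against a corrupted Bob, i.e.\ the existence of a simulator $\sigma_B$ with $\pi_A\,\mathcal{R} \underset{0}{\approx} \mathcal{S}\,\sigma_B$, where $\mathcal{R}$ stands for the $\abs{V}$ copies of Resource~\ref{res:graph-extend}, $(\pi_A,\pi_B)$ for Protocol~\ref{prot:graph-rsp} and $\mathcal{S}$ for Resource~\ref{res:rsp}. A useful first observation is that in Protocol~\ref{prot:graph-rsp} Alice never sends a classical message to Bob outside the calls to Resource~\ref{res:graph-extend} — her converter merely forwards $\theta_v$ to the $v$-th instance — and she has no output; hence in both scenarios everything the distinguisher sees is the public graph $G$ together with the quantum state delivered at Bob's interface, so it suffices to show that these states are prepared with exactly the same law.

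For correctness I would trace the honest run through its Pauli frame. The honest protocol is a Clifford network — the emission operators are $\CNOT$s, the edges are realised by $\CZ$s or fusions, the Hadamards advance each emitter to its next vertex — interleaved with the resource's random bits $b_v$, Pauli-basis measurements (the $\Z$-basis spin measurements of step~3 and the $\ket{\pm}$ measurements of leftover qubits in step~5) and classically-controlled Pauli corrections. Using the standard commutation rules ($\X$ through $\CNOT$, $\X$ through $\CZ$ producing a $\Z$, $\X$ through $\Ha$ producing a $\Z$), one checks that the correction $\X^{b_v}$ on the two qubits returned by the Extender cancels the sign in $\RZ((-1)^{b_v}\theta_v)$, that the by-product produced by $\X^{b_v}$ on the emitter qubit — which is $\CZ$-linked to the preceding vertex — is exactly the $\Z^{b_v}$ correction of step~2(b), and that the outcomes $c_q$, $d_v$ together with the corrections $\Z^{c_q}$, $\Z^{d_v}$ remove all remaining by-products while disentangling the auxiliary qubits and collapsing the redundantly-encoded vertices. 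Since $\RZ(\theta_v)$ commutes with every $\CZ$, the accumulated rotations factor out as $\RZ(\vec\theta)$ acting on $\ket{G}$, so Bob's final state is deterministically $\ket{G(\vec\theta)}$, which is precisely Resource~\ref{res:rsp}'s output; correctness is therefore perfect.

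For security I would exhibit the following simulator $\sigma_B$. Once the distinguisher has supplied Alice's input, $\mathcal{S}$ hands $\sigma_B$ the state $\ket{G(\vec\theta)} = \big(\prod_{(v,w)\in E}\CZ_{v,w}\big)\bigotimes_{v\in V}\RZ(\theta_v)\ket{+}_v$ and the description of $G$; the simulator applies $\prod_{(v,w)\in E}\CZ_{v,w}$ (it knows $G$) to disentangle it into the product state $\bigotimes_{v\in V}\ket{+_{\theta_v}}$, labelling the qubit carrying $\ket{+_{\theta_v}}$ the token $t_v$. When the distinguisher feeds a one-qubit state $\rho$ into the $v$-th instance of Resource~\ref{res:graph-extend}, $\sigma_B$ prepares a fresh qubit in $\op{0}$, applies $\CNOT$ from $\rho$ onto it, and then gate-teleports the $\Z$-rotation carried by $t_v$ onto this photon by applying a $\CNOT$ from the photon onto $t_v$ and measuring $t_v$ in the computational basis with outcome $b_v$. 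A one-line check shows that $b_v$ is uniform and independent of $\rho$ and $\theta_v$ (the teleported $\CNOT$ acts on the $\op{+}$-type ancilla $t_v$), and that, conditioned on $b_v$, the photon has received exactly $\RZ((-1)^{b_v}\theta_v)$; thus $\sigma_B$ returns the two qubits together with $b_v$, which is \emph{verbatim} the output law of Resource~\ref{res:graph-extend}. Consuming one token per call answers all $\abs{V}$ queries, and the joint state handed to the distinguisher in $\mathcal{S}\,\sigma_B$ coincides with the one in $\pi_A\,\mathcal{R}$, giving $\pi_A\,\mathcal{R} \underset{0}{\approx} \mathcal{S}\,\sigma_B$.

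I expect the main obstacle to be the correctness bookkeeping rather than the simulator: one must verify that the step~2(b) corrections, and then the decorrelation and leftover-qubit steps, are the exact Pauli compensation for the random bit $b_v$ for \emph{every} position of a vertex in its emitter's list and through all the subsequent entangling and Hadamard steps, so that no uncorrected by-product survives. The only delicate point on the security side is confirming that the gate-teleportation outcome is unbiased and independent of the distinguisher's choices, so that it can legitimately play the role of the resource's uniformly random bit $b$; as noted, this holds because the ancilla entering the teleported $\CNOT$ is in a $\op{+}$-type state.
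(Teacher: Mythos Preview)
Your proposal is correct and follows essentially the same approach as the paper: the correctness argument tracks the Pauli by-products of the random bits $b_v$ and the measurement outcomes through the Clifford network (the paper does this via an explicit induction on the vertices, you phrase it in Pauli-frame language, but the content is identical), and your simulator is the same as the paper's --- disentangle $\ket{G(\vec\theta)}$ into $\bigotimes_v\ket{+_{\theta_v}}$ and gate-teleport each $\RZ(\theta_v)$ onto Bob's input via a $\CNOT$ and a $\Z$-measurement of the token --- with the only cosmetic difference that you apply the emitter-to-photon $\CNOT$ before the teleportation step whereas the paper applies it after.
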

\setcounter{theorem}{\value{tempresult}}

We prove separately the correctness in case of honest Bob, and the security against malicious Bob.

\begin{proof}[Proof of correctness]

We prove the result by showing that the desired blind graph state grows as desired during the iteration over the vertices in the order $<$ specified by Bob. We assume that at the start of the step generating vertex $v$ using quantum emitter $q$, the state in Bob's possession is equal to:
\begin{equation}
\CZ_{q, v'}\ket{+}_q\ket{\tilde{G}_{< v}(\vec \theta)},
\end{equation}
where $v'$ is the vertex generated using quantum emitter $q$ right before $v$, i.e.~$v' \in V_q$ with $v' < v$ and $(v, v') \in E$, and $\ket{\tilde{G}_{< v}(\vec \theta)}$ is a redundantly-encoded blind graph state associated to the subgraph induced by the subset of vertices $V_{< v} = \{w \in V \mid w < v\}$. This is true at the start of the protocol, where the set of previously generated vertices is empty and all quantum emitters are initialised in state $\ket{+}$.

After calling the Blind Graph Extender Resource, the state in Bob's hands is equal to
\begin{equation}
\CZ_{q, v'}(\ket{00}_{q, v} + e^{i(-1)^{b_v} \theta_v}\ket{11}_{q, v})\ket{\tilde{G}_{< v}(\vec \theta)}.
\end{equation}
Bob then applies the corrections $\X^{b_v}$ to qubits $q$ and $v$ and $\Z^{b_v}$ to qubit $v'$. Commuting the operator $\X^{b_v}$ on qubit $q$ through the $\CZ$ cancels out the $\Z^{b_v}$ correction, yielding the state:
\begin{equation}
\CZ_{q, v'}(\ket{00}_{q, v} + e^{i\theta_v}\ket{11}_{q, v})\ket{\tilde{G}_{< v}(\vec \theta)}.
\end{equation}
Bob then generates $n_v$ additional qubits by applying the emission operator to quantum emitter $q$, for an $n_v$ of his choice:
\begin{equation}
\CZ_{q, v'}(\ket{0}_{q, v}^{\otimes n_v} + e^{i\theta_v}\ket{1}_{q, v}^{\otimes n_v})\ket{\tilde{G}_{< v}(\vec \theta)}.
\end{equation}
He performs entanglement operations between $v$ and all qubits $w \in V_{< v} \setminus \{v'\}$ such that $(v, w) \in E$, which yields the following state for a value $n'_v \leq n_v$:
\begin{equation}
\prod_{\substack{(v, w) \in E \\ w < i}}\CZ_{v, w}(\ket{0}_{q, v}^{\otimes n'_v} + e^{i\theta_v}\ket{1}_{q, v}^{\otimes n'_v})\ket{\tilde{G}_{< v}(\vec \theta)}.
\end{equation}
Bob then applies a Hadamard gate to quantum emitter $q$. If there exists $v'' \in V_q$ such that $v < v''$, this moves the quantum emitter $q$ to the next vertex $v''$ as follows:
\begin{equation}
\CZ_{q, v}\ket{+}_q\ket{\tilde{G}_{< v''}(\vec \theta)}.
\end{equation}
If such a vertex does not exist, the subgraph then contains all vertices which should have been generated by quantum emitter $q$. In both cases, the redundantly-encoded graph state has increased in size by integrating vertex $v$ and all links between $v$ and previously generated vertices.

At the end of the vertex generation process, the state in Bob's register is therefore:
\begin{equation}
\prod_q \CZ_{q, v_q}\ket{+}_q\ket{\tilde{G}(\vec \theta)},
\end{equation}
where $v_q$ is the final qubit in the list $V_q$. Bob then measures each spin qubit $q$ in the computational basis, with outcome $c_q$. The effect of measurement outcome $1$ is removed by applying the correction $\Z$ on a qubit of vertex $v_q$. The result is $\ket{\tilde{G}(\vec \theta)}$, which is redundantly encoded. This redundancy is removed by measuring all qubits but one in each vertex in the $\ket{\pm}$ basis, recording the parity of the sum $d_v$ and applying $Z^{d_v}$ to the remaining qubit in vertex $v$. The result is $\ket{G(\vec \theta)}$, which concludes the proof.

\end{proof}

\begin{proof}[Proof of security]

To prove security against malicious Bob, we construct a simulator which has a single oracle access to the Blind Graph RSP resource and must emulate the communication of Alice without knowing which blind graph state Alice has chosen to send. We then show that this exchange is perfectly indistinguishable to any unbounded distinguisher from a communication with an honest Alice.

The simulator is described in Simulator~\ref{sim:mal-bob-graph} below.

\begin{simulator}[ht]
\caption{Malicious Bob}
\label{sim:mal-bob-graph}

\begin{enumerate}
\item The simulator performs the call to the Blind Graph RSP Resource~\ref{res:rsp} and receives a state $\ket{G(\vec \theta)}$ for an unknown value of $\vec \theta$, and the description of the graph $G = (V, E)$.
\item It applies $\prod_{(v, w) \in E} \CZ_{v, w}$ to the state $\ket{G(\vec \theta)}$, thus recovering the product state $\bigotimes_{v \in V} \ket{+_{\theta_v}}$.
\item For each vertex $v$ in the graph, the simulator emulates the behaviour of the Blind Graph State Extender Resource~\ref{res:graph-extend} as follows:
\begin{enumerate}
\item It receives a single-qubit state from Bob.
\item It performs a $\CNOT$ gate between Bob's qubit and the qubit associated to vertex $v$ in the state received from the Blind Graph RSP Resource.
\item It measures the qubit received from the Blind Graph RSP Resource in the computational basis, let $b_{s, v}$ be the measurement outcome.
\item It initialises a qubit in state $\ket{0}$ and applies a $\CNOT$ to Bob's qubit and this newly created qubit.
\item It sends these two qubits to Bob along with bit $b_{s, v}$.
\end{enumerate}
\item The simulator then stops.
\end{enumerate}

\end{simulator}

We now show that this interaction is perfectly equivalent to the one with honest Alice during an execution of the protocol with malicious Bob. The only interaction happens through the Blind Graph State Extender Resource~\ref{res:graph-extend} and we prove that the simulator defined above perfectly emulates its behaviour. We do this by analysing the input/output relation and show that the produced states are equal.

We denote $\rho_{qe, v}$ the state supplied by Bob to the resource at the $v$\textsuperscript{th} execution. In the real case the output state is $\RZ((-1)^{b_v}\theta_v)\CNOT(\rho_{qe, v}\otimes\op{0})$, together with classical bit $b_v$. We show that the simulation yields the same state in the case of pure states, which then generalises to mixed states. We write $\ket{\psi_v}_{qe} = \alpha_v\ket{0} + \beta_v\ket{1}$ for Bob's input. After step 3b the state can be written as:
\begin{equation}
\CNOT\ket{\psi}_{qe, v}\ket{+_{\theta_v}} = (\alpha_v\ket{0} + \beta_v e^{i\theta_v}\ket{1})\ket{0} + (\alpha_v e^{i\theta_v}\ket{0} + \beta_v\ket{1})\ket{1}.
\end{equation}
Therefore, after obtaining measurement result $b_{s, v}$ on the second qubit, the simulator has in its possession the state $\alpha_v\ket{0} + \beta_v e^{i(-1)^{b_{s, v}}\theta_v}\ket{1}$. After step 3d, the state sent by the simulator is then:
\begin{equation}
\alpha_v\ket{0}\ket{0} + \beta_v e^{i(-1)^{b_{s, v}}\theta_v}\ket{1}\ket{1},
\end{equation}
which is exactly identical to $\RZ((-1)^{b_v}\theta_v)\CNOT\ket{\psi_v}_{qe}\ket{0}$. This concludes the proof.
\end{proof}

\section{Proof of Theorem~\ref{thm:sec-gadget}}
\label{app:sec-proof}

We prove here the following security result from Section \ref{subsec:amplification}. 

\setcounter{tempresult}{\value{theorem}}
\setcounter{theorem}{\value{count:sec}-1}
\begin{theorem}[AC security of Protocol~\ref{prot:ghz-gadget}]
Let $\eta_1$ be the probability that the quantum emitter generates a photon after receiving a laser pulse of intensity $\abs{\alpha}^2$ and let $p_{\alpha, 2} = 1 - e^{-\abs{\alpha}^2} - \abs{\alpha}^2 e^{-\abs{\alpha}^2}$ be the probability that the laser pulse contains two or more photons. Let $(\alpha, n)$ be the public parameters used in Protocol~\ref{prot:ghz-gadget}, and let $t = \frac{\eta_1 + p_{\alpha, 2}}{2}n$. Then Protocol~\ref{prot:ghz-gadget} $\exp(-\nu_{\alpha}n)$-constructs in the Abstract Cryptography framework the Blind Graph State Extender Resource~\ref{res:graph-extend}, for $\nu_\alpha = (\eta_1 - p_{\alpha, 2})^2/2$.
\end{theorem}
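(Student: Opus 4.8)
The plan is to verify both clauses of Definition~\ref{def:const-sec} for Protocol~\ref{prot:ghz-gadget} — correctness against an honest Bob and security against a malicious Bob — each with error at most $\exp(-\nu_\alpha n)$; I work throughout in the regime $\eta_1>p_{\alpha,2}$ needed for the bound to be meaningful.

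\textbf{Correctness.} I would trace through an honest run using the Poisson decomposition $\rho_{\alpha,\theta_i}=e^{-|\alpha|^2}\sum_k\tfrac{|\alpha|^{2k}}{k!}\op{k}_{\theta_i}$. On pulse $i$ the honest Bob applies $E_{\rm qe}(\theta_i)=\RZ(\theta_i)E_{\rm qe}$ with probability $\eta_1$ and emits nothing otherwise, so after the $n$ pulses and the extra unrotated emission the spin and the detected photons form a GHZ state carrying the phase $\sum_{i\in S}\theta_i$, where $S$ is the random set of pulses that yielded a photon. Measuring the photonic qubits $i\in S$, $i\ne 0$, in the $\ket{\pm}$ basis leaves the spin and photon $0$ in the state $\tfrac1{\sqrt2}\big(\ket{00}+e^{i(\sum_{i\in S}\theta_i+b\pi)}\ket{11}\big)$ with $b=\bigoplus_{i\in S}b_i$, and Bob's correction $\RZ(\bar\theta+b\pi)$ with $\bar\theta=(-1)^{m_x}\theta-\sum_{i\in S}\theta_i$ cancels the spurious phases to yield exactly $\RZ((-1)^{m_x}\theta)\CNOT(\op{+}\otimes\op{0})$; the same computation holds verbatim for an arbitrary input state in place of $\op{+}$ and matches the output of Resource~\ref{res:graph-extend} once its internal bit is identified with the uniform bit $m_x$. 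The run therefore agrees with the resource unless it aborts, i.e.\ unless $|S|\le t$; since $|S|\sim\mathrm{Bin}(n,\eta_1)$ and $t=\tfrac{\eta_1+p_{\alpha,2}}{2}n$ lies strictly below the mean $\eta_1 n$, Hoeffding's inequality gives $\Pr[|S|\le t]\le\exp(-2(\eta_1-t/n)^2 n)=\exp(-\nu_\alpha n)$.

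\textbf{Security.} For a malicious Bob I would first pass to the paper's pessimistic leakage model, which can only help the distinguisher: a genie measures each pulse's photon number $k_i$ and additionally hands $\theta_i$ to Bob whenever $k_i\ge 2$, whereas a pulse with $k_i\le 1$ carries no more about $\theta_i$ than the single $\theta$-encoding state that Resource~\ref{res:graph-extend} already gives Bob. I then build the simulator $\sigma_B$: it samples the $\theta_i$ uniformly and the $k_i$ with the correct Poisson statistics, sends the pulse states (and the revealed $\theta_i$ when $k_i\ge 2$) to Bob, receives $S$, and relays $\Ab$ whenever $|S|\le t$ exactly as honest Alice would. Otherwise, writing $L=\{i:k_i\ge 2\}$, it queries Resource~\ref{res:graph-extend} with Bob's input qubit, reads the resource's bit $b$, sets $m_x:=b$, forwards the resource's two-qubit output to Bob, and produces $\bar\theta$; on the event $S\not\subseteq L$ there is an index $i^\ast\in S\setminus L$ whose unknown angle one-time-pads $\bar\theta=(-1)^{m_x}\theta-\sum_{i\in S}\theta_i$ in this model, so $\sigma_B$ can output $\bar\theta$ with the correct conditional distribution without knowing Alice's secret $\theta$. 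One then checks that the real interaction with honest Alice and the simulated one are identically distributed — as joint distributions over the pulse states, the classical messages $S,\bar\theta,m_x$ and the final two-qubit register — on the whole event $\{|L|\le t\}$, because if $|L|\le t$ then any non-aborting $S$ has $|S|>t\ge|L|$ and hence $S\not\subseteq L$. The distinguishing advantage is therefore at most $\Pr[|L|>t]$, and since $|L|\sim\mathrm{Bin}(n,p_{\alpha,2})$ with $t=\tfrac{\eta_1+p_{\alpha,2}}{2}n$ strictly above the mean $p_{\alpha,2}n$, Hoeffding gives $\Pr[|L|>t]\le\exp(-2(t/n-p_{\alpha,2})^2 n)=\exp(-\nu_\alpha n)$. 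Assembling the two clauses via Definition~\ref{def:const-sec} gives the theorem.

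\textbf{Main obstacle.} The delicate part is the security direction, and in particular two coupled points: (i) making the pessimistic leakage model precise enough that a $\le 1$-photon pulse can genuinely be simulated from the resource's single $\theta$-encoding output — this is what lets $\sigma_B$ touch $\theta$ only through the resource, and it is where the roles of the extra unrotated photon $0$ and of the $\ket{\pm}$ measurements must be exploited; and (ii) verifying that $\sigma_B$ reproduces the joint distribution of pulses, $S$, $\bar\theta$, $m_x$ and the output register \emph{exactly} on the good event, which requires carefully tracking how the un-leaked angle $\theta_{i^\ast}$ pads $\bar\theta$ and how Bob's input/output registers are threaded through the Abstract Cryptography bookkeeping (in particular the identification $m_x=b$). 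Once the model is fixed, the honest trace-through and the two Hoeffding estimates are routine.
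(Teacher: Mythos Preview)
Your correctness argument is fine and matches the paper's Lemma~\ref{lem:cor}: trace the honest run, then bound the abort probability by Hoeffding on $|S|\sim\mathrm{Bin}(n,\eta_1)$.

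The security argument, however, has a real gap in the simulator construction. Your $\sigma_B$ commits each single-photon pulse to a concrete state $\ket{+_{\theta_i}}$ with a simulator-chosen angle, and only afterwards tries to output $\bar\theta$ ``with the correct conditional distribution'' by invoking a one-time-pad through $\theta_{i^\ast}$. That argument works only for the \emph{marginal} law of $\bar\theta$; it fails for the joint law of $(\text{pulse}_{i^\ast},\bar\theta)$. In the real execution one has $\bar\theta=(-1)^{m_x}\theta-\sum_{i\in S}\theta_i$, so the distinguisher --- who knows $\theta$, $m_x$, and in the pessimistic model all $\theta_i$ with $i\in S\setminus\{i^\ast\}$ --- can solve for $\theta_{i^\ast}$ and rotate the $i^\ast$-th pulse qubit by $-\theta_{i^\ast}$, obtaining $\ket{+}$. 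Under your simulator, the same computation yields the state $\ket{+_{-(-1)^{m_x}\theta}}$, because the $\bar\theta$ you can produce without knowing $\theta$ is off by exactly $(-1)^{m_x}\theta$. This is perfectly distinguishable for $\theta\neq 0$. The step ``forwards the resource's two-qubit output to Bob'' does not repair this: that output is not part of Alice's messages in $\pi_A\mathcal{R}$, so the simulator's outer interface must consist only of the pulses, $S$, $\bar\theta$, $m_x$; exposing an extra quantum register would itself distinguish the two worlds.

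What is missing is precisely the mechanism you flag as obstacle (i) but never supply: the simulator must be able to place the unknown $\theta$ \emph{inside one of the pulses}, and to do so \emph{after} learning $S$. The paper achieves this by a deferred-commitment trick: for every pulse with $k_i=1$ it sends half of a fresh EPR pair instead of a committed $\ket{+_{\theta_i}}$ (indistinguishable at that stage, since both are maximally mixed). Once $S$ arrives, it measures its retained halves in rotated bases to fix all but one of those angles, and then teleports the resource's $\theta$-encoded qubit into the last remaining EPR pair at some index $s\in S$ with $k_s=1$. The teleportation by-products $m_{s,X},m_{s,Z}$ determine $m_x$ and the correction $\bar\theta$, and only now do the single-photon pulses acquire definite angles, one of which genuinely carries $\theta$. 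With this in place the paper shows the joint state is \emph{identical} to the real execution on the good event, and bounds the bad event by the same Hoeffding estimate on $|L|\sim\mathrm{Bin}(n,p_{\alpha,2})$ that you wrote down.
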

\setcounter{theorem}{\value{tempresult}}

To do this we will require the following result for binomial distributions.

\begin{lemma}[Hoeffding's inequality for the binomial distribution] \label{lem:hoeffding_binomial}
	Let $X\sim\operatorname{Binomial}(n,p)$ be a binomially distributed random variable. For any $k \leq np$ it then holds that
	\begin{align*}
		\Pr \left[ X \leq k \right] \leq \exp \left( -2 (p-k/n)^2 n \right).
	\end{align*}
	Similarly, for any $k \geq np$ it holds that
	\begin{align*}
		\Pr \left[ X \geq k \right] \leq \exp \left( -2 (p-k/n)^2 n \right).
	\end{align*}
\end{lemma}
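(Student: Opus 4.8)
The plan is to recognise $X$ as a sum of independent Bernoulli trials and apply the standard exponential-moment (Chernoff) method, which reduces the tail bound to controlling a single-variable moment generating function. I would write $X = \sum_{i=1}^n Y_i$ with the $Y_i$ independent and identically distributed, $Y_i \in \{0,1\}$ and $\mathbb{E}[Y_i] = p$, so that $\mathbb{E}[X] = np$. It suffices to prove the upper-tail inequality, since the lower-tail inequality follows by applying it to $n - X \sim \operatorname{Binomial}(n, 1-p)$.

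For the upper tail I would fix $k \geq np$, set $t = k - np \geq 0$, and use Markov's inequality on $e^{\lambda(X - np)}$ for an arbitrary $\lambda > 0$:
\begin{equation}
\Pr[X \geq k] \leq e^{-\lambda t}\,\mathbb{E}\!\left[e^{\lambda(X - np)}\right] = e^{-\lambda t}\prod_{i=1}^n \mathbb{E}\!\left[e^{\lambda(Y_i - p)}\right],
\end{equation}
where the factorisation uses independence. The central input is Hoeffding's lemma: since each $Y_i - p$ is centred and supported in an interval of length $1$ (namely $[-p, 1-p]$), one has $\mathbb{E}[e^{\lambda(Y_i - p)}] \leq e^{\lambda^2/8}$. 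This gives $\Pr[X \geq k] \leq \exp(-\lambda t + n\lambda^2/8)$.

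I would then optimise over $\lambda$: the exponent is minimised at $\lambda = 4t/n$, yielding $\Pr[X \geq k] \leq \exp(-2t^2/n)$. Substituting $t = k - np$ and rewriting $(k-np)^2/n = (p - k/n)^2 n$ gives exactly the claimed upper-tail bound. For the lower tail, with $k \leq np$ I note $n - k \geq n(1-p)$ and apply the upper-tail bound to $n - X$ at threshold $n - k$; since $(1-p) - (n-k)/n = k/n - p$, the resulting exponent is again $-2(p - k/n)^2 n$, and $\Pr[X \leq k] = \Pr[n - X \geq n - k]$ delivers the first inequality.

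The only nontrivial step, and hence the main obstacle, is Hoeffding's lemma, which I would prove for completeness. Setting $Z = Y_i - p \in [a,b]$ with $b - a = 1$ and $\mathbb{E}[Z] = 0$, I consider $\psi(\lambda) = \ln \mathbb{E}[e^{\lambda Z}]$ and check $\psi(0) = \psi'(0) = 0$. The quantity $\psi''(\lambda)$ is the variance of $Z$ under the exponentially tilted measure proportional to $e^{\lambda Z}$; as this tilted variable is still supported in $[a,b]$, Popoviciu's inequality bounds its variance by $(b-a)^2/4 = 1/4$. A Taylor expansion with the Lagrange remainder then gives $\psi(\lambda) \leq \lambda^2/8$, completing the argument. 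Everything apart from this convexity estimate is a direct and routine Chernoff computation.
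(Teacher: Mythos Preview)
Your argument is correct: writing $X$ as a sum of i.i.d.\ Bernoulli variables, applying the Chernoff bound, invoking Hoeffding's lemma for variables supported on an interval of length~$1$, and then optimising over $\lambda$ is the standard derivation, and your symmetry reduction for the lower tail is fine.

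By way of comparison, the paper does not actually prove this lemma at all: it is stated as a known auxiliary result and then immediately applied in the correctness and security lemmata (Lemmas~\ref{lem:cor} and~\ref{lem:sec}). So your write-up goes further than the paper by supplying a self-contained proof, including a sketch of Hoeffding's lemma via the tilted-measure variance bound. That is perfectly acceptable and arguably more informative; just be aware that in the paper's context this statement is treated as standard background rather than as something requiring its own argument.
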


We start by proving two lemmas, analysing separately the correctness in case of honest Bob, and the security against malicious Bob. We then recombine both to prove the result of the theorem.

\begin{lemma}[Correctness with honest parties]
\label{lem:cor}
Let $\eta_1$ be the probability that the quantum emitter generates a photon after receiving a laser pulse of intensity $\abs{\alpha}^2$. Let $t \leq n \eta_1$, Protocol~\ref{prot:ghz-gadget} is $\epsilon_{cor}$-correct for $\epsilon_{cor} = \exp \left( -2 \left(\eta_1 - t/n \right)^2 n \right)$.
\end{lemma}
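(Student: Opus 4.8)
The plan is to unpack the definition of correctness in the Abstract Cryptography framework and observe that, since Protocol~\ref{prot:ghz-gadget} with honest Bob deterministically produces the correct output state whenever Alice does not abort, the only source of distinguishing advantage between $\pi_A \mathcal{R} \pi_B$ and the Blind Graph State Extender Resource~\ref{res:graph-extend} is the event that Alice aborts. So the first step is to argue that conditioned on $\abs{S} > t$, the output state on Bob's side is exactly $\RZ((-1)^{m_x}\theta)\CNOT(\rho_{qe}\otimes\op{0})$ together with a uniformly random bit — i.e.\ taking $b = m_x$, the honest run realises the resource perfectly. This follows by tracking the state through steps 2 and 3 of the protocol: each successful pulse $i\in S$ applies $\RZ(\theta_i)$, the GHZ collapse measurements in the $\ket{\pm}$ basis together with the parity correction $b\pi$ remove the photonic qubits and the accumulated $\sum_{i\in S}\theta_i$ phase is cancelled by the correction angle $\bar\theta = (-1)^{m_x}\theta - \sum_{i\in S}\theta_i$, leaving precisely $\RZ((-1)^{m_x}\theta)$ applied to the remaining Bell pair between the spin and photon $0$. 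The bit $m_x$ is sampled uniformly, matching the resource's random bit $b$.

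Second, I would bound the abort probability. With honest Bob, each of the $n$ pulses independently applies the rotated emission operator (producing a detectable photon) with probability $\eta_1$, so $\abs{S} \sim \operatorname{Binomial}(n, \eta_1)$. Alice aborts exactly when $\abs{S} \leq t$. Since $t \leq n\eta_1$ by hypothesis, Lemma~\ref{lem:hoeffding_binomial} applies directly with $k = t$ and $p = \eta_1$, giving $\Pr[\abs{S} \leq t] \leq \exp(-2(\eta_1 - t/n)^2 n)$.

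Finally, I would combine the two: for any distinguisher $\mathcal{D}$, the two resources behave identically except on the abort event, whose probability is at most $\exp(-2(\eta_1 - t/n)^2 n)$ regardless of $\mathcal{D}$'s strategy, so the distinguishing advantage is bounded by this quantity. Hence the protocol is $\epsilon_{cor}$-correct with $\epsilon_{cor} = \exp(-2(\eta_1 - t/n)^2 n)$. The only mildly delicate point — and the one I would be most careful about — is the bookkeeping in the state-tracking step, specifically verifying that the parity correction $b\pi$ correctly compensates for the Pauli byproducts of the $\ket{\pm}$-basis measurements on the GHZ qubits and that the emitter-photon entanglement survives in exactly the form the resource demands; this is routine but is where a sign or index error would hide. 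The probabilistic estimate itself is immediate from the stated Hoeffding bound.
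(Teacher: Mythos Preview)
Your proposal is correct and follows essentially the same approach as the paper: track the state through the protocol to show that, conditioned on $\abs{S} > t$, honest Bob holds exactly $\RZ((-1)^{m_x}\theta)\CNOT(\rho_{qe}\otimes\op{0})$ with the resource's random bit identified as $m_x$, and then bound the abort probability via Hoeffding on $\abs{S}\sim\operatorname{Binomial}(n,\eta_1)$. The paper merely presents these two pieces in the opposite order and writes out the intermediate GHZ state explicitly for a pure input, but the argument is the same.
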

\begin{proof}

We first compute the probability that Alice has not aborted the protocol. We later show that in this case, the state is exactly the one desired by Alice up to local operations on Bob's side.

The protocol fails if and only if there is an abort by Alice, which happens if not enough pulses generated a qubit via the quantum emitter.\footnote{This is valid if we assume lossless components after preparation for the photons generated using Alice's pulses, which consist only of measurements in the $\X$ basis.}
We abuse notation and denote $\abs{S}$ the random variable counting the number of photons measured by Bob. Therefore we can define the correctness error as:
\begin{equation}
\epsilon_{cor} = \Pr[\Ab] = \Pr[\abs{S} \leq t]
\end{equation}
The probability that a photon has been emitted using one of Alice's pulses is $\eta_1$. 
The random variable $\abs{S}$ follows a binomial distribution of parameters $(\eta_1, n)$. For $t \leq n \eta_1$, we use Hoeffding's inequality (recalled in Lemma~\ref{lem:hoeffding_binomial}) to bound the abort probability as follows:
\begin{equation}
\epsilon_{cor} = \Pr[\abs{S} \leq t] \leq \exp \left( -2 \left(\eta_1 - t/n \right)^2 n \right).
\end{equation}

We now assume that $\abs{S} > t$. Bob inputs a state $\rho_{qe}$ and we must show that the output state of the protocol is the same as if the operator $\RZ(-(1)^{m_x}\theta)E_{\rm qe}$ had been applied to $\rho_{qe}$. We show this for a single-qubit pure state $\ket{\psi}_{qe} = \alpha\ket{0} + \beta\ket{1}$, and this will automatically generalise to mixed states.

Since Bob can choose the excitation scheme and power to emit the last photon, we assume that the final photon has been emitted as well. Based on the description of the effect of the emission operator $E_{\rm qe}$ in Section~\ref{subsec:graph-gen} together with $E_{\rm qe}(\theta_i) = \RZ(\theta_i)E_{\rm qe}$, we know that after step 2 of Protocol~\ref{prot:ghz-gadget}, the spin-photon state in Bob's device can be written as:
\begin{equation}
\RZ\left(\sum_{i \in S} \theta_i\right)\left(\alpha\ket{0}^{\abs{S}+2} + \beta\ket{1}^{\abs{S}+2}\right) = \RZ((-1)^{m_x}\theta - \bar{\theta})\left(\alpha\ket{0}^{\abs{S}+2} + \beta\ket{1}^{\abs{S}+2}\right),
\end{equation}
with the state including both the polarisation and the spin qubits and $\bar{\theta} = (-1)^{m_x}\theta - \sum_{i \in S}\theta_i$. The pulses that did not lead to a photon emission do not contribute to this state.

Measuring all polarisation qubits generated via pulses sent by Alice in the $\X$ basis with outcomes $b_i$ gives the following state for the spin and remaining polarisation encoded qubit (the one generated by Bob):
\begin{equation}
\alpha\ket{0}_{qe}\ket{0}_{ph} + (-1)^b e^{i(-1)^{m_x}\theta - i\bar{\theta}}\beta\ket{1}_{qe}\ket{1}_{ph},
\end{equation}
where $b = \bigoplus_{i \in S} b_i$. After applying the final correction $\RZ(\bar{\theta} + b\pi)$ to the polarisation qubit, Bob is in possession of the following final state:
\begin{equation}
\ket{\psi_\theta}_{qe, ph} := \alpha\ket{0}_{qe}\ket{0}_{ph} + e^{i(-1)^{m_x}\theta}\beta\ket{1}_{qe}\ket{1}_{ph},
\end{equation}
which is exactly the state we would have obtained from applying $\RZ(-(1)^{m_x}\theta)E_{\rm qe}$ to $\ket{\psi}$.

Starting with the mixed state $\rho_{qe}$, we therefore obtain the mixed state:
\begin{equation}
\rho_{cor} := \epsilon_{cor}\dyad{\Ab} + (1 - \epsilon_{cor})\op{m_x}\otimes\rho_{\theta, qe, ph},
\end{equation}
as Bob's outcome, where $\rho_{\theta, qe, ph} := \RZ(-(1)^{m_x}\theta)\CNOT\left(\rho_B\otimes\op{0}\right)$. This $\rho_{cor}$ state is $\epsilon_{cor}$-close to the state produced by the Blind Graph Extender resource, which concludes the proof.

\end{proof}

\begin{lemma}[Security against malicious Bob]
\label{lem:sec}
Let $p_{\alpha, 2} := 1 - e^{-\abs{\alpha}^2} - \abs{\alpha}^2 e^{-\abs{\alpha}^2}$ and $t \geq n p_{\alpha, 2}$, Protocol~\ref{prot:ghz-gadget} is $\epsilon_{sec}$-secure against malicious Bob for $\epsilon_{sec} = \exp \left( -2 \left(t/n - p_{\alpha, 2} \right)^2 n \right)$.
\end{lemma}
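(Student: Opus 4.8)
The plan is to produce a simulator $\sigma_B$ and show that, against an arbitrary malicious Bob, the real execution $\pi_A\mathcal{R}$ and the ideal one $\mathcal{S}\sigma_B$ — with $\mathcal{S}$ the Blind Graph State Extender Resource~\ref{res:graph-extend} — are $\epsilon_{sec}$-indistinguishable. For the security direction it is enough for $\sigma_B$ to reproduce the messages Alice sends to Bob, namely the $n$ pulses and the final correction, since the distinguisher's view consists of these together with Bob's own registers; the output of $\mathcal{S}$ never enters the distinguishing game against a malicious Bob, so $\sigma_B$ may feed $\mathcal{S}$ a fixed dummy input qubit. Concretely, $\sigma_B$ samples $\theta_1,\dots,\theta_n$ uniformly in $\Phi$ and sends the phase-randomised pulses $\rho_{\alpha,\theta_i}$ exactly as honest Alice would; it then waits for $S$, simulates the abort when $\abs{S}\le t$, and otherwise returns $(\bar\theta,m_x)$ where $m_x$ is set equal to the bit $b$ output by $\mathcal{S}$ (which is uniform) and $\bar\theta$ is sampled uniformly at random from $\Phi$. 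To upper-bound the leakage I would work in the pessimistic model of Section~\ref{subsec:amplification}: Bob is handed for free the value $\theta_i$ of every pulse whose photon number $k_i$ is at least $2$; security in this model implies it for the true light--matter interaction a fortiori.

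Next I would isolate the single event on which the simulation can fail. Let $\mathcal{E}$ be the event that $\abs{S}>t$ (no abort) and $S\subseteq\{i : k_i\ge2\}$, i.e.\ every index Bob returns corresponds to a pulse that leaked its angle. On $\neg\mathcal{E}$ there is some $i^\star\in S$ with $k_{i^\star}\le1$: if $k_{i^\star}=0$ Bob learned nothing about $\theta_{i^\star}$, and if $k_{i^\star}=1$ he holds a single copy of $\ket{+_{\theta_{i^\star}}}$, which averaged over the uniform $\theta_{i^\star}\in\Phi$ is the maximally mixed state; in both cases $\theta_{i^\star}$ is, conditioned on Bob's entire view (including the adaptively chosen $S$), uniform over $\Phi$ and independent of everything else. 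Since $\Phi$ is a group under addition modulo $2\pi$, isomorphic to $\mathbb{Z}_8$, the aggregate $\sum_{i\in S}\theta_i$ is then uniform over $\Phi$ and independent of Bob's view, so the real correction $\bar\theta=(-1)^{m_x}\theta-\sum_{i\in S}\theta_i$ is uniform over $\Phi$ and independent of $\theta$, of $m_x$ and of Bob's registers — precisely the distribution $\sigma_B$ produces. The two executions coincide up to and including the round in which $S$ is sent (identical pulse distribution and identical leak pattern), so $\mathcal{E}$ has the same probability in both, and conditioned on $\neg\mathcal{E}$ the two full transcripts are identically distributed; hence the distinguishing advantage is at most $\Pr[\mathcal{E}]$.

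It remains to bound $\Pr[\mathcal{E}]$. Writing $L=\abs{\{i : k_i\ge2\}}$ for the number of leaked pulses, independence of the pulses gives $L\sim\operatorname{Binomial}(n,p_{\alpha,2})$ with $p_{\alpha,2}=1-e^{-\abs{\alpha}^2}-\abs{\alpha}^2 e^{-\abs{\alpha}^2}$. The event $\mathcal{E}$ forces $L\ge\abs{S}>t$, hence $\Pr[\mathcal{E}]\le\Pr[L> t]\le\Pr[L\ge t]$; and since $t\ge n p_{\alpha,2}$ by hypothesis, the upper-tail form of Hoeffding's inequality (Lemma~\ref{lem:hoeffding_binomial}) yields $\Pr[L\ge t]\le\exp\!\left(-2(p_{\alpha,2}-t/n)^2 n\right)=\exp\!\left(-2(t/n-p_{\alpha,2})^2 n\right)=\epsilon_{sec}$, which closes the argument.

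The Chernoff estimate is routine; the real work is making the reduction watertight against an arbitrary quantum Bob. One must check that the pessimistic resource (revealing $\theta_i$ whenever $k_i\ge2$) genuinely upper-bounds the leakage of the real optical channel, that a dishonest choice of $S$ cannot help Bob (adding indices only inserts summands $\theta_i$ he does not know, which favours $\sigma_B$), that the abort branch is simulated consistently within the Abstract Cryptography framework, and — the delicate point — that the ``one copy is maximally mixed'' step survives when Bob refuses to measure the surviving pulse and instead keeps it coherently entangled with all his other registers. I expect this last point to be the main obstacle: it forces the argument to be run at the level of the density operator of Bob's complete view, using that $\tfrac1{\abs{\Phi}}\sum_{\theta\in\Phi}\RZ(\theta)(\cdot)\RZ(\theta)^\dagger$ is the completely dephasing channel in the computational basis, so that averaging over $\theta_{i^\star}$ really does erase it from Bob's state. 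With that in hand, the simulator above together with the Hoeffding bound gives the stated $\epsilon_{sec}$.
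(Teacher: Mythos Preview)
Your Hoeffding bound on $\Pr[\mathcal{E}]$ is fine and matches the paper, but the simulator you propose is broken, and the gap is exactly the step you flag as ``delicate'' --- though not for the reason you give.

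You let $\sigma_B$ send the pulses with fresh uniform angles $\theta_i$ and then output a \emph{uniformly random} $\bar\theta$, arguing that on $\neg\mathcal{E}$ the real $\bar\theta=(-1)^{m_x}\theta-\sum_{i\in S}\theta_i$ is also uniform and independent of Bob's registers. It is not. When $k_{i^\star}=1$, Bob holds the photon $\ket{+_{\theta_{i^\star}}}$; this quantum register is perfectly correlated with the classical value $\theta_{i^\star}$, and hence with $\bar\theta$. Averaging over $\theta_{i^\star}$ makes the \emph{marginal} of that photon maximally mixed, but the \emph{joint} state of $(\text{photon},\bar\theta)$ is far from product. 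Concretely, the distinguisher knows $\theta$ (it supplies Alice's input), knows $m_x$, knows every leaked $\theta_i$, and receives $\bar\theta$; from these it computes the unique value $\theta_{i^\star}=(-1)^{m_x}\theta-\bar\theta-\sum_{i\in S\setminus\{i^\star\}}\theta_i$ and measures the single photon in the $\{\ket{+_{\theta_{i^\star}}},\ket{-_{\theta_{i^\star}}}\}$ basis. In the real world it obtains outcome $+$ with probability $1$; under your simulator the computed angle is independent of the actual photon angle and the outcome is $+$ with probability $1/2$. So your simulation is $1/2$-distinguishable, not $\epsilon_{sec}$-distinguishable. Your dephasing-channel remark at the end cannot rescue this: it again only addresses the marginal, not the joint cq-state.

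The paper's simulator avoids this by \emph{not} committing to the angle of single-photon pulses up front. It sends half of an EPR pair for every pulse with $k_i=1$, waits for $S$, and only then teleports the state $\ket{+_\theta}$ obtained from the ideal resource into one chosen slot $s\in S$ with $k_s\le 1$; the teleportation byproducts $(m_{s,X},m_{s,Z})$ are absorbed into the announced $\bar\theta$ and $m_x$. This is precisely what makes the joint state of $(\text{photon at }s,\ \bar\theta,\ m_x,\ \theta)$ match the real world on $\neg\mathcal{E}$. In short, the $\theta$-dependent output of the resource must enter the simulated transcript; your simulator discards it, and that is the missing idea.
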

\begin{proof}

To prove the security against malicious Bob, we construct a simulator which has a single oracle access to the Blind Graph Extender resource and must emulate the communication of Alice without knowing Alice's input angle. We then show that this exchange is indistinguishable to any unbounded distinguisher from a communication with an honest Alice, up to a known failure probability.

The simulator is described in Simulator~\ref{sim:mal-bob} below.

\begin{simulator}[ht]
\caption{Malicious Bob}
\label{sim:mal-bob}

\begin{enumerate}
\item The simulator samples $n$ values $\{k_i\}_{1 \leq i \leq n}$ from the Poisson distribution with parameter $\abs{\alpha}^2$.
\item The simulator performs the call to the RSP Resource~\ref{res:rsp} and receives a state $\ket{+_\theta}$ for an unknown value of $\theta$.
\item It samples values $(\theta_1, \ldots, \theta_n)$ uniformly at random from angle set $\Phi$.
\item For each pulse $i$ which Alice sends to Bob in the real protocol:
\begin{itemize}
\item If $k_i \notin \bin$, the simulator sends $\ket{k_i}_{\theta_i}$;
\item If $k_i = 0$, it does not send anything;
\item If $k_i = 1$, it prepares an EPR-pair $\ket{\Psi} = \sfrac{1}{\sqrt{2}}(\ket{00}+\ket{11})$ and sends half to Bob.
\end{itemize}
\item The simulator then receives the set $S \subseteq \{1,\ldots,n-1\}$ of correctly generated photons. If $\abs{S} \leq t$ it sends $\Ab$ to Bob.
\item If there is no index $i \in S$ such that $k_i \in \bin$, the simulator outputs the error message $\err$ to the distinguisher and stops. Otherwise it continues. Let $s$ be an index such that $k_s = 1$, chosen by the simulator uniformly at random from suitable values. If no such index exists, it chooses $s$ such that $k_s = 0$.\footnotemark
\item For each index $i \in S \setminus \{s\}$ such that $k_i = 1$, the simulator measures its half of the associated EPR-pair in the basis $\{\dyad{+_{-\theta_i}}, \dyad{-_{-\theta_i}}\}$ and records the outcome $m_i$.
\item If $k_s = 1$, it applies $\RZ(\theta_s)$ to the state received from the RSP resource. It then teleports this state $\ket{+_{\theta + \theta_s}}$ using the EPR-pair associated to index $s$, by applying a $\CNOT$ gate between this qubit and the half EPR-pair and measuring the qubits in respectively the $\X$ and $\Z$ bases, with $m_{s, Z}, m_{s, X}$ the associated results. If $k_s = 0$, it does not do anything.
\item The simulator sends the correction angle $\bar{\theta} = -(-1)^{m_{s, X}}\theta_s - m_{s, Z}\pi - \sum_{i \in S \setminus \{s\}} \theta_i + m_i\pi$ and output bit $m_{s, X}$ to Bob.
\item The simulator then stops.
\end{enumerate}

\end{simulator}

\footnotetext{In the honest case, Bob never reports a value $i$ where $k_i = 0$ as being inside the set $S$ since these laser pulses do not excite the quantum emitter and yield no photons. However, malicious Bob may choose $S$ arbitrarily.}

It is clear that if the simulator sends the error message $\err$, the states are perfectly distinguishable since an honest Alice never sends this message in a real execution of the protocol.
Since the simulator has not aborted earlier, this means that the set $S$ is of size at least $t$. The simulator then sends the error message if all the pulses in set $S$ have more than two photons in them.
Let $E$ be the random variable counting the number of indices $i \in \{1,\ldots,n\}$ such that $k_i \geq 2$. We count all values since the distinguisher can freely choose the set $S$. Therefore we can define the security error as:
\begin{equation}
\epsilon_{sec} = \Pr[\err] = \Pr[E > t]
\end{equation}

The probability that there are at least two photons in a given laser pulse is:
\begin{equation}
\Pr[k \geq 2] = 1 - e^{-\abs{\alpha}^2} - \abs{\alpha}^2 e^{-\abs{\alpha}^2} = p_{\alpha, 2}.
\end{equation}
The random variable $E$ follows a binomial distribution of parameters $(p_{\alpha, 2}, n)$. This time for $t \geq n p_{\alpha, 2}$, we again use Hoeffding's inequality to bound the error probability:
\begin{equation}
\epsilon_{sec} = \Pr[E > t] \leq \Pr[E \geq t] \leq \exp \left( -2 \left(t/n - p_{\alpha, 2} \right)^2 n \right).
\end{equation}

We now show that if this error message is not sent, then the real and ideal states follow exactly the same distribution, meaning that no distinguisher can tell apart the simulation and the real execution. This must be done for each step of the protocol/simulation.

\paragraph{Before receiving the set $S$.}
In the protocol, Alice sends the states:
\begin{equation}
\rho_{\alpha,\theta_i} = e^{- \abs{\alpha}^2} \sum_{k=0}^\infty \frac{\abs{\alpha}^{2k}}{k!} \op{k}_{\theta_i}.
\end{equation}
 
The distinguisher can perform a non-destructive photon counting measurement to know how many states it has received in each pulse. The distinguisher then has in its possession the following states, for known values of $k_i$ and random $\theta_i$:
\begin{equation}
\bigotimes_{i = 1}^n \ket{k_i}_{\theta_i}.
\end{equation}

Let $I$ be the number of indices $i$ such that $k_i = 1$ in a given execution of the simulator. In the simulated case, the distinguisher receives, for random $\theta_i$ and with $\rho_{EPR} = \Tr_S(\ket{\Psi})$ being the half EPR-pair sent by the simulator:
\begin{equation}
\rho_{EPR}^{\otimes I}\bigotimes_{k_i \neq 1} \ket{k_i}_{\theta_i}.
\end{equation}

The values $k_i$ follows the same distribution in the real and simulated cases. These two sets of states are perfectly indistinguishable at this stage since for each $k_i = 1$, a single qubit $\ket{+_{\theta_i}} = \ket{1}_{\theta_i}$ and a half-EPR pair are both equal to the perfectly mixed state from the distinguisher's point of view.

\paragraph{Before sending the corrections.}
We now consider the state after the distinguisher sends the set $S$. In both cases, the states corresponding to indices not in $S$ can be discarded since they are not used in the rest of the protocol/simulation and yield therefore no additional information to the distinguisher.

If there is an index $i \in S$ such that $k_i = 0$, both the simulation and the real protocol produce the same output: the correctness of the protocol implies that the angle obtained by Alice and the state received by the distinguisher are both random and decorrelated (since all qubits from the set $S$ are required to reconstruct the correct state in both cases). From now on we consider that no such value appears in $S$.

If there are values $i \in S \setminus \{s\}$ such that $k_i = 1$, after the simulator's measurement of its half of the EPR-pair in the $\ket{\pm_{\theta_i}}$ basis, the states corresponding to these indices are indistinguishable from honestly generated $\ket{+_{\theta_i}} = \ket{1}_{\theta_i}$ states, up to an update of angle $\theta_i$ to account for the measurement result. We can then consider the worst case in which there is a single value $s \in S$ such that $k_s = 1$, meaning that for all $i \in S \setminus \{s\}$ we have $k_i \geq 2$. We can also assume that for all these other indices, the entire value of $\theta_i$ is leaked to the distinguisher. We can then replace the states $\bigotimes_{i \in S \setminus \{s\}} \ket{k_i}_{\theta_i}$ with $\bigotimes_{i \in S \setminus \{s\}} \ket{\theta_i}$, where the $\ket{\theta_i}$ are classical states. These values are identically distributed in both setting, being chosen uniformly at random from the set of angles $\Phi$. On top of that, right before Alice or the simulator send the corrections, the distinguisher has in its possession either $\ket{+_{\theta_s}}$ in the real case, or $\ket{+_{(-1)^{m_{s, X}}(\theta + \theta_s) + m_{s, Z}\pi}}$ in the ideal case. These are again perfectly indistinguishable due to the uniformly random $\theta_s$.

\paragraph{Final step.}
Finally, the distinguisher receives the corrections either from Alice or the simulator. In the real protocol we have $\bar{\theta} = (-1)^{m_x}\theta - \sum_{i \in S}\theta_i$, while in the simulation we have $\bar{\theta} = -(-1)^{m_{s, X}}\theta_s - m_{s, Z}\pi - \sum_{i \in S \setminus \{s\}} \theta_i$ (there are no values $m_i$ in this worst case setting). In the end, including the distinguisher's knowledge about Alice input angle $\theta$, we get:
\begin{align}
\mathit{Real:} &\ket{+_{\theta_s}}\ket{\theta}\ket{m_x}\ket{(-1)^{m_x}\theta - \theta_s - \sum_{i \in S \setminus \{s\}} \theta_i}\bigotimes_{i \in S \setminus \{s\}} \ket{\theta_i}, \\
\mathit{Sim:} &\ket{+_{(-1)^{m_{s, x}}(\theta+\theta_s) + m_{s, Z}\pi}}\ket{\theta}\ket{m_{s, x}}\ket{-(-1)^{m_{s,x}}\theta_s - m_{s,z}\pi - \sum_{i \in S \setminus \{s\}} \theta_i}\bigotimes_{i \in S \setminus \{s\}} \ket{\theta_i}.
\end{align}
In both cases, the distinguisher can use the values for $\theta_i$ (last values) it knows to remove them from the correction (next to last values). These values of $\theta_i$ then appear nowhere else in the states and are sampled from the same distribution. They do not contribute to any distinguishing advantage. Removing them we get:
\begin{align}
\mathit{Real:} &\ket{+_{\theta_s}}\ket{\theta}\ket{m_x}\ket{(-1)^{m_x}\theta - \theta_s}, \\
\mathit{Sim:} &\ket{+_{(-1)^{m_{s, x}}(\theta+\theta_s) + m_{s, Z}\pi}}\ket{\theta}\ket{m_{s, x}}\ket{-(-1)^{m_{s,x}}\theta_s - m_{s,z}\pi}.
\end{align}
The distinguisher then applies to the qubit state a $\RZ$ rotation whose angle is given by the last value in the equations above:
\begin{align}
\mathit{Real:} &\ket{+_{(-1)^{m_x}\theta}}\ket{\theta}\ket{m_x}\ket{(-1)^{m_x}\theta - \theta_s}, \\
\mathit{Sim:} &\ket{+_{(-1)^{m_{s, x}}\theta}}\ket{\theta}\ket{m_{s, x}}\ket{-(-1)^{m_{s,x}}\theta_s - m_{s,z}\pi}.
\end{align}
We note that $\theta_s$ only appears in the last values now and these values in both cases are sampled from the same distribution since $\theta_s$ is perfectly random. The correction bits $m_x$ and $m_{s, x}$ are both sampled uniformly at random (the first directly by Alice and the other via the teleportation procedure). These distributions are the same and therefore perfectly indistinguishable, which concludes the proof.

\end{proof}

\begin{proof}[Combining security and correctness to prove Theorem~\ref{thm:sec-gadget}]
For $t \leq n \eta_1$, the protocol is $\epsilon_{cor}$-correct, while for $t \geq n p_{\alpha, 2}$, the protocol is $\epsilon_{sec}$-secure. For $p_{\alpha, 2} \leq \eta_1$ we can combine these two conditions by choosing $p_{\alpha, 2} \leq t/n \leq \eta_1$. 
The overall optimal AC security bound of the protocol is given by:
\begin{align}
\epsilon &= \min_{p_{\alpha, 2} \leq t/n \leq \eta_1} (\max(\epsilon_{cor}, \epsilon_{sec})) \\
&= \min_{p_{\alpha, 2} \leq t/n \leq \eta_1} \left(\max\left(\exp \left( -2 \left(\eta_1 - t/n \right)^2 n \right), \exp \left( -2 \left(t/n - p_{\alpha, 2} \right)^2 n \right)\right)\right).
\end{align}
At equilibrium both values are equal, which implies:
\begin{equation}
t/n = \frac{\eta_1 + p_{\alpha, 2}}{2}.
\end{equation}
This yields the desired result when injected in the expression for $\epsilon$.

\end{proof}

\section{Post-selected protocol for privacy amplification}
\label{app:post-select-prot}

In this section we present an alternative protocol in which the abort is triggered as soon as a single photon is missing during detection. This increases the security to the detriment of the success probability.

\begin{protocol}[ht]
\caption{GHZ Privacy Amplification for Rotated States from Weak Coherent Pulses with Post-Selection}
\label{prot:post-select-ghz-gadget}
\begin{algorithmic} [0]

\STATE \textbf{Public Information:} Laser pulse intensity $\abs{\alpha}^2$, number $n$ of qubits in the GHZ gadget.

\STATE \textbf{Inputs:} Alice inputs an angle $\theta \in \Phi$. Bob inputs a single-qubit spin state $\rho_{qe}$.

\STATE \textbf{The Protocol:}

\begin{enumerate}
\item Alice samples values $(\theta_1, \ldots, \theta_{n-1})$ uniformly at random from $\Phi$, together with a uniformly random bit $m_x$, and sets $\theta_n = (-1)^{m_x}\theta - \sum_{i=1}^{n-1}\theta_i$.
\item For $1 \leq i \leq n$:
\begin{enumerate}
\item Alice sends the phase-randomised rotated weak coherent pulse $\rho_{\alpha, \theta_i}$ to Bob.
\item Bob uses this pulse to apply the rotated emission operator $E_{\rm qe}(\theta_i)$ to his spin qubit.
\end{enumerate}
\item Finalisation:
\begin{enumerate}
\item Alice sends the bit $m_x$ to Bob.
\item Bob emits another photon using the emission operator $E_{\rm qe}$. This qubit is indexed $0$.
\item Bob attempts to measures all photonic qubits $i \neq 0$ in the GHZ state in the $\ket{\pm}$ basis. If there is a photon loss, signalled by a measurement not returning a result, Bob sends $\Ab$ to Alice.
\item Otherwise let $b = \bigoplus_{i = 1}^{n} b_i$ be the parity of the measurement outcomes. Bob applies $\Z^b$ to the remaining photonic qubit $0$ and sets it as his output.
\end{enumerate}
\end{enumerate}

\end{algorithmic}
\end{protocol}

The security guarantees of Protocol~\ref{prot:post-select-ghz-gadget} are given in Theorem~\ref{thm:post-select-sec-gadget} below.

\begin{theorem}[AC Security of Protocol~\ref{prot:post-select-ghz-gadget}]
\label{thm:post-select-sec-gadget}

Let $\eta_1$ be the probability of the the quantum emitter generates a photon after receiving a laser pulse of intensity $\abs{\alpha}^2$ and let $p_{\alpha, 2} = 1 - e^{-\abs{\alpha}^2} - \abs{\alpha}^2 e^{-\abs{\alpha}^2}$ be the probability that the laser pulse contains two or more photons. Let $\alpha, n$ be the parameters used in Protocol~\ref{prot:post-select-ghz-gadget}. Then Protocol~\ref{prot:post-select-ghz-gadget} $\epsilon'$-constructs in the Abstract Cryptography framework the Blind Graph State Extender Resource~\ref{res:graph-extend}, for $\epsilon' = \max(1-\eta_1^n, p_{\alpha, 2}^n)$.

\end{theorem}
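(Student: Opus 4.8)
The plan is to follow the same two-step strategy as the proof of Theorem~\ref{thm:sec-gadget} in Appendix~\ref{app:sec-proof}: prove correctness against an honest Bob, prove security against a malicious Bob by exhibiting a simulator, and then combine the two error terms. The one structural simplification is that Protocol~\ref{prot:post-select-ghz-gadget} has no tunable acceptance threshold, so there is no minimisation over a parameter $t$ and the two errors simply combine as $\max(1-\eta_1^n,\,p_{\alpha,2}^n)$.

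\textbf{Correctness.} With honest Bob, the run aborts if and only if at least one of the $n$ photons emitted from Alice's pulses fails to be detected during the $\ket{\pm}$-basis measurements of step~3c. Modelling each such photon as independently present with probability $\eta_1$ (Bob can always emit photon $0$ by choosing a suitable excitation scheme, exactly as argued in the proof of Lemma~\ref{lem:cor}), the abort probability is exactly $\epsilon_{cor}=1-\eta_1^n$. Conditioned on not aborting, a cascade argument identical to that of Lemma~\ref{lem:cor} shows the output is exactly correct: after step~2 the spin--photon register carries $\RZ\bigl(\sum_{i=1}^n\theta_i\bigr)\bigl(\alpha\ket{0}^{\otimes n+2}+\beta\ket{1}^{\otimes n+2}\bigr)$, and since the protocol fixes $\theta_n$ so that $\sum_{i=1}^n\theta_i=(-1)^{m_x}\theta$, measuring the $n$ photons in $\ket{\pm}$ and applying $\Z^b$ leaves precisely $\RZ((-1)^{m_x}\theta)\,E_{\rm qe}\ket{\psi}_{qe}$, i.e.\ the output of the Blind Graph State Extender Resource~\ref{res:graph-extend}. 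Hence the honest protocol outputs an $\epsilon_{cor}$-mixture of the ideal state and $\Ab$.

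\textbf{Security.} The simulator mirrors Simulator~\ref{sim:mal-bob}: it samples $n$ i.i.d.\ values $k_i$ from the Poisson distribution of mean $\abs{\alpha}^2$, queries the ideal resource for the blind single-qubit state, and itself picks $m_x$ uniformly at random (to be sent at step~3a). For each pulse it sends $\ket{k_i}_{\theta_i}$ for a fresh random $\theta_i$ if $k_i\geq 2$, nothing if $k_i=0$, and one half of a fresh EPR pair if $k_i=1$. If \emph{every} $k_i\geq 2$ it outputs $\err$; since an honest Alice never produces $\err$ and this event has probability exactly $\prod_i\Pr[k_i\geq 2]=p_{\alpha,2}^n$, it contributes $\epsilon_{sec}=p_{\alpha,2}^n$. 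Otherwise some index has $k_i\leq 1$: if there is an index $s$ with $k_s=1$, the simulator teleports a pre-rotated copy of the resource state into the anchor EPR half so that photon $s$ carries the phase $(-1)^{m_x}\theta-\sum_{i\neq s}\theta_i$, routing the teleportation Pauli byproducts and the outcomes of the non-anchor $k_i=1$ measurements (handled as in Simulator~\ref{sim:mal-bob}) through Bob's $\ket{\pm}$ outcomes and his $\Z^b$ correction; if all small indices have $k_i=0$, the angle visible to Bob is a fixed function of the leaked $\theta_i$ while Alice's $\theta$ stays perfectly hidden, so the simulator discards the resource output and hands Bob a matching $\theta$-independent rotated state, which is exactly what the real protocol produces in that case. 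One then verifies, step by step as in Lemma~\ref{lem:sec} (before finalisation, at the transmission of $m_x$, and at Bob's output), that the real and ideal views are identically distributed, using that a half-EPR pair is indistinguishable from $\ket{+_{\theta_i}}$ with an unknown uniform $\theta_i$ and that the absent angles hide $\theta$.

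\textbf{Combining and the main obstacle.} Both probability bounds are exact products, so no Hoeffding estimate or threshold optimisation is needed, and since the correctness error event (abort on a lost photon) and the simulator failure event ($\err$) are incomparable, the construction error is $\epsilon'=\max(\epsilon_{cor},\epsilon_{sec})=\max(1-\eta_1^n,\,p_{\alpha,2}^n)$. I expect the delicate step to be the simulator bookkeeping in the security lemma: because Protocol~\ref{prot:post-select-ghz-gadget} commits $m_x$ \emph{before} finalisation and, unlike Protocol~\ref{prot:ghz-gadget}, offers no post-hoc correction angle to soak up measurement-dependent $\pm\pi$ shifts, every teleportation byproduct and non-anchor measurement outcome must be absorbed into Bob's own $\ket{\pm}$ measurements and the single $\Z^b$ correction, and one must check that each pattern of the $k_i$'s (all $\geq 2$; some $=1$; some $=0$ but none $=1$) produces a distribution that exactly coincides with the honest run. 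This case analysis, rather than any inequality, is where the care is required.
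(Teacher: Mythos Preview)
Your correctness argument and the final combination step match the paper. The gap is in the security simulator.

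You propose to recycle the EPR-pair/teleportation machinery of Simulator~\ref{sim:mal-bob}, and you flag as the ``main obstacle'' that the teleportation byproducts (the $m_{s,Z}$ bit and the non-anchor outcomes $m_i$) must be absorbed somewhere now that there is no post-hoc correction angle $\bar\theta$. Your proposed fix --- routing them ``through Bob's $\ket{\pm}$ outcomes and his $\Z^b$ correction'' --- does not work: the security clause is against a \emph{malicious} Bob, who is under no obligation to measure anything or apply $\Z^b$. The simulator's only task is to reproduce the joint distribution of $(\text{pulses},\,m_x,\,\theta)$ handed to the distinguisher; it has no access to, and no influence over, whatever Bob subsequently does. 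With your simulator as described (an independently uniform $m_x$, then a teleportation with its own random $m_{s,X},m_{s,Z}$), the phase carried by pulse $s$ and the constraint $\sum_i\theta_i=(-1)^{m_x}\theta$ are decorrelated by those byproducts, and the distinguisher can detect this.

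The paper avoids the whole issue by observing that Protocol~\ref{prot:post-select-ghz-gadget}, unlike Protocol~\ref{prot:ghz-gadget}, has \emph{no} adaptive message from Bob before Alice's last transmission: Alice sends all $n$ pulses and then $m_x$, and only afterwards receives (at most) an $\Ab$. Hence the simulator can commit everything up front and needs no EPR pairs at all. Concretely (Simulator~\ref{sim:post-select-mal-bob}): pick $m_{s,x}$ uniformly, apply $\X^{m_{s,x}}$ \emph{directly} to the resource qubit $\ket{+_\theta}$ to get $\ket{+_{(-1)^{m_{s,x}}\theta}}$, rotate by $\theta_s$ (with the $\theta_i$ chosen by the simulator so that $\sum_i\theta_i=0$), and send this as pulse $s$; send $\ket{k_i}_{\theta_i}$ at every other position. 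No teleportation, hence no byproducts, and indistinguishability is the one-line check that the phase at position $s$ equals $(-1)^{m_x}\theta-\sum_{i\neq s}\theta_i$ in both worlds. Your ``delicate step'' disappears once you drop the unnecessary EPR layer.
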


\begin{proof}[Proof of correctness]

In the case where Bob is honest and does not abort, the correctness analysis is the same as in the correctness proof of Protocol~\ref{prot:ghz-gadget}. For a pure input state $\ket{\psi}_{qe} = \alpha\ket{0} + \beta\ket{1}$, we get after the second step:
\begin{equation}
\RZ\left(\sum_{i = 1}^n \theta_i\right)\left(\alpha\ket{0}^{n+2} + \beta\ket{1}^{n+2}\right) = \RZ((-1)^{m_x}\theta)\left(\alpha\ket{0}^{n+2} + \beta\ket{1}^{n+2}\right),
\end{equation}
which becomes $\alpha\ket{0}_{qe}\ket{0}_{ph} + (-1)^b e^{i(-1)^{m_x}\theta}\beta\ket{1}_{qe}\ket{1}_{ph}$ after Bob's measurements in the $\ket{\pm}$ basis. After the correction the state is exactly correct for the associated bit $m_x$.

We now analyse the abort probability. The protocol aborts if and only if at least one photon is not emitted. 
The random variable counting the number of emitted photons follows a binomial distribution of parameter $(\eta_1, n)$. Therefore, the abort probability and correctness error are:
\begin{equation}
\epsilon_{cor} = \Pr[\Ab] = 1 - \eta_1^n.
\end{equation}

\end{proof}

\begin{proof}[Proof of security]

To prove the security against malicious Bob, we must construct a simulator which has a single oracle access to the RSP resource and must emulate the communication of Alice without knowing which state Alice has chosen to send. We will then show that this exchange is indistinguishable to any unbounded distinguisher from a communication with an honest Alice, up to a known failure probability.

The simulator is described in Simulator~\ref{sim:mal-bob} below.

\begin{simulator}[ht]
\caption{Malicious Bob}
\label{sim:post-select-mal-bob}

\begin{enumerate}
\item The simulator samples $n$ values $\{k_i\}_{1 \leq i \leq n}$ from the Poisson distribution with parameter $\abs{\alpha}^2$.
\item If there is no index $i$ such that $k_i \in \bin$, the simulator outputs $\err$ to the distinguisher and stops. Otherwise it continues.
\item Let $s$ be one index with $k_s = 1$, chosen uniformly at random from suitable values. If there is no value of this sort, it chooses similarly $s$ such that $k_s = 0$.
\item The simulator performs the call to the RSP Resource~\ref{res:rsp} and receives a state $\ket{+_\theta}$ for an unknown value of $\theta$. It samples a uniformly random bit $m_{s, x}$ and applies $\X^{m_{s, x}}$ to this qubit.
\item It samples values $(\theta_1, \ldots, \theta_{n-1})$ uniformly at random from $\Phi$ and sets $\theta_n = - \sum_{i=1}^{n-1}\theta_i$.
\item For each pulse $i$ which Alice sends to Bob in the real protocol:
\begin{itemize}
\item If $i \neq s$, the simulator sends $\ket{k_i}_{\theta_i}$;
\item If $i = s$ and $k_s = 1$, it applies $\RZ(\theta_s)$ to the state received from the RSP resource and sends the resulting state. If $k_s = 0$, it discards this state and sends nothing.
\end{itemize}
\item The simulator then sends $m_{s, x}$ and stops.
\end{enumerate}

\end{simulator}

It is clear that in the case where the simulator aborts by sending $\err$, the states are perfectly distinguishable since an honest Alice never sends this message in a real execution of the protocol. This is the case if all pulses contain at least two photons. Recall that $p_{\alpha, 2} = 1 - (1 + \abs{\alpha}^2) e^{-\abs{\alpha}^2}$ is the probability that a given pulse contains two or more photon. The random variable counting the number of pulses with more than one photon follows a binomial distribution of parameters $(p_{\alpha, 2}, n)$. The simulator sends the message $\err$ with probability $p_{\alpha, 2}^n$. We now show that if there is no abort, then the real and ideal states follow exactly the same distribution, meaning that no distinguisher can tell apart the simulation and the real execution.

In the protocol, Alice sends the states:
\begin{equation}
    \rho_{\alpha,\theta_i} = e^{- \abs{\alpha}^2} \sum_{k=0}^\infty \frac{\abs{\alpha}^{2k}}{k!} \op{k}_{\theta_i}.
\end{equation}
The distinguisher can perform a non-destructive photon counting measurement to know how many photons it has received in each pulse. In the real case, the distinguisher then has in its possession the following states, for known values of $k_i$:
\begin{equation}
\ket{k_s}_{\theta_s}\bigotimes_{\substack{i = 1 \\ i \neq s}}^n \ket{k_i}_{\theta_i}.
\end{equation}

In the simulated case, the distinguisher receives:
\begin{equation}
\ket{k_s}_{(-1)^{m_{s, x}}\theta + \theta_s}\bigotimes_{\substack{i = 1 \\ i \neq s}}^n \ket{k_i}_{\theta_i}.
\end{equation}

We need to show that these states are indistinguishable, even after receiving $m_x$ or $m_{s, x}$ from Alice or the simulator respectively. The values $k_i$ follows the same distribution in the real and simulated cases. In the worst case there is a single value $k_s$ that satisfies the condition for continuing the simulation, meaning that for all $i \neq s$ we have $k_i \geq 2$. We can also assume that for all these other indices, the entire value of $\theta_i$ is leaked to the distinguisher. We can then replace the states $\bigotimes_{\substack{i = 1 \\ i \neq s}}^n\ket{k_i}_{\theta_i}$ with $\bigotimes_{\substack{i = 1 \\ i \neq s}}^n\ket{\theta_i}$, where the $\ket{\theta_i}$ are classical states.

In the real protocol we have $\theta_s = (-1)^{m_x}\theta - \sum_{\substack{i = 1 \\ i \neq s}}^n \theta_i$, while in the simulation we have $\theta_s = - \sum_{\substack{i = 1 \\ i \neq s}}^n \theta_i$, with the other values being sampled uniformly at random. Replacing these values above we get:
\begin{align}
\mathit{Real:} &\ket{k_s}_{(-1)^{m_x}\theta - \sum_{\substack{i = 1 \\ i \neq s}}^n \theta_i}\bigotimes_{\substack{i = 1 \\ i \neq s}}^n \ket{\theta_i}, \\
\mathit{Sim:} &\ket{k_s}_{(-1)^{m_{s, x}}\theta - \sum_{\substack{i = 1 \\ i \neq s}}^n \theta_i}\bigotimes_{\substack{i = 1 \\ i \neq s}}^n \ket{\theta_i},
\end{align}
for $k_s \in \bin$. These distributions are the same and therefore perfectly indistinguishable, which concludes the proof.

\end{proof}

\end{document}